\documentclass[twoside]{IEEEtran}

\usepackage{amsmath,amsfonts,amssymb,amscd,mathrsfs,eucal,amsfonts,color}
\usepackage{graphicx}

\newtheorem{theorem}{Theorem}[section]
\newtheorem{corollary}[theorem]{Corollary}
\newtheorem{lemma}[theorem]{Lemma}
\newtheorem{proposition}[theorem]{Proposition}
\newtheorem{definition}[theorem]{Definition}
\newtheorem{remark}[theorem]{Remark}

\def\r{\rho}
\def\d{\delta}
\def\e{\epsilon}
\def\g{\gamma}
\def\G{\Gamma}
\def\a{\alpha}

\title{Vanishingly Sparse Matrices and Expander Graphs, \\With Application to Compressed Sensing}

\author{Bubacarr ~Bah, and ~Jared ~Tanner
\thanks{Laboratory for Information and Inference Systems (LIONS), \'Ecole Polytechnique F\'ed\'erale de Lausanne (EPFL), Lausanne, Switzerland ({\tt bubacarr.bah@epfl.ch}).}
\thanks{Mathematics Institute and Exeter College, University of Oxford, Oxford, UK ({\tt tanner@maths.ox.ac.uk}). J.T. acknowledges support from the Leverhulme Trust.}
\thanks{Copyright (c) 2012 IEEE}}


\begin{document}
\maketitle
\IEEEpeerreviewmaketitle

\begin{abstract}
	We revisit the probabilistic construction of sparse random matrices where each column has a fixed number of nonzeros whose row indices are drawn uniformly at random with replacement. These matrices have a one-to-one correspondence with the adjacency matrices of fixed left degree expander graphs.  We present formulae for the expected cardinality of the {\em set of neighbors} for these graphs, and present tail bounds on the probability that this cardinality will be less than the expected value. Deducible from these bounds are similar bounds for the {\em expansion} of the graph which is of interest in many applications. These bounds are derived through a more detailed analysis of collisions in unions of sets. Key to this analysis is a novel {\em dyadic splitting} technique. The analysis led to the derivation of better order constants that allow for quantitative theorems on existence of lossless expander graphs and hence the sparse random matrices we consider and also quantitative compressed sensing sampling theorems when using sparse non mean-zero measurement matrices.
\end{abstract}

\begin{IEEEkeywords}
Algorithms, compressed sensing, signal processing, sparse matrices, expander graphs.
\end{IEEEkeywords}

\section{Introduction}\label{sec:intro}
Sparse matrices are particularly useful in applied and computational mathematics because of their low storage complexity and fast implementation as compared to dense matrices, see  \cite{horn1990matrix,demmel1993numerical,trefethen1997numerical}. Of late, significant progress has been made to incorporate sparse matrices in compressed sensing, with \cite{xu2007further,berinde2008combining,berinde2008sparse,jafarpour2009efficient} giving both theoretical performance guarantees and also exhibiting numerical results that shows sparse matrices coming from expander graphs can be as good sensing matrices as their dense counterparts. In fact, Blanchard and Tanner \cite{blanchard2012gpu}  recently demonstrated in a GPU implementation how well these type of matrices do compared to dense Gaussian and Discrete Cosine Transform matrices even with very small fixed number of nonzeros per column (as considered here).

In this manuscript we consider random sparse matrices that are adjacency matrices of lossless expander graphs. Expander graphs are highly connected graphs with very sparse adjacency matrices, a precise definition of a lossless expander graph is given in Definition \ref{def:expander} and their illustration in Fig. \ref{fig:expander}.
\begin{definition}
\label{def:expander}
$G=\left(U,V,E\right)$ is a lossless $(k,d,\epsilon)$-expander if it is a bipartite graph with $|U| = N$ left vertices, $|V| = n$ right vertices and has a regular left degree $d$, such that any $X \subset U$ with $|X| \leq k$ has $|\Gamma(X)| = \left(1 - \epsilon \right) d |X|$ neighbors.
\footnote{{\em Lossless expanders} with parameters $d,k,n,N$ are equivalent to {\em lossless conductors} with parameters that are base 2 logarithms of the parameters of {\em lossless expanders} see \cite{hoory2006expander,berinde2008combining} and the references therein.}
\end{definition}
\begin{remark}
 \begin{enumerate}
  \item The graphs are  {\em lossless} because $\e\ll1$;
  \item They are called {\em unbalanced expanders} when $n\ll N$;
  \item The {\em expansion} of a lossless $(k,d,\e)$-expander graph is $\left(1 - \epsilon \right) d$.
 \end{enumerate}
\end{remark}

\begin{figure}[h]
\centering
\includegraphics[width=0.5\textwidth]{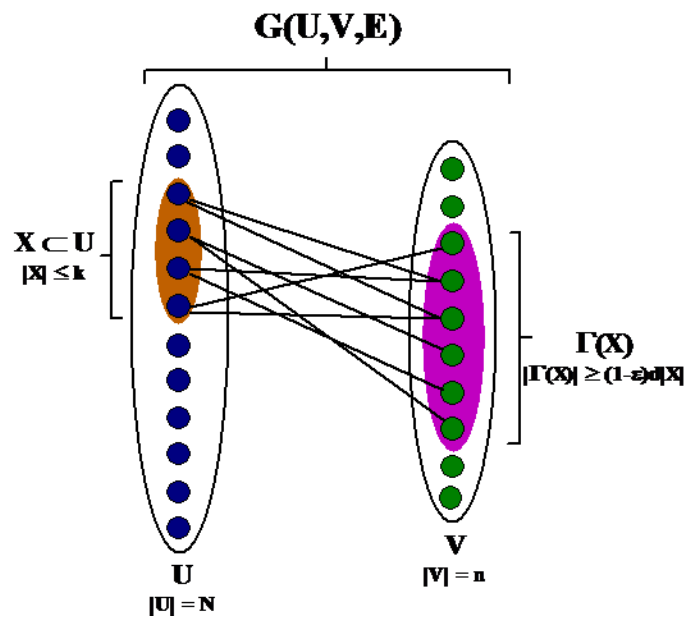}
\caption{Illustration of a lossless $(k,d,\e)$-expander graphs with $k=4$ and $d=2$.}
\label{fig:expander}
\end{figure}

Such graphs have been well studied in theoretical computer science and pure mathematics and have many applications including: Distributed Routing in Networks, Linear Time Decodable Error-Correcting Codes, Bitprobe Complexity of Storing Subsets, Fault-tolerance and a Distributed Storage Method, and Hard Tautologies in Proof Complexity, see \cite{capalbo2002randomness} or \cite{hoory2006expander} for a more detailed survey. Pinsker and Bassylago \cite{bassalygo1973complexity} proved the existence of lossless expanders and showed that any random left-regular bipartite graph is, with high probability, an expander graph. Probabilistic constructions with optimal parameters $n,N$ exist but are not suitable for the applications we consider here.  Deterministic constructions only achieve sub-optimal parameters, see Guruswami et. al. \cite{guruswami2007unbalanced}.

Our main contribution, is the presentation of quantitative guarantees on the probabilistic construction of these objects in the form of a bound on the tail probability of the size of the {\em set of neighbors}, $\Gamma(X)$ for a given $X \subset U$, of a randomly generated left-degree bipartite graph. Moreover, we provide deducible bounds on the tail probability of the {\em expansion} of the graph, $|\G(X)|/|X|$. We derive quantitative guarantees for randomly generated non-mean zero sparse binary matrices to be adjacency matrices of expander graphs. In addition, we derive the first phase transitions showing regions in parameter space that depict when a left-regular bipartite graph with a given set of parameters is guaranteed to be a lossless expander with high probability. The most significant contribution, which is the key innovation, of this paper is the use of a novel technique of {\em dyadic splitting of sets}. We derived our bounds using this technique and apply them to derive $\ell_1$ restricted isometry constants ($\mathrm{RIC}_1$).

Numerous compressed sensing algorithms have been designed for sparse matrices \cite{xu2007further,berinde2008combining,berinde2008sparse,jafarpour2009efficient}. Another contribution of our work is the derivation of sampling theorems, presented as phase transitions, comparing performance guarantees for some of these algorithms as well as the more traditional $\ell_1$ minimization compressed sensing formulation. We also show how favorably $\ell_1$ minimization performance guarantees for such sparse matrices compared to what $\ell_2$ restricted isometry constants ($\mathrm{RIC}_2$) analysis yields for the dense Gaussian matrices. For this comparison, we used sampling theorems and phase transitions from related work by Blanchard et. al. \cite{blanchard2011phase} that provided such theorems for dense Gaussian matrices based on $\mathrm{RIC}_2$ analysis.

The outline of the rest of this introduction section goes as follows. In Section \ref{sec:main} we present our main results in Theorem \ref{thm:prob_bound_1set_expansion} and Corollary \ref{cor:prob_bound_1set_expander}. In Section \ref{sec:ric1} we discuss $\mathrm{RIC}_1$ and its implication for compressed sensing, leading to two sampling theorems in Corollaries \ref{cor:prob_expander_existence1} and \ref{cor:prob_expander_existence2}.

\subsection{Main results}\label{sec:main}
Our main results is about a class of sparse matrices coming from lossless expander graphs, a class which include non-mean zero matrices. We start by defining the class of matrices we consider and a key concept of a {\em set of neighbors} used in the derivation of the main results of the manuscript. Firstly, we denote $\hbox{H}(p) := -p\log(p) - (1-p)\log(1-p)$ as the Shannon entropy function of base $e$ logarithm.

\begin{definition}
 \label{def:1set_expansion}
Let $A$ be an $n\times N$ matrix with $d$ nonzeros in each column. We refer to $A$ as a random
\begin{enumerate}
 \item sparse expander (SE) if every nonzero has value $1$
 \item sparse signed expander (SSE) if every nonzero has value from $\{-1,1\}$
\end{enumerate}
and the support set of the $d$ nonzeros per column are drawn uniformly at random, with each column drawn independently.
\end{definition}

SE matrices are adjacency matrices of lossless $(k,d,\e)$-expander graphs while SSE matrices have random sign patterns in the nonzeros of an adjacency matrix of a lossless $(k,d,\e)$-expander graph. If $A$ is either an SE or SSE it will have only $d$ nonzeros per column and since we fix $d\ll n$, $A$ is therefore ``vanishingly sparse.'' We denote $A_S$ as a submatrix of $A$ composed of columns of $A$ indexed by the set $S$ with $|S|=s$.  To aid translation between the terminology of graph theory and linear algebra we define the {\em set of neighbors} in both notation.

\begin{definition}
 \label{def:neighbours}
Consider a bipartite graph $G=(U,V,E)$ where $E$ is the set of edges and $e_{ij}=(x_i,y_j)$ is the edge that connects vertex $x_i$ to vertex $y_j$. For a set of left vertices $S\subset U$ its set of neighbors is $\Gamma(S) = \{y_j|x_i\in S \mbox{ and } e_{ij}\in E\}$. In terms of the adjacency matrix, $A$, of $G=(U,V,E)$ the set of neighbors of $A_S$ for $|S|=s$, denoted by $A_s$, is the set of rows with at least one nonzero.
\end{definition}

\begin{definition}
 \label{def:expansion}
Using Definition \ref{def:neighbours} the expansion of the graph is given by the ratio $|\G(S)|/|S|$, or equivalently, $|A_s|/s$.
\end{definition}

By the definition of a lossless expander, Definition \ref{def:expander}, we need $\left|\Gamma(S)\right|$ to be large for every small $S\subset U$. In terms of the class of matrices defined by Definition \ref{def:1set_expansion}, for every $A_S$ we want to have $|A_s|$ as close to $n$ as possible, where $n$ is the number of rows. Henceforth, we will only use the linear algebra notation $A_s$ which is equivalent to $\Gamma(S)$.
Note that $\left|A_s\right|$ is a random variable depending on the draw of the set of columns, $S$, for each fixed $A$. Therefore, we can ask what is the probability that $\left|A_s\right|$ is not greater than $a_s$, in particular where $a_s$ is smaller than the expected value of $\left|A_s\right|$. This is the question that Theorem \ref{thm:prob_bound_1set_expansion} to answers. We then use this theorem with $\mathrm{RIC}_1$ to deduce the corollaries that follow which are about the probabilistic construction of expander graphs, the matrices we consider, and sampling theorems of some selected compressed sensing algorithms.

\begin{theorem}
\label{thm:prob_bound_1set_expansion}
For fixed $s,n,N$ and $d$, let an $n\times N$ matrix, $A$ be drawn from either of the classes of matrices defined in Definition \ref{def:1set_expansion}, then
\begin{multline}
\label{eq:prob_bound_1set_expansion}
 \hbox{Prob}\left(\left|A_s\right| \leq a_s\right) < p_{max}(s,d) \\ \times \exp\left[n\cdot\Psi\left(a_s,\ldots,a_1\right)\right]
\end{multline}
where $p_{max}(s,d)$ is given by
\begin{equation}
 \label{eq:pmax}
    p_{max}(s,d) = \frac{2}{25\sqrt{2\pi s^3d^3}}, \qquad \mbox{and}
\end{equation}
\begin{multline}
 \label{eq:Psi}
    \Psi\left(a_s,\ldots,a_1\right) = \frac{1}{n} \Bigg{[}\sum_{i=1}^{\lceil s/2\rceil} \frac{s}{2i}\left(\left(n-a_i\right) \cdot \hbox{H}\left(\frac{a_{2i}-a_i}{n-a_i}\right)\right. \\
    \left. + a_i\cdot \hbox{H}\left(\frac{a_{2i}-a_i}{a_i}\right) - n\cdot \hbox{H}\left(\frac{a_i}{n}\right) \right) + 3s\log\left(5d \right) \Bigg{]}
\end{multline}
where $a_1:=d$.

If no restriction is imposed on $a_s$ then the $a_i$ for $i>1$ take on their expected value $\widehat{a}_{i}$ given by
\begin{align}
\label{eq:aj_system_unconstrained}
\widehat{a}_{2i} = \widehat{a}_{i}\left(2 - \frac{\widehat{a}_{i}}{n}\right) \quad \mbox{for} \quad i=1,2,4,\ldots,\lceil s/2\rceil.
\end{align}

If $a_{s}$ is restricted to be less than $\widehat{a}_{s}$, then the $a_i$ for $i>1$ are the unique solutions to the following polynomial system
\begin{equation}
\label{eq:aj_system_constrained}
a_{2i}^3 - 2a_ia_{2i}^2 + 2a_i^2a_{2i} - a_i^2a_{4i} = 0 ~ \mbox{for } ~ i = 1, 2,\ldots,\lceil s/4\rceil
\end{equation}
with $a_{2i}\ge a_i$ for each $i$.
\end{theorem}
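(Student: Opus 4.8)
The plan is to realize $|A_s|$ through a balanced binary \emph{dyadic merge tree}: group the $s$ columns of $A_S$ so that the leaves are single columns and each internal node of ``weight'' $2i$ carries the neighborhood of a group of $2i$ columns, formed as the union of the neighborhoods of its two children, each a group of $i$ columns. Writing $a_i$ for the cardinality of the neighborhood of a generic group of $i$ columns, the leaves have value $a_1=d$ and the root has value $|A_s|$. The one structural fact I would establish first is the symmetry that drives everything: since each column's support is drawn uniformly, conditioned on a group's neighborhood having cardinality $a_i$ that neighborhood is uniformly distributed over the $a_i$-subsets of the $n$ rows. Consequently, because the two children of a node involve disjoint, independently drawn sets of columns, conditioned on both children having neighborhoods of size $a_i$ the probability that their union has size exactly $a_{2i}$ is \emph{exactly} hypergeometric, $\binom{a_i}{2a_i-a_{2i}}\binom{n-a_i}{a_{2i}-a_i}/\binom{n}{a_i}$, counting the $2a_i-a_{2i}$ collisions and the $a_{2i}-a_i$ fresh rows. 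Rewriting $\binom{a_i}{2a_i-a_{2i}}=\binom{a_i}{a_{2i}-a_i}$ produces precisely the three binomials whose logarithms are the entropy terms of \eqref{eq:Psi}.

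Next I would decompose the tail probability over the tree. Conditioning from the leaves upward and using independence across subtrees, $\hbox{Prob}(|A_s|\le a_s)$ becomes a sum, over all admissible internal size assignments $(a_2,a_4,\ldots,a_{s/2})$ and all root values at most $a_s$, of a product of the hypergeometric merge factors, one per merge. Since there are $s/(2i)$ merges sending weight-$i$ groups to weight-$2i$ groups, the log of the product collects into $\sum_i \tfrac{s}{2i}$ times the per-merge log-probability. Applying Stirling in the form $\binom{M}{K}=\Theta(1)\bigl(2\pi K(M-K)/M\bigr)^{-1/2}\exp\!\bigl(M\,H(K/M)\bigr)$ converts each factor into $\exp$ of the bracketed entropy expression of \eqref{eq:Psi} times an explicit polynomial prefactor. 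Multiplying the prefactors across the $s-1$ merges, using $a_i\le di$ to bound the ranges of the internal sizes, and bounding the number of admissible sequences is what yields the scalar $p_{max}(s,d)$ of \eqref{eq:pmax} and the $\exp\bigl(3s\log(5d)\bigr)$ contribution absorbed into $\Psi$; this constant-chasing is the bookkeeping half of the argument and I expect pinning down the exact $2/25$ and $5d$ to be the most laborious part.

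Then I would pass from the sum to its dominant term, so that the reported intermediate $a_i$ are the maximizers of the exponent over the internal variables. The clean route is to expand every entropy into its $x\log x$ constituents, whereupon the derivative of a single merge exponent $T(a_i,a_{2i})$ with respect to its output collapses to $\log\frac{(n-a_{2i})(2a_i-a_{2i})}{(a_{2i}-a_i)^2}$ and with respect to its input to $2\log\frac{a_i(a_{2i}-a_i)}{(n-a_i)(2a_i-a_{2i})}$. Setting the output derivative to zero gives $(2a_i-a_{2i})(n-a_{2i})=(a_{2i}-a_i)^2$, i.e.\ $a_{2i}=a_i(2-a_i/n)$, which is the conditional mean and hence \eqref{eq:aj_system_unconstrained}: with $a_s$ free, forward-propagating this mean from $a_1=d$ closes the recursion. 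When $a_s$ is pinned below $\widehat a_s$ the root is fixed and each internal $a_{2i}$ remains free, appearing as the output of the merge from level $i$ (weight $s/2i$) and as an input to the merge into level $4i$ (weight $s/4i$); the joint stationarity condition is $2\cdot(\text{output derivative})+(\text{input derivative})=0$, and after exponentiating the matching $(n-a_{2i})$ factors cancel, leaving $(2a_i-a_{2i})a_{2i}(a_{4i}-a_{2i})=(a_{2i}-a_i)^2(2a_{2i}-a_{4i})$, which expands to the cubic $a_{2i}^3-2a_ia_{2i}^2+2a_i^2a_{2i}-a_i^2a_{4i}=0$ of \eqref{eq:aj_system_constrained}. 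A guiding consistency check is that the mean trajectory satisfies this cubic, so \eqref{eq:aj_system_constrained} is the common Euler--Lagrange relation and only the boundary data ($a_1=d$ at the leaves, $a_s$ at the root) distinguishes the two regimes.

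Finally I would settle existence and uniqueness. Read as the forward recursion $a_{4i}=a_{2i}(a_{2i}^2-2a_ia_{2i}+2a_i^2)/a_i^2$, the cubic together with the two boundary conditions is a well-posed two-point problem, and monotonicity of the per-merge exponent in $a_{2i}$ together with the physical constraint $a_{2i}\ge a_i$ (a union only grows) should isolate a unique admissible root of each cubic, hence a unique trajectory. The main obstacle I anticipate is not the algebra of the previous step, which cancels cleanly, but justifying the replacement of the full sum by its dominant term: this requires showing that the stationary solution is a genuine global maximum of $\Psi$ on the admissible region rather than a boundary extremum or saddle (a concavity or second-order analysis in the internal variables), and handling the cases where $s$ is not a power of two, which is the source of the ceilings in \eqref{eq:Psi} and \eqref{eq:aj_system_constrained}.
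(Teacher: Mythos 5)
Your proposal is correct and follows essentially the same route as the paper: the dyadic (merge/split) tree, the exact hypergeometric law for the size of a union of two conditionally uniform neighborhoods, Stirling bounds splitting each factor into an entropy exponent $\psi_n$ plus a polynomial prefactor whose product yields $p_{max}(s,d)$ and the $3s\log(5d)$ term, and stationarity of the summed exponent giving $\widehat{a}_{2i}=\widehat{a}_i(2-\widehat{a}_i/n)$ in the free case and the cubic \eqref{eq:aj_system_constrained} in the constrained case. The paper handles the non-power-of-two bookkeeping via an explicit lemma on the counts $q_j,r_j$ of ceiling/floor block sizes and the monotonicity properties of $\psi_n$, and establishes the global maximum via negative definiteness of the Hessian, exactly the two points you flagged as remaining work.
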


\begin{corollary}
\label{cor:prob_bound_1set_expander}
For fixed $s,n,N,d$ and $0<\e<1/2$, let an $n\times N$ matrix, $A$ be drawn from the class of matrices defined in Definition \ref{def:1set_expansion}, then
\begin{multline}
\label{eq:prob_bound_1set_expander}
 \hbox{Prob}\left(\mathop{\|A_Sx\|_1} \leq (1-2\e)d\|x\|_1\right) < p_{max}(s,d)\\ \times \exp\left[n\cdot\Psi\left(s,d,\e\right)\right]
\end{multline}
where $\Psi\left(s,d,\e\right) = \Psi\left(a_s,\ldots,a_1\right)$ in \eqref{eq:Psi} with $a_s = (1-\e)ds$ and $a_1 = d$, and $p_{max}(s,d)$ is the polynomial in \eqref{eq:pmax}.
\end{corollary}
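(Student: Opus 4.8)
The plan is to reduce the $\ell_1$ statement \eqref{eq:prob_bound_1set_expander} to the neighbour-set statement \eqref{eq:prob_bound_1set_expansion} of Theorem \ref{thm:prob_bound_1set_expansion}. The right-hand side of \eqref{eq:prob_bound_1set_expander} is exactly the right-hand side of \eqref{eq:prob_bound_1set_expansion} evaluated at $a_s=(1-\e)ds$ and $a_1=d$, so it suffices to establish the event inclusion
\[
\left\{ \|A_S x\|_1 \le (1-2\e)d\|x\|_1 \right\} \subseteq \left\{ |A_s| \le (1-\e)ds \right\}
\]
and then invoke the theorem with $a_s=(1-\e)ds$. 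By contraposition this is the deterministic implication: if the set of neighbours is large, $|A_s|>(1-\e)ds$, then $\|A_S x\|_1>(1-2\e)d\|x\|_1$. This $\ell_1$ lower ``isometry'' for (signed) adjacency matrices of expanders, in the spirit of Berinde et al.\ \cite{berinde2008combining}, is the only nontrivial ingredient.

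For the deterministic step I would index the columns of $S$ as $1,\dots,s$ so that $|x_1|\ge|x_2|\ge\cdots\ge|x_s|$, and call a row a \emph{collision} for column $i$ if some earlier column $i'<i$ is also nonzero there. Writing $c_i$ for the number of collision rows of column $i$, the $ds$ edges land in exactly $|A_s|$ distinct rows, so $\sum_i c_i = ds-|A_s|$. At each neighbour row $j$ I bound $|(A_Sx)_j|$ below by its first-hitting contribution minus the later ones, a reverse triangle inequality valid for both the SE and SSE sign patterns; summing over the $|A_s|$ neighbour rows gives
\[
\|A_S x\|_1 \ \ge\ d\|x\|_1 - 2\sum_{i=1}^{s} c_i\,|x_i| .
\]
For the flat signal $|x_i|\equiv\text{const}$ the weighted collision count collapses to $\sum_i c_i|x_i| = (ds-|A_s|)\,\|x\|_1/s$, which depends on $|A_s|$ alone, and one obtains $\|A_S x\|_1 \ge \left(\tfrac{2|A_s|}{ds}-1\right)d\|x\|_1$; feeding in $|A_s|>(1-\e)ds$ yields $\|A_S x\|_1>(1-2\e)d\|x\|_1$, the contrapositive for flat $x$.

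The remaining, and main, obstacle is that for \emph{non-flat} $x$ the bound $\sum_i c_i|x_i|\le \e d\|x\|_1$ does not follow from the single quantity $|A_s|$, since collisions may concentrate on the large-magnitude coordinates. The clean way around this is summation by parts against the nonincreasing weights $|x_i|$: with $S_t=\{1,\dots,t\}$ and $|x_{s+1}|:=0$,
\[
\sum_{i=1}^{s} c_i\,|x_i| \;=\; \sum_{t=1}^{s} C_t\left(|x_t|-|x_{t+1}|\right), \qquad C_t = dt - |\G(S_t)|,
\]
so that the prefix (sub-level) expansion bounds $C_t\le \e dt$ would give $\sum_i c_i|x_i|\le \e d\|x\|_1$ for every $x$. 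These nested expansion requirements are precisely what the \emph{dyadic} structure of $\Psi$ in \eqref{eq:Psi} tracks through the auxiliary values $a_1,a_2,a_4,\dots,a_s$ and the splitting relation \eqref{eq:aj_system_constrained}; I therefore expect the reconciliation of ``only $|A_s|$ appears'' with ``all prefixes are needed'' to be absorbed by the dyadic-splitting bookkeeping already built into the theorem, rather than by an extra union bound over support sizes. Finally I would verify the routine substitution $a_s=(1-\e)ds,\ a_1=d$ reproduces $\Psi(s,d,\e)$ verbatim, completing the deduction.
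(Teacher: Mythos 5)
Your overall strategy is the one the paper uses: reduce \eqref{eq:prob_bound_1set_expander} to Theorem \ref{thm:prob_bound_1set_expansion} by arguing that the $\ell_1$ failure event is contained in the expansion failure event $\{|A_s|\le(1-\e)ds\}$, then substitute $a_s=(1-\e)ds$, $a_1=d$ into \eqref{eq:prob_bound_1set_expansion}. The paper does this in a few lines by invoking Theorem \ref{thm:ric1} (the RIP-1 result of \cite{berinde2008combining}) and treating the two events as equivalent; you instead try to re-derive the deterministic implication, and in doing so you correctly isolate the genuine difficulty: for non-flat $x$ the collision bound $\sum_i c_i|x_i|\le\e d\|x\|_1$ requires expansion of every prefix $S_t$ of the magnitude-ordered support, i.e. $|\G(S_t)|\ge(1-\e)dt$ for all $t\le s$, which is strictly stronger than the single condition $|A_s|>(1-\e)ds$ that Theorem \ref{thm:prob_bound_1set_expansion} controls. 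Your reverse-triangle-inequality bound, the identity $\sum_i c_i=ds-|A_s|$, the flat-signal computation, and the summation-by-parts identity are all correct.

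The gap is in your final step. The dyadic splitting does \emph{not} absorb the prefix requirements: the tree in Fig.~\ref{fig:splitting} is built on an arbitrary fixed bisection of the column index set, the intermediate $a_i$ are cardinalities of unions of nonzeros over dyadic blocks (not over prefixes ordered by $|x_i|$), and the constrained system \eqref{eq:aj_system_constrained} fixes these $a_i$ at the most likely intermediate values given $a_s$ rather than imposing $a_t\le(1-\e)dt$ at every level. So the reconciliation you hope for is not already built into the theorem. Within the paper's framework the step is closed differently: Corollary \ref{cor:prob_bound_1set_expander} simply cites Theorem \ref{thm:ric1} as a black box for the implication that expansion yields the $\ell_1$ lower bound, and the fact that the RIP-1 proof actually needs expansion at \emph{all} cardinalities $t\le s$ is only accounted for downstream, in Corollary \ref{cor:prob_expander_existence1}, whose union bound runs over all set sizes $2\le s\le k$. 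To make your blind derivation airtight for a single fixed $S$ you would have to replace the single event $\{|A_s|\le(1-\e)ds\}$ by the union over $t\le s$ of the prefix failure events and pay the corresponding summed probability (which is morally what Corollary \ref{cor:prob_expander_existence1} does); a single application of Theorem \ref{thm:prob_bound_1set_expansion} at level $s$ alone does not suffice for general $x$.
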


Theorem \ref{thm:prob_bound_1set_expansion} and Corollary \ref{cor:prob_bound_1set_expander} allow us to calculate $s,n,N,d,\e$ where
the probability of the probabilistic constructions in Definition \ref{def:1set_expansion} not being a lossless $(s,d,\e)$-expander is exponentially small.  For moderate values of $\e$ this allows us to make quantitative sampling theorems for some compressed sensing reconstruction algorithms.

\subsection{$\mathrm{RIC}_1$ and its implications to Compressed Sensing}\label{sec:ric1}
In compressed sensing, and by extension in sparse approximation, we observe the effect of the application of a matrix to a vector of interest and we endeavor to recovery this vector of interest by exploiting the inherent simplicity in this vector. Precisely, let $x \in \mathbb{R}^N,$ be the vector of interest whose simplicity is that it has $k<N$ nonzeros, which we refer to as $k-$sparse; then we observe $y \in \mathbb{R}^n,$ as the measurement vector resulting from the multiplication of $x$ by an ${n\times N}$ matrix, $A$. The minimum simplicity reconstruct of $x$ can be written as
\begin{equation}
\label{eq:decoding_exact}
 \min_{x \in \chi^N} \|x\|_0 \quad \mbox{subject to} \quad Ax = y,
\end{equation}
where $\chi^N$ is the set of all $k-$sparse vectors and $\|z\|_0$ counts the nonzero components of $z$; this model may be reformulated to include noise in the measurements. References \cite{baranuik2007compressive,candes2008introduction,donoho2006compressed,lustig2008compressed} give detailed introductions to compressed sensing and its applications; while \cite{blumensath2009iterative,candes2005decoding,dai2009subspace,foucart2009sparsest,needell2009cosamp,berinde2008combining,berinde2008sparse,jafarpour2008efficient,xu2007further,xu2007efficient} provide information on some of the popular computationally efficient algorithms used to solve problem \eqref{eq:decoding_exact} and its reformulations.

We are able to give guarantees on the quality of the reconstructed vector from $A$ and $y$ from a variety of reconstruction algorithms. One of these guarantees is a bound on the approximation error between our recovered vector, say $\hat{x}$, and the original vector by the best $k$-term representation error i.e. $\|x-\hat{x}\|_1 \leq Const. \|x-x_k\|_1$ where $x_k$ is the optimal $k$-term representation for $x$. This is possible if $A$ has small $\mathrm{RIC}_1$, in other words $A$ satisfies the $\ell_1$ restricted isometry property (RIP-1), introduced by Berinde et. al. in \cite{berinde2008combining} and defined as thus.
\begin{definition}[RIP-1]
\label{def:ric1}
Let $\chi^N$ be the set of all $k-$sparse vectors, then an $n\times N$ matrix $A$ has RIP-1, with the lower $\mathrm{RIC}_1$ being the smallest $L(k,n,N;A)$, when the following condition holds.
\begin{equation}
\label{eq:ric1}
 \left(1-L(k,n,N;A)\right)||x||_1 \leq ||Ax||_1 \leq ||x||_1 ~~ \forall x \in \chi^N.
\end{equation}
\end{definition}

For computational purposes it is preferable to have $A$ sparse, but little quantitative information on $L(k,n,N;A)$ has been available for large sparse rectangular matrices. Berinde et. al. in \cite{berinde2008combining} showed that scaled adjacency matrices of lossless expander graphs (i.e. scaled SE matrices) satisfy RIP-1, and the same proof extends to the signed adjacency matrices (i.e. so called SSE matrices).
\begin{theorem}
\label{thm:ric1}
If an $n\times N$ matrix $A$ is either SE or SSE defined in Definition \ref{def:1set_expansion}, then $A/d$ satisfies RIP-1 with $L(k,n,N;A) = 2\e$.
\end{theorem}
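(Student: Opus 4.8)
The plan is to establish the two inequalities in \eqref{eq:ric1} separately for $B := A/d$, treating the SE and SSE cases simultaneously, since the argument will only use that every nonzero of $A$ has magnitude $1$ and that $A$ is the (possibly signed) adjacency matrix of a lossless $(k,d,\e)$-expander. First I would dispatch the upper bound $\|Bx\|_1 \le \|x\|_1$ by the triangle inequality: writing $Ax = \sum_j x_j(Ae_j)$ and noting that each column $Ae_j$ has exactly $d$ nonzeros of magnitude $1$ gives $\|Ax\|_1 \le \sum_j |x_j|\,\|Ae_j\|_1 = d\|x\|_1$, and dividing by $d$ finishes this direction. The random signs in the SSE case can only reduce the left-hand side, so the bound is unaffected.

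The substance is the lower bound $\|Ax\|_1 \ge (1-2\e)d\|x\|_1$ for every $k$-sparse $x$. I would fix the support $S = \mathrm{supp}(x)$ with $|S| \le k$ and relabel its elements $u_1,\dots,u_s$ so that $|x_{u_1}| \ge \cdots \ge |x_{u_s}|$. Next I would classify the $ds$ edges leaving $S$: call an edge out of $u_j$ a \emph{collision} edge if its right endpoint already lies in $\Gamma(\{u_1,\dots,u_{j-1}\})$, and let $r_j$ be the number of such edges out of $u_j$, so that each right vertex $v\in\Gamma(S)$ receives exactly one non-collision (heaviest) edge. At each such $v$ the reverse triangle inequality gives $|(Ax)_v| \ge |x_{u^\ast(v)}| - \sum |x_u|$, where $u^\ast(v)$ is the heaviest coordinate feeding $v$ and the sum runs over the collision edges into $v$. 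Summing over $v$, and using that the non-collision edges are in bijection with $\Gamma(S)$, yields $\|Ax\|_1 \ge d\|x\|_1 - 2W$ with $W := \sum_j r_j |x_{u_j}|$ the total weighted collision mass.

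It remains to bound $W$ by $\e d\|x\|_1$, and this is the crux. The key is to apply the expansion property not once to $S$ but to every prefix: since $|\Gamma(\{u_1,\dots,u_j\})| \ge (1-\e)dj$, the partial sums $R_j := \sum_{i\le j} r_i = dj - |\Gamma(\{u_1,\dots,u_j\})|$ satisfy $R_j \le \e d j$. Because the weights $|x_{u_j}|$ are non-increasing, Abel summation then gives $W = \sum_j R_j\bigl(|x_{u_j}| - |x_{u_{j+1}}|\bigr) \le \e d \sum_j j\bigl(|x_{u_j}| - |x_{u_{j+1}}|\bigr) = \e d\|x\|_1$, where the telescoping identity uses the convention $|x_{u_{s+1}}| := 0$. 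Substituting back gives $\|Ax\|_1 \ge (1-2\e)d\|x\|_1$, hence $\|Bx\|_1 \ge (1-2\e)\|x\|_1$, so $L(k,n,N;A) = 2\e$.

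The main obstacle is precisely this weighted bound. The naive global collision count $\sum_j r_j \le \e ds$ is insufficient, because the collision edges could in principle carry the heaviest coordinates; the proof must instead exploit the sorted prefixes, apply expansion to each of them, and combine this with monotonicity of the weights through Abel summation in order to carry the factor $\e$ through to the $\ell_1$ mass. Everything else, including the extension from SE to SSE, is a routine consequence of the reverse triangle inequality and therefore insensitive to the signs.
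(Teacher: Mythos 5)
Your proof is correct and is essentially the same argument the paper relies on: the paper's proof of Theorem \ref{thm:ric1} simply defers to the collision-edge argument of Berinde et al.\ in \cite{berinde2008combining} (noting that absolute values handle the SSE signs), and your write-up reproduces exactly that argument — sorted coordinates, one non-collision edge per neighbor, expansion applied to every prefix, and Abel summation to bound the weighted collision mass by $\e d\|x\|_1$.
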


\begin{proof}
The proof of the signed case (SSE) follows that of the unsigned case (SE) in \cite{berinde2008combining} but with absolute values included in the appropriate stages.
\end{proof}

Based on Theorem \ref{thm:ric1} which guarantees RIP-1, \eqref{eq:ric1}, for the class of matrices in Definition \ref{def:1set_expansion}, we give a bound, in Corollary \ref{cor:prob_expander_existence1}, for the probability that a random draw of a matrix with $d ~1$s or $\pm 1$s in each column fails to satisfy the lower bound of RIP-1 and hence fails to come from the class of matrices given in Definition \ref{def:1set_expansion}. In addition to Theorem \ref{thm:ric1}, Corollary \ref{cor:prob_expander_existence1} follows from Theorem \ref{thm:prob_bound_1set_expansion} and Corollary \ref{cor:prob_bound_1set_expander}.

\begin{corollary}
\label{cor:prob_expander_existence1}
Considering RIP-1, if $A$ is drawn from the class of matrices in Definition \ref{def:1set_expansion} and $0<\e<1/2$ with $k,n,N$ fixed, then for all $k$-sparse vectors $x$
\begin{multline}
\label{eq:prob_expander_existence1}
\hbox{Prob}\left(\|Ax\|_1 \leq (1-2\e)d\|x\|_1\right) < p'_{max}(N,k,d) \\ \times \exp\left[N\cdot \Psi_{net}\left(k,n,N;d,\e\right)\right]
\end{multline}
where $p'_{max}(N,k,d)$ and $\Psi_{net}$ are given by
\begin{align}
 \label{eq:ppmax}
    p'_{max}(N,k,d) & = \frac{1}{16\pi k\sqrt{d^3\left(1-\frac{k}{N}\right)}}, \\
 \label{eq:Psi_net}
    \Psi_{net}\left(k,n,N;d,\e\right) & =  \hbox{H}\left(\frac{k}{N}\right) + \frac{n}{N}\Psi\left(k,d,\e\right),
\end{align}
with $\Psi\left(k,d,\e\right)$ defined in Corollary \ref{cor:prob_bound_1set_expander}.
\end{corollary}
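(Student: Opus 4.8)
The plan is to pass from the single-support estimate of Corollary~\ref{cor:prob_bound_1set_expander}, which controls the failure probability for one fixed column support, to a statement that holds simultaneously for every $k$-sparse $x$ by a union bound over supports. The union over $\binom{N}{k}$ supports is exactly what converts the binomial count into the entropy term $\hbox{H}(k/N)$ appearing in $\Psi_{net}$, while the single-support exponent supplies the $\frac{n}{N}\Psi(k,d,\e)$ term.

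First I would reduce the uniform (over all $k$-sparse $x$) event to a union over supports. For any $k$-sparse $x$ with support $S=\mathrm{supp}(x)$ we have $\|Ax\|_1=\|A_S x\|_1$, so the event $\{\|Ax\|_1\le(1-2\e)d\|x\|_1\}$ depends on $A$ only through the columns indexed by $S$. Corollary~\ref{cor:prob_bound_1set_expander} already bounds, for each fixed $S$ of size $s$, the probability that a vector supported on $S$ satisfies $\|A_S x\|_1\le(1-2\e)d\|x\|_1$ (this is the per-support failure of the RIP-1 lower bound of Theorem~\ref{thm:ric1}, equivalently the event that $S$ fails to expand). Hence the bad event for all $k$-sparse $x$ is contained in $\bigcup_{S:\,|S|\le k}\{\,S\text{ fails}\,\}$, and a union bound gives the estimate $\sum_{s=1}^{k}\binom{N}{s}\,p_{max}(s,d)\exp[n\,\Psi(s,d,\e)]$.

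Next I would collapse this sum to its $s=k$ term and combine the exponentials. I expect the exponent $\hbox{H}(s/N)+\frac{n}{N}\Psi(s,d,\e)$ to be non-decreasing in the support size $s$ over the relevant range (larger supports admit more potential collisions, hence a larger, i.e.\ least negative, exponent), so the summand is maximised at $s=k$ and bounding the $k$ terms by $k$ times this maximum absorbs one factor of $k$. The exponential factors then combine exactly: since $\exp[N\,\hbox{H}(k/N)]\cdot\exp[n\,\Psi(k,d,\e)]=\exp\!\big[N\big(\hbox{H}(k/N)+\tfrac{n}{N}\Psi(k,d,\e)\big)\big]=\exp[N\,\Psi_{net}]$, it remains only to bound $\binom{N}{k}$. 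Using the Stirling/entropy estimate $\binom{N}{k}\le \big(2\pi k(1-k/N)\big)^{-1/2}\exp[N\,\hbox{H}(k/N)]$ produces precisely the $\sqrt{1-k/N}$ dependence of $p'_{max}$.

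Finally I would collect the algebraic prefactors: multiplying the Stirling prefactor $\big(2\pi k(1-k/N)\big)^{-1/2}$ by $p_{max}(k,d)=2/(25\sqrt{2\pi k^3 d^3})$ and by the factor $k$ from collapsing the sum yields a constant multiple of $1/\big(k\sqrt{d^3(1-k/N)}\big)$, and loosening the numerical constant gives the stated $p'_{max}(N,k,d)=1/\big(16\pi k\sqrt{d^3(1-k/N)}\big)$. The main obstacle I anticipate is this last bookkeeping together with the justification that the $s=k$ term dominates the union over all smaller supports: one must verify the monotonicity of $\hbox{H}(s/N)+\frac{n}{N}\Psi(s,d,\e)$ in $s$ (or otherwise bound the sum directly) so that reducing to the single exponent $N\,\Psi_{net}(k,n,N;d,\e)$ costs only the polynomial factor that is absorbed into $p'_{max}$. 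The exponential and entropy parts, by contrast, are forced by the union bound and match immediately.
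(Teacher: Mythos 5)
Your proposal follows essentially the same route as the paper: a union bound over all supports of size $s\le k$, an application of Corollary \ref{cor:prob_bound_1set_expander} to each term, a Stirling bound on $\binom{N}{s}$ producing the entropy term $\hbox{H}(s/N)$, and collapsing the sum to $k$ times its $s=k$ term before collecting the polynomial prefactors into $p'_{max}$. The only (harmless) difference is that you explicitly flag the need to justify that the $s=k$ summand dominates, whereas the paper justifies this only via the monotonicity of $\hbox{H}(s/N)$ for $N>2k$.
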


Furthermore, the following corollary is a consequence of Corollary \ref{cor:prob_expander_existence1} and it is a sampling theorem on the existence of lossless expander graphs. The proof of Corollaries \ref{cor:prob_expander_existence1} and \ref{cor:prob_expander_existence2} are presented in Sections \ref{sec:cor_prob_expander_exist1} and \ref{sec:cor_prob_expander_exist2} respectively.
\begin{corollary}
\label{cor:prob_expander_existence2}
Consider $0<\e<1/2$ and $d$ fixed.   If $A$ is drawn from the class of matrices in Definition \ref{def:1set_expansion} and all $x$ drawn from $\chi^N$ with $(k,n,N) \rightarrow \infty$ while $k/n \rightarrow \r \in (0,1)$ and $n/N \rightarrow \d \in (0,1)$ then for $\r < (1-\g)\r^{exp}(\d;d,\e)$ and $\g>0$
\begin{equation}
\label{eq:prob_expander_existence2}
\hbox{Prob}\left(\|Ax\|_1 \geq (1-2\e)d\|x\|_1\right) \rightarrow 1
\end{equation}
exponentially in $n$, where $\r^{exp}(\d;d,\e)$ is the largest limiting value of $k/n$ for which
\begin{equation}
 \label{eq:rho_exp}
    \hbox{H}\left(\frac{k}{N}\right) + \frac{n}{N}\Psi\left(k,d,\e\right) = 0.
\end{equation}
\end{corollary}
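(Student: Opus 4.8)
The plan is to obtain this asymptotic statement directly from the finite-size tail bound of Corollary~\ref{cor:prob_expander_existence1} by letting $(k,n,N)\to\infty$ along the prescribed ratios and showing that the controlling exponent is strictly negative below the curve $\r^{exp}$. I would first pass to the complementary event. Since
\[
\mathrm{Prob}\!\left(\|Ax\|_1\ge(1-2\e)d\|x\|_1\right)=1-\mathrm{Prob}\!\left(\|Ax\|_1<(1-2\e)d\|x\|_1\right)\ge 1-\mathrm{Prob}\!\left(\|Ax\|_1\le(1-2\e)d\|x\|_1\right),
\]
it suffices to show that the probability bounded in \eqref{eq:prob_expander_existence1} tends to $0$ exponentially in $n$. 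Applying Corollary~\ref{cor:prob_expander_existence1} bounds it by $p'_{max}(N,k,d)\exp\!\left[N\cdot\Psi_{net}(k,n,N;d,\e)\right]$, reducing everything to the sign and magnitude of the exponent $N\cdot\Psi_{net}$.

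I would then dispose of the prefactor and isolate the exponent. The factor $p'_{max}(N,k,d)$ in \eqref{eq:ppmax} is $O(1/k)$ because $d$ is fixed and $k/N\to\r\d<1$ keeps $\sqrt{1-k/N}$ bounded away from $0$; hence it tends to $0$ and cannot overturn an exponentially small term. The essential quantity is $N\cdot\Psi_{net}=N\,\hbox{H}(k/N)+n\,\Psi(k,d,\e)$. Using $n/N\to\d$ and $k/N\to\r\d$, I would show that $\Psi_{net}(k,n,N;d,\e)$ converges to a limit $\overline{\Psi}_{net}(\r,\d;d,\e)$ whose zero set defines $\r^{exp}$ through \eqref{eq:rho_exp}. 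Establishing this limit requires controlling the limit of the implicitly defined profile $a_i$ solving the polynomial system \eqref{eq:aj_system_constrained} and the limit of the dyadic sum in \eqref{eq:Psi}; both are continuous in $\r$ for fixed $\d,d,\e$, so $\overline{\Psi}_{net}$ is continuous in $\r$.

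The crux, and the step I expect to be the main obstacle, is to prove that $\overline{\Psi}_{net}(\r,\d;d,\e)<0$ for every $\r\in(0,\r^{exp})$. Being the largest root of $\overline{\Psi}_{net}=0$ alone does not rule out an intermediate subinterval of $(0,\r^{exp})$ on which $\overline{\Psi}_{net}$ is positive, so I would establish negativity throughout the interval via monotonicity of $\overline{\Psi}_{net}$ in $\r$: enlarging the support makes the deficiency event $|A_s|\le(1-\e)ds$ more likely and simultaneously multiplies the number of candidate supports, and I would show the net effect drives $\overline{\Psi}_{net}$ upward, crossing zero for the first and last time at $\r^{exp}$. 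Because $\Psi$ in \eqref{eq:Psi} depends on the $a_i$ only implicitly through the nonlinear system \eqref{eq:aj_system_constrained}, proving this crossing is monotone---rather than checking it pointwise---is the delicate part, and it likely needs either a reduction of the dyadic sum to closed form or a careful differentiation of the implicit profile.

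Finally, I would assemble the pieces. For fixed $\g>0$ the hypothesis $\r<(1-\g)\r^{exp}$ places $\r$ strictly inside $(0,\r^{exp})$, so by the previous step $\overline{\Psi}_{net}(\r,\d;d,\e)=:-c_0<0$; the $(1-\g)$ margin guarantees that the finite-size exponent $\Psi_{net}(k,n,N;d,\e)$ is also negative and bounded away from $0$ once $n$ is large, despite its $o(1)$ deviation from the limit. Since $N\asymp n/\d$, this yields $N\cdot\Psi_{net}\le -c\,n$ for some $c>0$ and all large $n$, whence $p'_{max}\exp[N\cdot\Psi_{net}]\le\exp(-c\,n)\to0$. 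Combined with the complementary-event inequality, $\mathrm{Prob}\!\left(\|Ax\|_1\ge(1-2\e)d\|x\|_1\right)\to1$ exponentially in $n$, as claimed.
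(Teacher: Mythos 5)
Your proposal follows essentially the same route as the paper: pass to the complementary event, invoke Corollary \ref{cor:prob_expander_existence1}, and argue that for $\r<(1-\g)\r^{exp}$ the exponent $N\cdot\Psi_{net}$ is negative and of order $n$, so the polynomial prefactor is irrelevant and the failure probability decays exponentially. The one point you single out as the crux --- excluding a subinterval of $(0,\r^{exp})$ on which the limiting exponent is positive --- is something the paper simply asserts (it invokes strict monotonicity of the exponent in $k/n$ only in the Appendix, for the analogous Corollary \ref{cor:prob_expander_existenceBI}), so your added care there supplements rather than contradicts the published argument.
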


The outline of the rest of the manuscript is as follows:  In Section \ref{sec:dd_main_results} we show empirical data to validate our main results and also present lemmas (and their proofs) that are key to the proof of the main theorem, Theorem \ref{thm:prob_bound_1set_expansion}. In Section \ref{sec:ric12} we discuss restricted isometry constants and compressed sensing algorithms. In Section \ref{sec:proof} we prove the mains results, that is Theorem \ref{thm:prob_bound_1set_expansion} and the corollaries in Sections \ref{sec:main} and \ref{sec:ric1}. Section \ref{sec:appendix} is the appendix where we present the alternative to Theorem \ref{thm:prob_bound_1set_expansion}.

\section{Discussion and derivation of the main results}\label{sec:dd_main_results}
We present the method used to derive the main results and discuss the validity and implications of the method. We start by presenting in the next subsection, Section \ref{sec:numerics}, numerical results that support the claims of the main results in Sections \ref{sec:main} and \ref{sec:ric1}. This is followed in Section \ref{sec:dd_main_results} with lemmas, propositions and corollaries and their proofs.

\subsection{Discussion on main results}\label{sec:numerics}
Theorem  \ref{thm:prob_bound_1set_expansion} gives a bound on the probability that the cardinality of a union of $k$ sets each with $d$ elements, i.e. $|A_k|$, is less than $a_k$. Fig. \ref{fig:mean_neighbour} shows plots of values of $|A_k|$ (size of set of neighbors) for different $k$ (in blue), superimposed on these plots is the mean value of $|A_k|$ (in red) both taken over 500 realizations and the $\widehat{a}_k$ in green. Similarly, Fig. \ref{fig:mean_expansion} also shows values of $|A_k|/k$ (the graph expansion) also taken over 500 realizations.

\begin{figure}[h]\vspace{-3mm}
\centering
\includegraphics[width=0.5\textwidth]{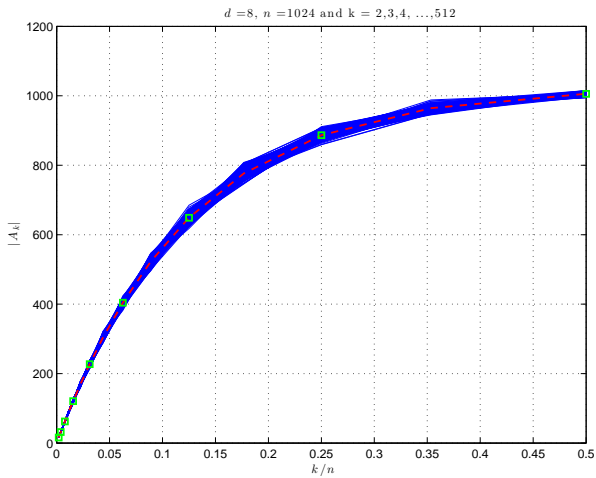}
\caption{For fixed $d=8$ and $n=2^{10}$, over $500$ realizations we plot (in blue) the cardinalities of the index sets of nonzeros in a given number of set sizes, $k$. The dotted red curve is mean of the simulations and the green squares are the $\hat{a}_k$.}
\label{fig:mean_neighbour}
\end{figure}

\begin{figure}[h]\vspace{-3mm}
\centering
\includegraphics[width=0.5\textwidth]{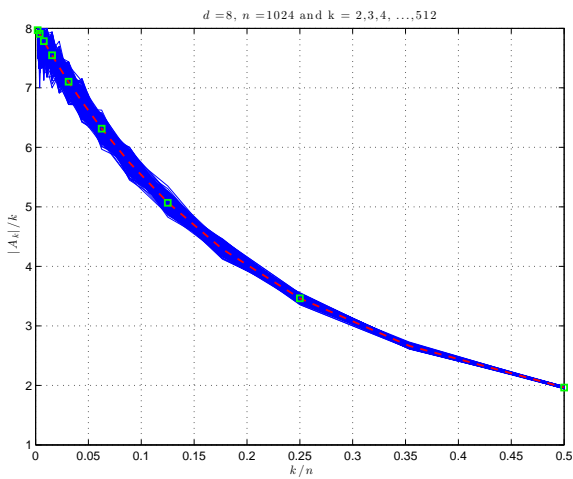}
\caption{For fixed $d=8$ and $n=2^{10}$, over $500$ realizations we plot (in blue) the graph expansion for a given input set size $k$. The dotted red curve is mean of the simulations and the green squares are the $\hat{a}_k/k$.}
\label{fig:mean_expansion}
\end{figure}

Theorem \ref{thm:prob_bound_1set_expansion} also claims that the $\hat{a}_s$ are the expected values of the cardinalities of the union of $s$ sets. We give a brief sketch of its proof in Section \ref{sec:tools} in terms of the maximum likelihood and  empirical illustrate the accuracy of the result in Fig. \ref{fig:compare} where we show the relative error between $\hat{a}_k$ and the mean values of the $a_k, \bar{a}_k$, realized over $500$ runs, to be less than $10^{-3}$.

\begin{figure}[h]\vspace{-3mm}
\centering
\includegraphics[width=0.5\textwidth]{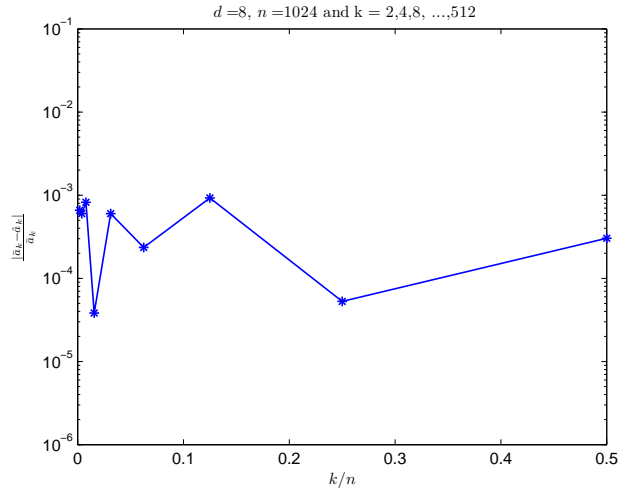}
\caption{For fixed $d=8$ and $n=2^{10}$, over $500$ realizations the relative error between the mean values of $a_k$ (referred to as $\bar{a}_k$) and the $\hat{a}_k$ from Equation \eqref{eq:aj_system_unconstrained} of Theorem \ref{thm:prob_bound_1set_expansion}.}
\label{fig:compare}
\end{figure}

\begin{figure}[h]\vspace{-3mm}
\centering
\includegraphics[width=0.5\textwidth]{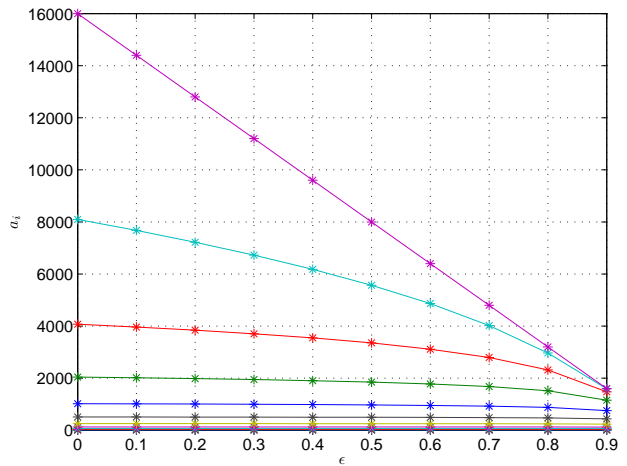}
\caption{Values of $a_i$as a function of $\epsilon\in [0,1)$ for $a_k:=(1-\epsilon)\hat{a}_k$  with $d=8$, $k=2\times 10^3$ and $n=2^{20}$. For this choice of $d,k,n$ there are twelve levels of dyadic splits resulting in $a_i$ for $i=2^j$ for $j = 0,\ldots,\lceil\log_2k \rceil=12$.  The highest curve corresponds to $a_i$ for $i=2^{12}$, the next highest curve corresponds to $i=2^{11}$, and continuing in decreasing magnitude with decreasing subscript values.}
\label{fig:plot_a_vs_eps}
\end{figure}

Fig. \ref{fig:plot_a_vs_eps} shows representative values of $a_i$ from \eqref{eq:aj_system_constrained} for $a_k:=(1-\epsilon)\hat{a}_k$ as a function of $\epsilon$ for $d=8$, $k=2\times 10^3$, and $n=2^{20}$.  Each of the $a_i$ decrease
smoothly towards $d$, but with $a_i$ for smaller values if $i$ varying less than for larger values of $i$.

For fixed $0<\e<1/2$ and for small but fixed $d$, $\r^{exp}(\d;d,\e)$ in Corollary \ref{cor:prob_expander_existence2} is a function of $\d$ for each $d$ and $\e$, is a phase transition function in the $(\d,\r)$ plane. Below the curve of $\r^{exp}(\d;d,\e)$ the probability in \eqref{eq:prob_expander_existence2} goes to one exponentially in $n$ as the problem size grows. That is if $A$ is drawn at random with $d ~1$s or $d ~\pm 1$s in each column and having parameters $(k,n,N)$ that fall below the curve of $\r^{exp}(\d;d,\e)$ then we say it is from the class of matrices in Definition \ref{def:1set_expansion} with probability approaching one exponentially in $n$. In terms of $|\Gamma(X)|$ for $X\subset U$ and $|X|\leq k$, Corollary \ref{cor:prob_expander_existence2} say that the probability $|\Gamma(X)| \geq (1-\e) dk$ goes to one exponentially in $n$ if the parameters of our graph lies in the region below $\r^{exp}(\d;d,\e)$. This implies that if we draw a random bipartite graphs that has parameters in the region below the curve of $\r^{exp}(\d;d,\e)$ then with probability approaching one exponentially in $n$ that graph is a lossless $(k,d,\e)$-expander.

\begin{figure}[h]\vspace{-3mm}
\centering
\includegraphics[width=0.5\textwidth]{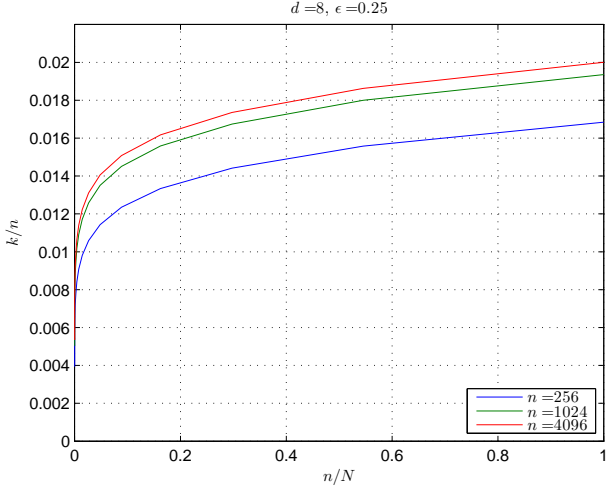}
\caption{Phase transition plots of $\r^{exp}(\d;d,\e)$ for fixed $d=8$ and $\e = 1/4$ with $n$ varied.}
\label{fig:phase_transition_n}
\end{figure}

Fig. \ref{fig:phase_transition_n} shows a plot of what $\r^{exp}(\d;d,\e)$ converge to for different values of $n$ with $\e$ and $d$ fixed; Fig. \ref{fig:phase_transition_d} shows a plot of what $\r^{exp}(\d;d,\e)$ converge to for different values of $d$ with $\e$ and $n$ fixed; while Fig. \ref{fig:phase_transition_e} shows plots of what $\r^{exp}(\d;d,\e)$ converge to for different values of $\e$ with $n$ and $d$ fixed. It is interesting to note how increasing $d$ increases the phase transition up to a point then it decreases the phase transition. Essentially beyond $d=16$ there is no gain in increasing $d$. This vindicates the use of small $d$ in most of the numerical simulations involving the class of matrices considered here. Note the vanishing sparsity as the problem size $(k,n,N)$ grows while $d$ is fixed to a small value of $8$. In their GPU implementation \cite{blanchard2012gpu} Blanchard and Tanner observed that SSE with $d=7$ has a phase transition for numerous sparse approximation algorithms that is consistent with dense Gaussian matrices, but with dramatically faster implementation.

\begin{figure}[h]\vspace{-3mm}
\centering
\includegraphics[width=0.5\textwidth]{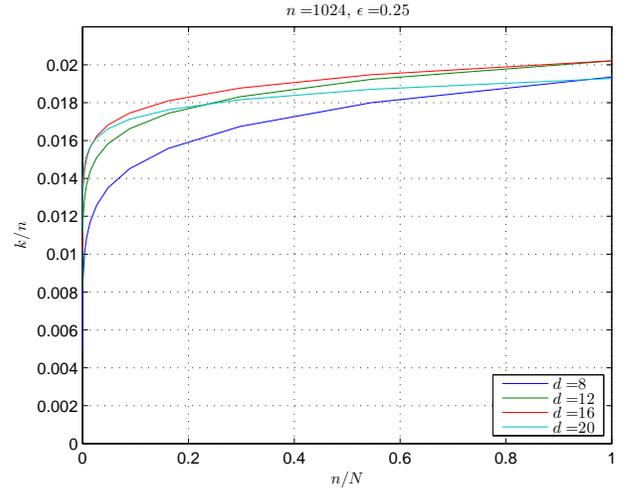}
\caption{Phase transition plots of $\r^{exp}(\d;d,\e)$ for fixed $\e=1/6$ and $n = 2^{10}$ with $d$ varied.}
\label{fig:phase_transition_d}
\end{figure}

As afore-stated Corollary \ref{cor:prob_expander_existence2} follows from Theorem \ref{thm:prob_bound_1set_expansion}, alternatively Corollary \ref{cor:prob_expander_existence2} can be arrived at based on probabilistic constructions of expander graphs given by Proposition \ref{pro:para_opt} below. This proposition and its proof can be traced back to Pinsker in \cite{pinsker1973complexity} but for more recent proofs see \cite{berinde2009advances,capalbo2002randomness}.
\begin{proposition}
\label{pro:para_opt}
 For any $N/2 \geq k \geq 1, ~\epsilon > 0$ there exists a lossless $(k,d,\epsilon)$-expander with $$\displaystyle d = \mathcal{O}\left(\log\left(N/k\right)/\epsilon\right) \quad \mbox{and} \quad n = \mathcal{O}\left(k\log\left(N/k\right)/\epsilon^2\right).$$
\end{proposition}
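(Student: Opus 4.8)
The plan is to prove existence by the probabilistic method, following the classical argument of Pinsker and Bassalygo. I draw a random bipartite graph $G=(U,V,E)$ with $|U|=N$, $|V|=n$ and regular left degree $d$ by letting each of the $d$ edges leaving a left vertex pick its right endpoint uniformly at random (with replacement), independently across all edges; this is precisely the ensemble underlying the SE/SSE matrices of Definition \ref{def:1set_expansion}. I will show that, for $d$ and $n$ of the claimed orders, the probability that $G$ fails to be a lossless $(k,d,\e)$-expander in the sense of Definition \ref{def:expander} is strictly less than $1$, so that some graph in the support of the success event witnesses the proposition.

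First I fix a left set $X\subset U$ with $|X|=s\le k$ and bound the probability of the bad event $|\G(X)|<(1-\e)ds$. The key observation is that $|\G(X)|$ equals $ds$ minus the number of \emph{collisions} incurred when the $ds$ edges of $X$ are revealed one at a time, a collision being an edge that lands on an already-occupied right vertex. Conditioned on any history, the next edge collides with probability at most $(\text{rows already hit})/n\le ds/n$, so for any prescribed set of $m$ trials the probability that all of them collide is at most $(ds/n)^{m}$. Taking $m=\e ds$ and a union bound over the $\binom{ds}{m}$ choices of colliding trials yields
\begin{equation*}
\hbox{Prob}\left(|\G(X)|<(1-\e)ds\right)\le\binom{ds}{\e ds}\left(\frac{ds}{n}\right)^{\e ds}\le\left(\frac{eds}{\e n}\right)^{\e ds}.
\end{equation*}

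Second, I union-bound over all $X$ of every size $1\le s\le k$ via $\binom{N}{s}\le(eN/s)^{s}$, so the total failure probability is at most $\sum_{s=1}^{k}\left(\frac{eN}{s}\right)^{s}\left(\frac{eds}{\e n}\right)^{\e ds}$. Choosing $n=Cdk/\e$ with $C$ a fixed constant makes the bracketed collision factor at most $es/(Ck)$, so the $s$-th root of the $s$-th summand is $\frac{eN}{s}\left(\frac{es}{Ck}\right)^{\e d}$. Since $\e d\ge1$, the logarithm of this root, as a function of $s\in[1,k]$, is increasing in $\log(s/k)$ and hence maximized at $s=k$; imposing that the maximal root be at most $1/2$ reduces to a bound of the form $\e d(\log C-1)\ge\log(N/k)+O(1)$, which holds once $C>e$ is fixed and $d=O\!\left(\log(N/k)/\e\right)$. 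Then every root lies below $1/2$, the geometric series sums to less than $1$, and $n=Cdk/\e=O\!\left(k\log(N/k)/\e^{2}\right)$, as claimed. Since the failure probability is strictly below $1$, a lossless $(k,d,\e)$-expander with the stated parameters exists.

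The main obstacle is not any single estimate but securing the sharper $\log(N/k)$ (rather than $\log N$) dependence in $d$: this hinges on keeping the set-size $s$ explicit in both the enumeration factor $\binom{N}{s}$ and the collision factor, and then verifying that their trade-off is worst exactly at $s=k$, so that the two logarithmic terms collapse into a single $\log(N/k)$. A secondary subtlety is that the per-edge collision probabilities are not independent; the sequential, conditional bound above is what makes the union bound over colliding trials legitimate, and it must be stated carefully to avoid an illegitimate appeal to a product of marginals.
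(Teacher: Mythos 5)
Your proof is correct and follows essentially the same route as the paper: the paper itself defers the proof of this proposition to Pinsker/Bassalygo and to \cite{berinde2009advances,capalbo2002randomness}, but its own appendix proof of Corollary \ref{cor:prob_expander_existenceBI} starts from exactly your union bound $\binom{N}{s}\binom{ds}{\epsilon ds}(ds/n)^{\epsilon ds}$ over collision patterns and subsets. The only difference is bookkeeping: you solve the bound explicitly for $d$ and $n$ to extract the stated orders (correctly locating the worst case at $s=k$ so that $\log(N/k)$ rather than $\log N$ appears), whereas the paper converts the same bound into an asymptotic phase transition via Stirling's inequality.
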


\begin{figure}[h]\vspace{-3mm}
\centering
\includegraphics[width=0.5\textwidth]{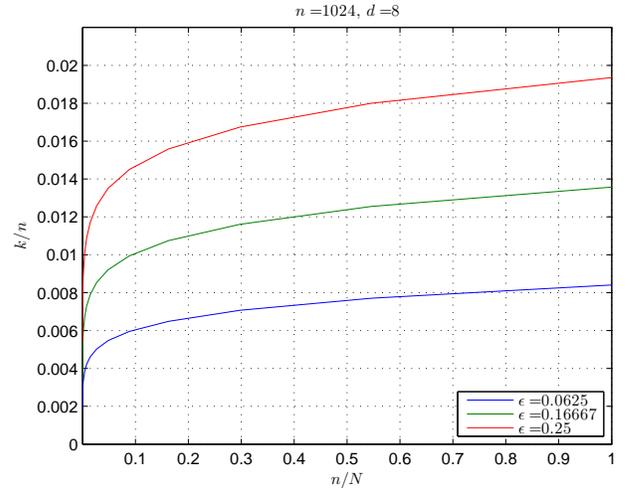}
\caption{Phase transition plots of $\r^{exp}(\d;d,\e)$ for fixed $d=8$ and $n = 2^{10}$ with $\e$ varied.}
\label{fig:phase_transition_e}
\end{figure}

To put our results in perspective, we compare them to the alternative construction in \cite{berinde2009advances} which led to Corollary \ref{cor:prob_expander_existenceBI},  whose proof is given in Section \ref{sec:prf:expander_existence} of the Appendix. Fig. \ref{fig:phaseT} compares the phase transitions resulting from our construction to that presented in \cite{berinde2009advances}, but we must point out however, that the proof in \cite{berinde2009advances} was not aimed for a tight bound.

\begin{figure}[h]\vspace{-3mm}
\centering
\includegraphics[width=0.5\textwidth]{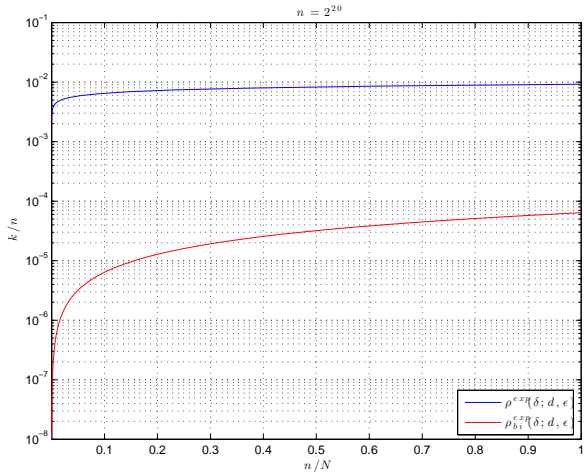}
\caption{A comparison of $\r^{exp}$ in Theorem \ref{thm:prob_bound_1set_expansion} to $\r_{bi}^{exp}$ of Corollary \ref{cor:prob_expander_existenceBI} derived using the construction based on Corollary \ref{cor:prob_expander_existenceBI}.}
\label{fig:phaseT}
\end{figure}

\begin{corollary}\vspace{-2mm}
\label{cor:prob_expander_existenceBI}
Consider a bipartite graph $G=(U,V,E)$ with left vertices $|U|=N$, right vertices $|V|=n$ and left degree $d$. Fix $0<\e<1/2$ and $d$, as $(k,n,N) \rightarrow \infty$ while $k/n \rightarrow \r \in (0,1)$ and $n/N \rightarrow \d \in (0,1)$ then for $\r < (1-\g)\r_{bi}^{exp}(\d;d,\e)$ and $\g>0$
\begin{equation}
\label{eq:prob_expander_existenceBI}
\hbox{Prob}\left(G ~\mbox{fails to be an expander}\right) \rightarrow 0
\end{equation}
exponentially in $n$, where $\r_{bi}^{exp}(\d;d,\e)$ is the largest limiting value of $k/n$ for which
\begin{equation}
 \label{eq:rho_expBI}
    \Psi\left(k,n,N;d,\e\right) = 0
\end{equation}
with $\displaystyle \Psi\left(k,n,N;d,\e\right) = \mbox{H}\left(\frac{k}{N}\right) + \frac{dk}{N}\mbox{H}\left(\e\right) + \frac{\e dk}{N} \log\left(\frac{dk}{n}\right)$.
\end{corollary}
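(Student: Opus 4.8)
The plan is to establish the bound \eqref{eq:prob_expander_existenceBI} by the classical first-moment (union bound) argument over subsets of left vertices, following the collision-counting construction of \cite{berinde2009advances}, rather than the dyadic-splitting analysis underlying Theorem \ref{thm:prob_bound_1set_expansion}. By Definition \ref{def:expander}, $G$ fails to be a lossless $(k,d,\e)$-expander precisely when some $X\subset U$ with $|X|\le k$ has $|\G(X)| < (1-\e)d|X|$. I would therefore first fix a set $X$ with $|X|=s$, bound the probability that this single set fails to expand, and then remove the conditioning on $X$ by a union bound over all admissible sets.

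For the single-set estimate I would regard the $ds$ edges emanating from $X$ as drawn independently and uniformly from the $n$ right vertices (the model of Definition \ref{def:1set_expansion}), and recall from Definition \ref{def:neighbours} that $|\G(X)|$ counts the distinct right vertices that are hit. Writing $c$ for the number of edges that land on an already-occupied vertex, one has $|\G(X)| = ds - c$, so the failure event $|\G(X)| < (1-\e)ds$ is exactly the event $c \ge \e ds$ of having at least $\e ds$ such \emph{collisions}. Revealing the edges one at a time, the conditional probability that any particular edge collides is at most (the number of currently occupied vertices)$/n \le ds/n$; bounding the event ``at least $\e ds$ collisions'' by a choice of which $\e ds$ edges collide then gives
\[
\mbox{Prob}\left(|\G(X)| < (1-\e)ds\right) \;\le\; \binom{ds}{\e ds}\left(\frac{ds}{n}\right)^{\e ds}.
\]

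Next I would apply a union bound over all $X$ of every size $1\le s\le k$. Since there are $\binom{N}{s}$ sets of size $s$, this yields $\mbox{Prob}\left(G\mbox{ fails}\right) \le \sum_{s=1}^{k}\binom{N}{s}\binom{ds}{\e ds}(ds/n)^{\e ds}$. Bounding each binomial coefficient by the entropy estimate $\binom{m}{\a m}\le\exp\left(m\,\mbox{H}(\a)\right)$, the $\frac1N\log$ of the $s$-th summand becomes exactly $\mbox{H}(s/N) + \frac{ds}{N}\mbox{H}(\e) + \frac{\e ds}{N}\log(ds/n) = \Psi(s,n,N;d,\e)$ from \eqref{eq:rho_expBI}. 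Absorbing the factor $k$ from the sum and the Stirling polynomial corrections into subexponential terms, this produces $\mbox{Prob}(G\mbox{ fails}) \le \mbox{poly}\cdot\exp\left(N\max_{s\le k}\Psi(s,n,N;d,\e)\right)$.

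Finally I would pass to the stated limit, in which $k/N\to\r\d$ and $ds/n\to d\r$, so that $\Psi$ converges to $\bar\Psi(\r):=\mbox{H}(\r\d) + d\r\d\,\mbox{H}(\e) + \e d\r\d\log(d\r)$. Because $\r_{bi}^{exp}$ is defined as the largest root of $\bar\Psi$ and a short sign analysis (valid in the relevant regime $d\e>1$) shows $\bar\Psi<0$ throughout $(0,\r_{bi}^{exp})$, the hypothesis $\r<(1-\g)\r_{bi}^{exp}$ forces the limiting exponent to be strictly negative, uniformly over $s\le k$. The bound then decays like $\exp(N\bar\Psi)\to 0$, and since $N\asymp n/\d$ this decay is exponential in $n$, giving \eqref{eq:prob_expander_existenceBI}. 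The step I expect to be the main obstacle is making the collision estimate fully rigorous: the per-edge collision probability is conditional on the previously revealed edges and must be controlled by $ds/n$ uniformly along every revelation order, and one must confirm that $\max_{s\le k}\Psi$ is governed by the behaviour near $s=k$ so that the single threshold $\r_{bi}^{exp}$ indeed suffices; by contrast the entropy and Stirling estimates are routine.
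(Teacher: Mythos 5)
Your proposal is correct and follows essentially the same route as the paper's own proof in Section \ref{sec:prf:expander_existence}: the per-set collision bound $\binom{ds}{\e ds}(ds/n)^{\e ds}$, the union bound over $\binom{N}{s}$ sets of each size $s\le k$, Stirling/entropy estimates yielding the exponent $\Psi(s,n,N;d,\e)$, and dominance of the $s=k$ term before passing to the proportional limit. The only cosmetic difference is that you invoke the bare entropy bound $\binom{m}{\a m}\le\exp(m\,\hbox{H}(\a))$ where the paper keeps the Stirling polynomial prefactors explicitly, which does not change the argument.
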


\subsection{Key Lemmas}\label{sec:tools}
The following set of lemmas, propositions and corollaries form the building blocks of the proof of our main results to be presented in Section \ref{sec:proof}.

For one fixed set of columns of $A$, denoted $A_S$,  the probability in \eqref{eq:prob_bound_1set_expansion} can be understood as the cardinality of the unions of nonzeros in the columns of $A_S$.  Our analysis of this probability follows from a nested unions of subsets using a {\em dyadic splitting} technique. Given a starting set of columns we recursively split the number of columns from this set, and the resulting sets, into two sets composed of the ceiling and floor of half of the number of columns of the set we split. In other words, given a starting support set $S$ (referred to as the parent set), we split it into two disjoint sets of size $\lceil s/2 \rceil$ and $\lfloor s/2 \rfloor$ (referred to as children). Then we union the nonzero elements in the columns indexed by the children sets to get $A_{\lceil s/2 \rceil}$ and $A_{\lfloor s/2 \rfloor}$. We continue this process until at a level when the cardinalities in each child set is at most two. Resulting from this type of splitting is a regular binary tree where the size of each child is either the ceiling or the floor of the size of it's parent set. The root of the binary is our starting set $S$ or $A_s$ which we refer to as {\em level 0}. Then, the splitting, as described above proceeds till {\em level} $\lceil \log_2\rceil - 1$, see Fig. \ref{fig:splitting}.

\begin{figure}[h]
\centering
\includegraphics[width=0.5\textwidth]{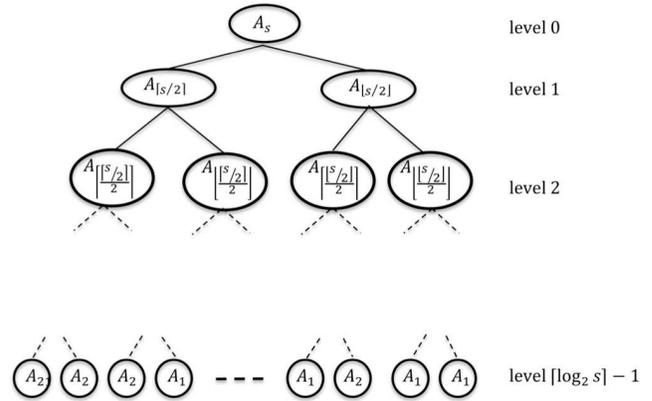}
\caption{The binary splitting of the support $S$ that indexes the columns of $A_S$ resulting into a regular binary tree. Here we show the set of neighbors, $A_s$, and the resulting sets from the splitting. Each child has either the ceiling or the floor of it's parent's number of columns. The leaves of the tree, which are at {\em level} $\lceil\log_2s\rceil - 1$ of the tree, have sets that are composed of union of nonzero elements in at most two columns, $A_2$.}
\label{fig:splitting}
\end{figure}

The probability of interest becomes a product of the probabilities involving all the children from the dyadic splitting of the original set $S$, the index of the union of the nonzero elements forming $A_s$.

The computation of the probability in \eqref{eq:prob_bound_1set_expansion} involves the computation of the probability of the cardinality of the intersection of two sets. This probability is given by Lemma \ref{lem:intersect_prob} and Corollary \ref{cor:intersect_prob} below.

\begin{lemma}
\label{lem:intersect_prob}
Let $B,~B_1,~B_2 \subset [n]$ where $\left|B_1\right|=b_1$, $\left|B_2\right|=b_2$, $B = B_1\cup B_2$ and $|B|=b$. Also let $B_1$ and $B_2$ be drawn uniformly at random, independent of each other, and define $\hbox{P}_n\left(b,b_1,b_2\right) := \hbox{Prob}\left(\left|B_1\cap B_2\right| = b_1+b_2-b\right)$, then 
\begin{equation}
\label{eq:intersect_prob}
\hbox{P}_n\left(b,b_1,b_2\right) = \binom{b_1}{b_1+b_2-b} \binom{n-b_1}{b-b_1} \binom{n}{b_2}^{-1}.
\end{equation}
\end{lemma}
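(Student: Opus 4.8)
The plan is to recognize $\text{P}_n(b,b_1,b_2)$ as a hypergeometric probability and to evaluate it by a direct counting argument. First I would record the identity $|B_1\cup B_2| = |B_1| + |B_2| - |B_1\cap B_2|$, so that the event $\{|B_1\cap B_2| = b_1+b_2-b\}$ is literally the event $\{|B|=b\}$. Writing $t := b_1+b_2-b$ for the target intersection size, the quantity to compute is just $\text{Prob}\left(|B_1\cap B_2| = t\right)$ over the independent uniform draws of $B_1$ and $B_2$.

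The key reduction is to condition on $B_1$. Since $B_1$ and $B_2$ are independent, each uniform over the $b_1$- and $b_2$-subsets of $[n]$, and since the uniform distribution is invariant under relabelling the ground set $[n]$, the conditional probability $\text{Prob}\left(|B_1\cap B_2| = t \mid B_1\right)$ does not depend on which particular $b_1$-subset $B_1$ happens to be. Hence it equals the unconditional probability, and it suffices to fix one arbitrary set $B_1$ of size $b_1$ and compute the probability over the random draw of $B_2$ alone. With $B_1$ fixed, $B_2$ is one of $\binom{n}{b_2}$ equally likely $b_2$-subsets. To realise $|B_1\cap B_2| = t$, exactly $t$ of the $b_2$ elements of $B_2$ must land in the $b_1$ elements of $B_1$ and the remaining $b_2-t$ must land in the $n-b_1$ elements of $[n]\setminus B_1$, giving $\binom{b_1}{t}\binom{n-b_1}{b_2-t}$ favourable subsets. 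Dividing by the total count yields $\text{P}_n(b,b_1,b_2) = \binom{b_1}{t}\binom{n-b_1}{b_2-t}\binom{n}{b_2}^{-1}$, and substituting $t = b_1+b_2-b$ together with the simplification $b_2-t = b-b_1$ produces exactly \eqref{eq:intersect_prob}.

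I do not expect a genuine obstacle here. The only point requiring care is the symmetry reduction that permits treating $B_1$ as fixed, and the bookkeeping that the binomial coefficients vanish precisely outside the feasible range $\max(0,\,b_1+b_2-n) \le t \le \min(b_1,b_2)$; this is a feature rather than a difficulty, since it automatically assigns probability zero to values of $b$ that cannot occur. Everything else is the standard hypergeometric count, and I would present it in the two steps above (conditioning on $B_1$, then counting $B_2$) so that the appearance of each of the three binomial factors is transparent.
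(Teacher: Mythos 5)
Your proposal is correct and follows essentially the same route as the paper: fix $B_1$ (the paper says ``without loss of generality consider drawing $B_1$ first''), count the favourable $b_2$-subsets as $\binom{b_1}{t}\binom{n-b_1}{b_2-t}$, divide by $\binom{n}{b_2}$, and substitute $t=b_1+b_2-b$. Your explicit justification of the conditioning step via relabelling invariance, and the remark about the binomial coefficients vanishing outside the feasible range, are slightly more careful than the paper's one-line argument but do not change the substance.
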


\begin{proof}
Given $B_1,B_2 \subset [n]$ where $\left|B_1\right|=b_1$ and $\left|B_2\right|=b_2$ are drawn uniformly at random, independent of each other, we calculate $\hbox{Prob}\left(\left|B_1\cap B_2\right| = z\right)$ where $z=b_1+b_2-b$. Without loss of generality consider drawing $B_1$ first, then the probability that the draw of $B_2$ intersecting $B_1$ will have cardinality $z$, i.e. $\hbox{Prob}\left(\left|B_1\cap B_2\right| = z\right)$, is the size of the event of drawing $B_2$ intersecting $B_1$ by $z$ divided by the size of the sample space of drawing $B_2$ from $[n]$, which are given by $\binom{b_1}{z} \cdot \binom{n-b_1}{b_2-z}$ and $\binom{n}{b_2}$ respectively. Rewriting the division as a product with the divisor raised to a negative power and replacing $z$ by $b_1+b_2-b$ gives \eqref{eq:intersect_prob}.
\end{proof}

\begin{corollary}
\label{cor:intersect_prob}
If two sets, $B_1,B_2 \subset [n]$ are drawn uniformly at random, independent of each other, and $B = B_1\cup B_2$
\begin{multline}
\label{eq:intersect_prob2}
 \hbox{Prob}\left(|B| = b\right) = \hbox{P}_n\left(b,b_1,b_2\right) \times \\ \hbox{Prob}\left(|B_1| = b_1\right) \cdot \hbox{Prob}\left(|B_2|=b_2\right)
\end{multline}
\end{corollary}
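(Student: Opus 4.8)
The plan is to condition on the sizes of the two independently drawn sets $B_1$ and $B_2$ and then invoke Lemma \ref{lem:intersect_prob} to pin down the conditional probability that their union has a prescribed size. First I would observe that the event $\{|B|=b\}$, where $B=B_1\cup B_2$, is equivalent to the event that the intersection has cardinality exactly $b_1+b_2-b$, once we have fixed $|B_1|=b_1$ and $|B_2|=b_2$; this is just the inclusion--exclusion identity $|B_1\cup B_2| = |B_1|+|B_2|-|B_1\cap B_2|$. So conditioning on the pair of sizes converts the union-size event into an intersection-size event of the kind Lemma \ref{lem:intersect_prob} is built to handle.

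The key computational step is to recognize that Lemma \ref{lem:intersect_prob} already computes precisely $\hbox{Prob}\left(|B_1\cap B_2| = b_1+b_2-b \,\middle|\, |B_1|=b_1,\, |B_2|=b_2\right)$, namely $\hbox{P}_n(b,b_1,b_2)$. Indeed, in the proof of that lemma the counting argument fixes the realized sets of sizes $b_1$ and $b_2$ and computes the intersection probability, which is exactly a conditional statement. Therefore I would write the joint event as
\begin{multline*}
\hbox{Prob}\left(|B|=b,\ |B_1|=b_1,\ |B_2|=b_2\right) = \hbox{P}_n(b,b_1,b_2) \\ \times \hbox{Prob}\left(|B_1|=b_1\right)\cdot\hbox{Prob}\left(|B_2|=b_2\right),
\end{multline*}
using the independence of $B_1$ and $B_2$ to factor the probability of the size pair into the product of the marginal size probabilities. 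This is the statement \eqref{eq:intersect_prob2}, read as an identity for the joint probability once the target sizes $b_1,b_2$ of the children are regarded as fixed.

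The one subtlety, which I expect to be the main point requiring care rather than a genuine obstacle, is the precise reading of \eqref{eq:intersect_prob2}: the left side $\hbox{Prob}(|B|=b)$ is to be understood together with the fixed choices of $b_1$ and $b_2$, so that the equation relates the joint probability of the triple of sizes to the product of $\hbox{P}_n(b,b_1,b_2)$ with the two marginals. The factorization itself then follows immediately from two facts already available: independence of the draws gives $\hbox{Prob}(|B_1|=b_1,|B_2|=b_2) = \hbox{Prob}(|B_1|=b_1)\cdot\hbox{Prob}(|B_2|=b_2)$, and Lemma \ref{lem:intersect_prob} supplies the conditional factor $\hbox{P}_n(b,b_1,b_2)$ via the chain rule $\hbox{Prob}(A\cap C) = \hbox{Prob}(A\mid C)\,\hbox{Prob}(C)$. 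Combining these two displays completes the proof, and this conditional/factored form is exactly what the subsequent dyadic-splitting recursion will iterate: each union of two children contributes one factor of $\hbox{P}_n$ times the probabilities of the two child sizes, so that the full probability of interest becomes the advertised product over the binary tree.
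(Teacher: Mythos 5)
Your proposal is correct and follows essentially the same route as the paper's own proof: inclusion--exclusion converts the union-size event into the intersection-size event of Lemma \ref{lem:intersect_prob}, and independence of the two draws factors the size probabilities. If anything, your explicit chain-rule reading of \eqref{eq:intersect_prob2} as a joint/conditional identity is more careful than the paper's one-line assertion of the same factorization.
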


\begin{proof}
$\hbox{Prob}\left(|B| = b\right) = \hbox{Prob}\left(|B_1\cup B_2| = b\right)$ by definition. As a consequence of the inclusion-exclusion principle
\begin{multline}
\label{eq:intersect_prob3}
\hbox{Prob}\left(|B_1\cup B_2| = b\right) = \hbox{Prob}\left(|B_1\cap B_2| = b_1+b_2-b\right) \\ \times \hbox{Prob}\left(|B_1| = b_1\right) \cdot \hbox{Prob}\left(|B_2|=b_2\right).
\end{multline}
We use Lemma \ref{lem:intersect_prob} to replace $\hbox{Prob}\left(|B_1\cap B_2| = b_1+b_2-b\right)$ in \eqref{eq:intersect_prob3} by $\hbox{P}_n\left(b,b_1,b_2\right)$ leading to the required result.
\end{proof}

In the binary tree resulting from our dyadic splitting scheme the number of columns in the two children of a parent node is the ceiling and the floor of half of the number of columns of the parent node. At each level of the split the number of columns of the children of that level differ by one. The enumeration of these two quantities at each level of the splitting process is necessary in the computation of the probability of \eqref{eq:prob_bound_1set_expansion}. We state and prove what we refer to a {\em dyadic splitting lemma}, Lemma \ref{lem:num_size_split}, which we later use to enumerate these two quantities - the sizes (number of columns) of the children and the number of children with a given size at each level of the split.
\begin{lemma}
\label{lem:num_size_split}
Let $S$ be an index set of cardinality $s$.  For any level $j$ of the dyadic splitting, $j = 0,\ldots,\lceil \log_2 s \rceil -1$, the set $S$ is decomposed into disjoint sets each having cardinality $Q_j = \big{\lceil}\frac{s}{2^j}\big{\rceil}$ or $R_j = Q_j - 1$. Let $q_j$ sets have cardinality $Q_j$ and $r_j$ sets have cardinality $R_j$, then
\begin{align}
\label{eq:split_num_size}
q_j = s - 2^j\cdot\Big{\lceil}\frac{s}{2^j}\Big{\rceil} + 2^j, \quad \mbox{and} \quad r_j = 2^j - q_j.
\end{align}
\end{lemma}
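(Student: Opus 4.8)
The plan is to derive the two identities from two conservation principles of the dyadic tree: the total number of sets at each level, and the total number of indices at each level. The second identity, $r_j = 2^j - q_j$, is immediate once I argue that level $j$ contains exactly $2^j$ sets; the formula for $q_j$ then follows by equating two expressions for the total cardinality (the sum of all set sizes) at level $j$.

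First I would establish the two structural facts that drive everything. (i) \emph{There are exactly $2^j$ sets at level $j$.} Each set with at least two elements is split into two nonempty children (ceiling and floor of half its size), so the tree is a full binary tree; starting from the single root $S$ at level $0$ and doubling at each level yields $2^j$ nodes at level $j$. I would note that no degenerate (empty) child appears for $j$ in the stated range, since $Q_j=\lceil s/2^j\rceil\ge 2$ for all $j\le\lceil\log_2 s\rceil-1$ (as $Q_j$ is nonincreasing in $j$ and equals $2$ at the last level), so every split in this range is genuine and $q_j+r_j=2^j$. (ii) \emph{At level $j$ every set has size $Q_j=\lceil s/2^j\rceil$ or $R_j=Q_j-1$.} I would prove this by induction on $j$, the base case $j=0$ being the single set $S$ of size $s=\lceil s/2^0\rceil$. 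For the inductive step I would use the nested floor/ceiling identities $\lfloor\lfloor s/2^j\rfloor/2\rfloor=\lfloor s/2^{j+1}\rfloor$ and $\lceil\lceil s/2^j\rceil/2\rceil=\lceil s/2^{j+1}\rceil$, together with the observation that the two admissible parent sizes at level $j$ are consecutive integers, so that the four a priori possible children sizes collapse, by a short parity check, into the two consecutive values $\lfloor s/2^{j+1}\rfloor$ and $\lceil s/2^{j+1}\rceil$.

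With these in hand I would finish by conservation of cardinality: since the children at any level partition $S$, the set sizes at level $j$ sum to $s$, i.e. $q_jQ_j+r_jR_j=s$. Substituting $R_j=Q_j-1$ and $r_j=2^j-q_j$ turns this into the linear equation $q_j+2^j(Q_j-1)=s$, which I would solve to obtain $q_j=s-2^jQ_j+2^j=s-2^j\lceil s/2^j\rceil+2^j$, and then $r_j=2^j-q_j$ as claimed. I would sanity-check the edge cases, e.g. $j=0$ giving $q_0=1,\ r_0=0$, and $2^j\mid s$ forcing $R_j$ to be absent with $q_j=2^j$.

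The main obstacle is step (ii): verifying that only the two consecutive sizes $Q_j$ and $R_j$ occur and that their common ceiling is exactly $\lceil s/2^j\rceil$. This is where the parity bookkeeping lives — one must check that splitting a set of the \emph{smaller} admissible size $\lfloor s/2^j\rfloor$ and a set of the \emph{larger} size $\lceil s/2^j\rceil$ never produces a third distinct value, which amounts to confirming that the cross terms $\lceil\lfloor s/2^j\rfloor/2\rceil$ and $\lfloor\lceil s/2^j\rceil/2\rfloor$ also lie in $\{\lfloor s/2^{j+1}\rfloor,\lceil s/2^{j+1}\rceil\}$. Once the size structure is pinned down, the counting and conservation steps are routine linear algebra.
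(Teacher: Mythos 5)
Your proof is correct and follows essentially the same route as the paper's: both rest on the two conservation identities $q_j + r_j = 2^j$ and $q_jQ_j + r_jR_j = s$, which together with $R_j = Q_j-1$ form a linear system whose unique solution is \eqref{eq:split_num_size}. The only differences are cosmetic — the paper parametrizes $Q_j$ and $R_j$ via auxiliary integers $m_1,m_2$ before solving, whereas you substitute directly, and you additionally supply the inductive floor/ceiling verification that the two sizes at level $j$ are exactly $\lceil s/2^j\rceil$ and $\lceil s/2^j\rceil-1$, a structural fact the paper asserts without proof.
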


\begin{proof}
At every node on the binary tree the children have either of two sizes (number of columns) of the floor and ceiling of half the sizes of their parents and these sizes differ at most by 1, that is at level $j$ of the splitting we have at most 2 different sizes. We define these sizes, $Q_j$ and $R_j$, in terms of two arbitrary integers, $m_1$ and $m_2$, as follows.
\begin{equation}
\label{eq:split_num_cols}
Q_j = \frac{s}{2^j} + \frac{m_1}{2^j} \quad \mbox{and} \quad R_j = \frac{s}{2^j} + \frac{m_2}{2^j}.
\end{equation}
Because of the nature of our splitting scheme we have $R_j = Q_j - 1$ which implies that $m_1$ and $m_2$ must satisfy the relation
\begin{equation}
\label{eq:split_num_cols1}
\frac{m_1-m_2}{2^j} = 1.
\end{equation}
Now let $q_j$ and $r_j$ be the number of children with $Q_j$ and $R_j$ number of columns respectively. Therefore,
\begin{equation}
\label{eq:split_num_cols2}
q_j + r_j = 2^j.
\end{equation}
At each level $j$ of the splitting the following condition must be satisfied
\begin{equation}
\label{eq:split_num_cols3}
q_j \cdot Q_j + r_j \cdot R_j = s.
\end{equation}
To find $m_1, ~m_2, ~q_j$ and $r_j$, from \eqref{eq:split_num_cols} we substitute for $Q_j$ and $R_j$ in \eqref{eq:split_num_cols3} to have
\begin{align}
\label{eq:split_num_cols4a}
q_j \cdot \left(\frac{s}{2^j} + \frac{m_1}{2^j}\right) + r_j \cdot \left(\frac{s}{2^j} + \frac{m_2}{2^j}\right) & = s,\\
\label{eq:split_num_cols4b}
2^{-j} q_j s + 2^{-j} q_j m_1 + 2^{-j} r_j s + 2^{-j} r_j m_2 & = s,\\
\label{eq:split_num_cols4c}
2^{-j} \left(q_j + r_j\right) s + 2^{-j} \left(q_j m_1 + r_j m_2\right) & = s,\\
\label{eq:split_num_cols4d}
s + 2^{-j} \left(q_j m_1 + r_j m_2\right) & = s,\\
\label{eq:split_num_cols4e}
q_j m_1 + r_j m_2 & = 0.
\end{align}
We expanded the brackets from \eqref{eq:split_num_cols4a} to \eqref{eq:split_num_cols4b} and simplified from \eqref{eq:split_num_cols4b} to \eqref{eq:split_num_cols4c}. We simplify the first term of \eqref{eq:split_num_cols4c} using \eqref{eq:split_num_cols2} to get \eqref{eq:split_num_cols4d} and we simplified this to get \eqref{eq:split_num_cols4e}.

Equation \eqref{eq:split_num_cols1} yields
\begin{equation}
\label{eq:split_num_cols5}
m_1 = m_2 + 2^j.
\end{equation}
Substituting this in \eqref{eq:split_num_cols4e} yields
\begin{align}
\label{eq:split_num_cols6a}
q_j \left(m_2 + 2^j\right) + r_j m_2 & = 0,\\
\label{eq:split_num_cols6b}
\left(q_j + r_j\right) m_2 + 2^j q_j & = 0,\\
\label{eq:split_num_cols6c}
2^j\left(q_j + m_2\right) & = 0.
\end{align}
From \eqref{eq:split_num_cols6a} to \eqref{eq:split_num_cols6b} we expanded the brackets and rearranged the terms and used \eqref{eq:split_num_cols2} to simplify to \eqref{eq:split_num_cols6c}. Using \eqref{eq:split_num_cols6c} and \eqref{eq:split_num_cols5} respectively we have
\begin{equation}
\label{eq:split_num_cols7}
m_2 = -q_j \quad \mbox{and} \quad m_1 = 2^j - q_j = r_j.
\end{equation}
Substituting this in \eqref{eq:split_num_cols} we have
\begin{equation}
\label{eq:split_num_cols8}
Q_j = \frac{s-q_j}{2^j} + 1 \quad \mbox{and} \quad R_j = \frac{s-q_j}{2^j}.
\end{equation}
Equating this value of $Q_j$ to its defined value in the statement of the lemma gives
\begin{equation}
\label{eq:split_num_cols9}
\frac{s-q_j}{2^j} + 1 = \Big{\lceil}\frac{s}{2^j}\Big{\rceil} \quad \Rightarrow \quad q_j = s - 2^j\cdot\Big{\lceil}\frac{s}{2^j}\Big{\rceil} + 2^j.
\end{equation}
Therefore, from \eqref{eq:split_num_cols7} we use \eqref{eq:split_num_cols9} to have
\begin{equation}
\label{eq:split_num_cols10}
r_j = 2^j - q_j \quad \Rightarrow \quad r_j = 2^j \cdot \Big{\lceil}\frac{s}{2^j}\Big{\rceil} - s,
\end{equation}
which concludes the proof.
\end{proof}

The bound in \eqref{eq:prob_bound_1set_expansion} is derived using a large deviation analysis of the nested probabilities which follow from the dyadic splitting in Corollary \ref{cor:intersect_prob}.   The large deviation analysis of \eqref{eq:intersect_prob} at each stage involves its large deviation exponent $\psi_n(\cdot)$, which follows from
Stirling's inequality bounds on the combinatorial product of \eqref{eq:intersect_prob}. Lemma \ref{lem:psi_n_behaviour}  establishes a few properties of $\psi_n(\cdot)$ while Lemma \ref{lem:Psi_n_bound} shows how the various $\psi_n(\cdot)$'s at a given dyadic splitting level can be combined into a relatively simple expression.

\begin{lemma}
\label{lem:psi_n_behaviour}
Define
\begin{multline}
\label{eq:psi_n_def}
\psi_n(x,y,z) :=y\cdot\hbox{H}\left(\frac{x-z}{y}\right) + (n-y)\cdot\hbox{H}\left(\frac{x-y}{n-y}\right) \\
- n\cdot\hbox{H}\left(\frac{z}{n}\right)
,
\end{multline}
then for $n>x>y$ we have that
\begin{align}
\label{eq:psi_n_property1}
& \mbox{for } y>z \quad \psi_n(x,y,y) \leq \psi_n(x,y,z) \leq \psi_n(x,z,z); \\
\label{eq:psi_n_property2}
& \mbox{for } x>z \quad \psi_n(x,y,y) > \psi_n(z,y,y); \\
 \label{eq:psi_n_property3}
 & \mbox{for } 1/2< \a \leq 1 \quad \psi_n(x,y,y) < \psi_n(\a x,\a y,\a y).
\end{align}
\end{lemma}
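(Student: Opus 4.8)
The three claims are all one-variable monotonicity statements for the large-deviation exponent $\psi_n$ of \eqref{eq:psi_n_def}, so the plan is to reduce each inequality to the sign of a single partial derivative and then to check that sign on the relevant domain. The only tool required is differentiation of the entropy blocks $mH(k/m)$: with $H'(p)=\log\frac{1-p}{p}$ one has $\frac{\partial}{\partial k}\left[mH(k/m)\right]=\log\frac{m-k}{k}$ and $\frac{\partial}{\partial m}\left[mH(k/m)\right]=-\log\left(1-\frac{k}{m}\right)$. Applying these term by term to \eqref{eq:psi_n_def} gives the closed forms $\frac{\partial\psi_n}{\partial x}=\log\frac{(y-x+z)(n-x)}{(x-z)(x-y)}$, $\frac{\partial\psi_n}{\partial y}=\log\frac{y(x-y)}{(y-x+z)(n-y)}$ and $\frac{\partial\psi_n}{\partial z}=-\log\frac{(y-x+z)(n-z)}{(x-z)z}$, so that each inequality becomes the question of whether the argument of a logarithm lies above or below $1$.

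For \eqref{eq:psi_n_property1} I would chain two monotonicities. The left inequality fixes the first two arguments and lowers the third from $y$ to $z$; since $\frac{\partial\psi_n}{\partial z}\le0$ is equivalent to $(y-x+z)(n-z)\ge(x-z)z$, which rearranges to $x\le y+z-\frac{yz}{n}$, the exponent can only increase as the third argument decreases. The right inequality fixes the first and third arguments and lowers the second from $y$ to $z$; here $\frac{\partial\psi_n}{\partial y}\le0$ rearranges to the same kind of bound, $x\le 2z-\frac{z^2}{n}$, along the path. In both cases the condition is exactly that $x$ does not exceed the \emph{expected} union size of the corresponding pair of random subsets, which is precisely the regime in which Theorem \ref{thm:prob_bound_1set_expansion} is applied (all $a_{2i}$ held at or below their expected values $\widehat{a}_{2i}$ from \eqref{eq:aj_system_unconstrained}). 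Property \eqref{eq:psi_n_property2} is the analogous statement for the first argument: setting $z=y$ in the formula for $\partial\psi_n/\partial x$ leaves $\log\frac{(2y-x)(n-x)}{(x-y)^2}$, which is nonnegative exactly when $n(2y-x)\ge y^2$, i.e. when $x$ is at most the expected value $2y-y^2/n$; hence $\psi_n(\cdot,y,y)$ is increasing there and $\psi_n(x,y,y)>\psi_n(z,y,y)$ for $z<x$ in that range.

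For \eqref{eq:psi_n_property3} I would set $\phi(\alpha):=\psi_n(\alpha x,\alpha y,\alpha y)$ and prove that $\phi$ is strictly decreasing on $(1/2,1]$, which yields $\phi(\alpha)>\phi(1)$ for $\alpha<1$. The structural simplification is that the first entropy block is scale invariant, $\alpha y\,H\!\left(\frac{\alpha x-\alpha y}{\alpha y}\right)=\alpha y\,H\!\left(\frac{x-y}{y}\right)$, so it contributes only the constant $yH\!\left(\frac{x-y}{y}\right)$ to $\phi'(\alpha)$; the other two blocks depend on $\alpha$ only through $w=\frac{\alpha(x-y)}{n-\alpha y}$ and through $\frac{\alpha y}{n}$. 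Differentiating and using $H'(p)=\log\frac{1-p}{p}$ collapses $\phi'(\alpha)$ to a single expression in $H\!\left(\frac{x-y}{y}\right)$, $H(w)$, $H'(w)$ and $\log\frac{n-\alpha y}{\alpha y}$, and the goal is to show it is negative.

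This last sign analysis is the step I expect to be the main obstacle. Unlike Properties \eqref{eq:psi_n_property1}--\eqref{eq:psi_n_property2}, it does not collapse to a clean product-of-linear-factors inequality, and the hypothesis $\alpha>1/2$ must enter in an essential way, since the monotonicity of $\phi$ genuinely fails once $\alpha$ is small. I would attack it by bounding the concave entropy $H$ against its chord or tangent on the interval cut out by $\alpha\in(1/2,1]$, or alternatively by establishing $\phi'(1)\le0$ together with a sign or monotonicity control on $\phi'$ itself so that the inequality propagates down to $\alpha=1/2$. Throughout, the recurring subtlety --- and the real content behind the bookkeeping --- is keeping the arguments in the below-expected-value regime where all the logarithms have the correct sign.
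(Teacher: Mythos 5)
Your derivative computations are correct (I checked $\partial_x\psi_n=\log\frac{(y-x+z)(n-x)}{(x-z)(x-y)}$, $\partial_y\psi_n=\log\frac{y(x-y)}{(y-x+z)(n-y)}$, $\partial_z\psi_n=-\log\frac{(y-x+z)(n-z)}{(x-z)z}$, and the identity $(2y-x)(n-x)-(x-y)^2=n(2y-x)-y^2$), and for \eqref{eq:psi_n_property1} and \eqref{eq:psi_n_property2} your route is genuinely different from the paper's: the paper argues term by term, substituting $y$ for $z$ in \eqref{eq:psi_n_def} and tracking whether each entropy block grows or shrinks using monotonicity of $\hbox{H}$ on $[0,1/2]$, whereas you collapse each claim to the sign of one logarithm. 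Your version buys precision: it exposes that the inequalities hold exactly when $x$ does not exceed the expected union size ($x\le y+z-yz/n$, resp. $x\le 2y-y^2/n$), a restriction that neither the lemma statement ($n>x>y$ alone) nor the paper's proof makes explicit, but which is needed by both arguments (the term-by-term route silently requires the entropy arguments to stay below $1/2$, which also fails for $x$ near $y+z$). Flagging that the lemma is really being used in the below-expected-value regime of Theorem \ref{thm:prob_bound_1set_expansion} is a genuine improvement in honesty over the paper.

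The gap is \eqref{eq:psi_n_property3}. You set up $\phi(\alpha)=\psi_n(\alpha x,\alpha y,\alpha y)$, correctly note the scale invariance of the block $\alpha y\,\hbox{H}\bigl(\frac{\alpha x-\alpha y}{\alpha y}\bigr)$, and then defer the decisive step --- showing $\phi'(\alpha)<0$ on $(1/2,1]$ --- to a menu of possible strategies (chord/tangent bounds on $\hbox{H}$, or $\phi'(1)\le 0$ plus a propagation argument) without executing any of them. As written, \eqref{eq:psi_n_property3} is therefore not proved. To be fair, the paper's own treatment of \eqref{eq:psi_n_property3} is also not a complete derivation: it expands $\psi_n(\alpha x,\alpha y,\alpha y)-\psi_n(x,y,y)$ into eight entropy terms, groups them in fours, and asserts that the positive group dominates the negative one ``due to the concavity and steepness of the Shannon entropy function'' and because $\alpha$ is close to one, which are claims rather than proofs. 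So you stop essentially where the paper also turns heuristic; but a self-contained proof requires actually carrying out the sign analysis of $\phi'$, and your own framing correctly identifies that as the step where all the real difficulty lives.
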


\begin{proof}
We start with Property \eqref{eq:psi_n_property1} and first show that the left inequality holds. If we substitute $y$ for $z$ in  \eqref{eq:psi_n_def} with $y>z$ we reduce the first and last terms of \eqref{eq:psi_n_def} while we increase the middle term of \eqref{eq:psi_n_def} which makes $\psi_n(x,y,y) \leq \psi_n(x,y,z)$. For second inequality we replace $y$ by $z$ in \eqref{eq:psi_n_def} with $y>z$ we increase the first and the last terms of \eqref{eq:psi_n_def} and reduce the middle term which makes $\psi_n(x,y,z) \leq \psi_n(x,z,z)$. This concludes the proof for \eqref{eq:psi_n_property1}.

Property \eqref{eq:psi_n_property2} states that for fixed $y$, $\psi_n(x,y,y)$ is monotonically increasing in its first argument. To prove \eqref{eq:psi_n_property2} we use the condition $n>x>y$ to ensure that $\hbox{H}(p)$ increases monotonically with $p$, which implies that the first and last terms of \eqref{eq:psi_n_def} increase with $x$ for fixed $y$ while the second term remains constant.

Property \eqref{eq:psi_n_property3} means that $\psi_n(x,y,y)$ is monotonically decreasing in $x$ and $y$. For the proof we show that for $1/2< \a \leq 1$ the difference $\psi_n(\a x,\a y,\a y) - \psi_n(x,y,y) > 0$. Using \eqref{eq:psi_n_def} we write out clearly what the difference, $\psi_n(\a x,\a y,\a y) - \psi_n(x,y,y)$, is as follows.

 \begin{align}
 \label{eq:psi_n_difference1}
 &\a y\hbox{H}\left(\frac{\a x-\a y}{\a y}\right) + (n-\a y)\hbox{H}\left(\frac{\a x-\a y}{n-\a y}\right) - n\hbox{H}\left(\frac{\a y}{n}\right)\nonumber\\
 & - y\hbox{H}\left(\frac{x-y}{y}\right) - (n-y)\hbox{H}\left(\frac{x-y}{n-y}\right) + n\hbox{H}\left(\frac{y}{n}\right)\\
 \label{eq:psi_n_difference2}
 & = \a y\hbox{H}\left(\frac{x-y}{y}\right) + n\hbox{H}\left(\frac{\a x-\a y}{n-\a y}\right) - \a y  \hbox{H}\left(\frac{\a x-\a y}{n-\a y}\right)\nonumber\\
 &\quad - n\hbox{H}\left(\frac{\a y}{n}\right) - y\hbox{H}\left(\frac{x-y}{y}\right) - n\hbox{H}\left(\frac{x-y}{n-y}\right) \nonumber\\
 & \quad + y\hbox{H}\left(\frac{x-y}{n-y}\right) + n\hbox{H}\left(\frac{y}{n}\right)\\
 \label{eq:psi_n_difference3}
 & = \a y\hbox{H}\left(\frac{x-y}{y}\right) - \a y\hbox{H}\left(\frac{\a x-\a y}{n-\a y}\right) - y\hbox{H}\left(\frac{x-y}{y}\right)\nonumber\\
 &\quad + y\hbox{H}\left(\frac{x-y}{n-y}\right) + n\hbox{H}\left(\frac{y}{n}\right) - n\hbox{H}\left(\frac{\a y}{n}\right) \nonumber\\
 & \quad + n\hbox{H}\left(\frac{\a x-\a y}{n-\a y}\right) - n\hbox{H}\left(\frac{x-y}{n-y}\right)
 \end{align}
 From \eqref{eq:psi_n_difference1} to \eqref{eq:psi_n_difference2} we expanded brackets and simplified, while from \eqref{eq:psi_n_difference2} to \eqref{eq:psi_n_difference3} we rearranged the terms for easy comparison.

 Again $n>x>y$ ensures that the arguments of $\hbox{H}(\cdot)$ are strictly less than half and $\hbox{H}(p)$ increases monotonically with $p$. In \eqref{eq:psi_n_difference3} the difference of the first two terms in the first row is positive while the difference of the second two terms is negative. However, the whole sum of the first four terms is negative but very close to zero when $\a$ is close to one which is the regime that we will be considering. The difference of the last two terms in the second row is positive while the difference of the terms on bottom row is negative but due to the concavity and steepness of the Shannon entropy function the first positive difference is larger hence the sum of last four terms is positive. Since we can write $n=cy$ with $c>1$ being an arbitrarily constant, then the positive sum in the second four terms dominates the negative sum in the first four terms. This gives the required results and hence concludes this proof and the proof of Lemma \ref{lem:psi_n_behaviour}.
\end{proof}

\begin{lemma}
 \label{lem:Psi_n_bound}
Given $\psi_n(\cdot)$ as defined in \eqref{eq:psi_n_def} then the following bound holds.
\begin{align}
    \label{eq:Psi_n_bound}
    & \sum_{j=0}^{\lceil\log_2(s)\rceil-2} \left[ q_j \cdot \psi_n\left(a_{Q_j},a_{\big{\lceil}\frac{Q_j}{2}\big{\rceil}},a_{\big{\lfloor}\frac{Q_j}{2}\big{\rfloor}}\right)\right. + \nonumber \\
    & \left. r_j \cdot \psi_n\left(a_{R_j},a_{\big{\lceil}\frac{R_j}{2}\big{\rceil}},a_{\big{\lfloor}\frac{R_j}{2}\big{\rfloor}}\right) \right]
    + q_{\lceil\log_2(s)\rceil-1} \cdot \psi_n\left(a_{2},d,d\right) \nonumber \\
    & \quad \leq \sum_{j=0}^{\lceil\log_2(s)\rceil-1} 2^j \cdot \psi_n\left(a_{Q_j},a_{\big{\lfloor}\frac{R_j}{2}\big{\rfloor}},a_{\big{\lfloor}\frac{R_j}{2}\big{\rfloor}}\right),
\end{align}
where $a_{\frac{R_{\lceil\log_2(s)\rceil-1}}{2}} = d$.
\end{lemma}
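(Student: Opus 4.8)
The plan is to bound the true large--deviation exponent produced by the dyadic splitting, namely the left--hand side of \eqref{eq:Psi_n_bound}, level by level, and to show that the contribution of every node at a given level $j$ is dominated by the single representative quantity $U_j := \psi_n(a_{Q_j}, a_{\lfloor R_j/2\rfloor}, a_{\lfloor R_j/2\rfloor})$ appearing on the right. Writing $T^Q_j := \psi_n(a_{Q_j}, a_{\lceil Q_j/2\rceil}, a_{\lfloor Q_j/2\rfloor})$ and $T^R_j := \psi_n(a_{R_j}, a_{\lceil R_j/2\rceil}, a_{\lfloor R_j/2\rfloor})$ for the two per--node terms at level $j$, Lemma \ref{lem:num_size_split} supplies $q_j + r_j = 2^j$, so once I prove $T^Q_j \le U_j$ and $T^R_j \le U_j$, summing over the $q_j$ nodes of size $Q_j$ and the $r_j$ nodes of size $R_j$ gives $q_j T^Q_j + r_j T^R_j \le (q_j+r_j)U_j = 2^j U_j$, which is exactly the level--$j$ summand on the right. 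Throughout I will use that $\{a_i\}$ is nondecreasing in $i$ (a larger column set cannot have fewer neighbours), so index comparisons transfer directly to comparisons of $a$--values.

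The two per--node inequalities come from the monotonicity Properties \eqref{eq:psi_n_property1}--\eqref{eq:psi_n_property3} of Lemma \ref{lem:psi_n_behaviour}. First I would record the arithmetic identity $\lceil R_j/2\rceil = \lfloor Q_j/2\rfloor$ (valid since $R_j = Q_j - 1$), so the children of an $R_j$--node have sizes $\lfloor Q_j/2\rfloor \ge \lfloor R_j/2\rfloor$. For the $R_j$--term I apply the right inequality of \eqref{eq:psi_n_property1} to collapse the second argument down to $a_{\lfloor R_j/2\rfloor}$, and then \eqref{eq:psi_n_property2} (monotonicity in the first argument, using $a_{R_j} \le a_{Q_j}$) to raise the first argument to $a_{Q_j}$, giving $T^R_j \le U_j$. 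For the $Q_j$--term I again invoke \eqref{eq:psi_n_property1} to replace the ceiling child by the floor child, reducing $T^Q_j$ to $\psi_n(a_{Q_j}, a_{\lfloor Q_j/2\rfloor}, a_{\lfloor Q_j/2\rfloor})$.

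It then remains to lower the common doubled argument from $a_{\lfloor Q_j/2\rfloor}$ to $a_{\lfloor R_j/2\rfloor}$ while holding the first argument at $a_{Q_j}$. When $Q_j$ is odd the two floors coincide and nothing is needed; when $Q_j$ is even they differ by a single index. Here I would chain the two remaining properties: by \eqref{eq:psi_n_property3}, with scale $\alpha = a_{\lfloor R_j/2\rfloor}/a_{\lfloor Q_j/2\rfloor} \in (1/2,1]$ (the bound $\alpha > 1/2$ holds because consecutive union sizes grow by at most the degree, $a_{\lfloor Q_j/2\rfloor} \le a_{\lfloor R_j/2\rfloor} + d \le 2 a_{\lfloor R_j/2\rfloor}$), I obtain $\psi_n(a_{Q_j}, a_{\lfloor Q_j/2\rfloor}, a_{\lfloor Q_j/2\rfloor}) < \psi_n(\alpha a_{Q_j}, a_{\lfloor R_j/2\rfloor}, a_{\lfloor R_j/2\rfloor})$, and then \eqref{eq:psi_n_property2} raises the first argument from $\alpha a_{Q_j}$ back up to $a_{Q_j}$. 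This yields $T^Q_j \le U_j$ and completes the bound for every internal level $j = 0,\dots,\lceil\log_2 s\rceil-2$.

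Finally the leaf level $j = \lceil\log_2 s\rceil - 1$ must be treated separately, and I expect this boundary bookkeeping to be the main obstacle. There the size--$Q_j = 2$ nodes contribute $\psi_n(a_2, d, d)$ (using $a_1 = d$), the size--$R_j = 1$ nodes are single columns that do not split, and the stated convention $a_{R_{\lceil\log_2 s\rceil-1}/2} = d$ forces the representative $U_{\lceil\log_2 s\rceil-1}$ to also equal $\psi_n(a_2,d,d)$. One must then reconcile the $q_{\lceil\log_2 s\rceil-1}$ genuine leaf terms on the left against the full $2^{\lceil\log_2 s\rceil-1}$ copies on the right; since $\psi_n(a_2,d,d)$ vanishes at the mean $a_2 = \widehat a_2$ and is otherwise negative, the sign of the leftover $r_{\lceil\log_2 s\rceil-1}\,\psi_n(a_2,d,d)$ is the delicate point, and any residual discrepancy must be absorbed into the strict slack already accumulated at the internal levels. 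Verifying that this closes the inequality is the step I would scrutinize most carefully.
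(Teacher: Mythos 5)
Your treatment of the internal levels $j=0,\dots,\lceil\log_2 s\rceil-2$ is correct and is essentially the paper's argument: bound both per-node terms by the common representative $\psi_n\bigl(a_{Q_j},a_{\lfloor R_j/2\rfloor},a_{\lfloor R_j/2\rfloor}\bigr)$ using the monotonicity properties of Lemma \ref{lem:psi_n_behaviour}, then collapse via $q_j+r_j=2^j$. Your handling of the $R_j$-term (property \eqref{eq:psi_n_property1} then \eqref{eq:psi_n_property2}) is exactly the paper's. For the $Q_j$-term you take an unnecessary detour: to pass from $\psi_n\bigl(a_{Q_j},a_{\lfloor Q_j/2\rfloor},a_{\lfloor Q_j/2\rfloor}\bigr)$ to $\psi_n\bigl(a_{Q_j},a_{\lfloor R_j/2\rfloor},a_{\lfloor R_j/2\rfloor}\bigr)$ the paper simply applies the second inequality of \eqref{eq:psi_n_property1} again (with $y=a_{\lfloor Q_j/2\rfloor}>z=a_{\lfloor R_j/2\rfloor}$), whereas you route through the rescaling property \eqref{eq:psi_n_property3} followed by \eqref{eq:psi_n_property2}; this works but forces you to justify $\alpha>1/2$, which is beside the point (and degenerates to $\alpha=1/2$ in the corner case $a_{\lfloor R_j/2\rfloor}=d$, $a_{\lfloor Q_j/2\rfloor}=2d$).

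The genuine gap is the leaf level, which you explicitly leave open. The required inequality there is $q_{\lceil\log_2 s\rceil-1}\,\psi_n(a_2,d,d)\le 2^{\lceil\log_2 s\rceil-1}\,\psi_n(a_2,d,d)$, i.e.\ $r_{\lceil\log_2 s\rceil-1}\,\psi_n(a_2,d,d)\ge 0$, and you correctly observe that this hinges on the sign of $\psi_n(a_2,d,d)$: a direct substitution into \eqref{eq:psi_n_def} gives $\psi_n(\widehat a_2,d,d)= d\,\hbox{H}(d/n)+(n-d)\hbox{H}(d/n)-n\hbox{H}(d/n)=0$ at the unconstrained value $\widehat a_2=2d-d^2/n$, and by \eqref{eq:psi_n_property2} the quantity is strictly negative for $a_2<\widehat a_2$ (the constrained case of Theorem \ref{thm:prob_bound_1set_expansion}), so when $s$ is not a power of two the leaf terms alone go the wrong way. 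Saying the deficit ``must be absorbed into the strict slack already accumulated at the internal levels'' is not a proof --- no such slack is quantified anywhere in your argument --- so your write-up does not establish the stated bound. For what it is worth, you have located precisely the soft spot in the paper's own proof: the paper disposes of this step by bounding $q_{\lceil\log_2 s\rceil-1}$ by $2^{\lceil\log_2 s\rceil-1}$ and substituting, which is an upper bound on the product only when $\psi_n(a_2,d,d)\ge 0$; it offers no sign check. So your internal-level argument should be replaced by the paper's shorter one, and the leaf-level step needs an actual argument (e.g.\ restricting to the regime where the relevant $\psi_n$ is nonnegative, or quantifying the slack) rather than a deferral.
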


\begin{proof}
The quantity inside the left hand side summation in \eqref{eq:Psi_n_bound}, i.e.
\begin{multline}
\label{eq:prob_1set_ssparse_exp4a}
q_j \cdot \psi_n\left(a_{Q_j},a_{\big{\lceil}\frac{Q_j}{2}\big{\rceil}},a_{\big{\lfloor}\frac{Q_j}{2}\big{\rfloor}}\right) \\ + r_j \cdot \psi_n\left(a_{R_j},a_{\big{\lceil}\frac{R_j}{2}\big{\rceil}},a_{\big{\lfloor}\frac{R_j}{2}\big{\rfloor}}\right),
\end{multline}
is equal to the following if we replace $q_j$ and $r_j$ by their values given in Lemma \ref{lem:num_size_split}.
\begin{align}
    \label{eq:prob_1set_ssparse_exp4b}
    & \left( s - 2^j\Big{\lceil}\frac{s}{2^j}\Big{\rceil} + 2^j\right) \cdot \psi_n\left(a_{Q_j},a_{\big{\lceil}\frac{Q_j}{2}\big{\rceil}},a_{\big{\lfloor}\frac{Q_j}{2}\big{\rfloor}}\right) \\
    & + \left( 2^j\Big{\lceil}\frac{s}{2^j}\Big{\rceil} - s \right) \cdot \psi_n\left(a_{R_j},a_{\big{\lceil}\frac{R_j}{2}\big{\rceil}},a_{\big{\lfloor}\frac{R_j}{2}\big{\rfloor}}\right)\nonumber \\
    \label{eq:prob_1set_ssparse_exp4c}
    & < \left( s - 2^j\Big{\lceil}\frac{s}{2^j}\Big{\rceil} + 2^j\right) \cdot \psi_n\left(a_{Q_j},a_{\big{\lfloor}\frac{Q_j}{2}\big{\rfloor}},a_{\big{\lfloor}\frac{Q_j}{2}\big{\rfloor}}\right) \\
    & + \left( 2^j\Big{\lceil}\frac{s}{2^j}\Big{\rceil} - s \right) \cdot \psi_n\left(a_{R_j},a_{\big{\lfloor}\frac{R_j}{2}\big{\rfloor}},a_{\big{\lfloor}\frac{R_j}{2}\big{\rfloor}}\right).\nonumber\\
    \label{eq:prob_1set_ssparse_exp4d}
    & < \left( s - 2^j\Big{\lceil}\frac{s}{2^j}\Big{\rceil} + 2^j\right) \cdot \psi_n\left(a_{Q_j},a_{\big{\lfloor}\frac{R_j}{2}\big{\rfloor}},a_{\big{\lfloor}\frac{R_j}{2}\big{\rfloor}}\right) \\
    & + \left( 2^j\Big{\lceil}\frac{s}{2^j}\Big{\rceil} - s \right) \cdot \psi_n\left(a_{R_j},a_{\big{\lfloor}\frac{R_j}{2}\big{\rfloor}},a_{\big{\lfloor}\frac{R_j}{2}\big{\rfloor}}\right).\nonumber\\
    \label{eq:prob_1set_ssparse_exp4e}
    & < \left( s - 2^j\Big{\lceil}\frac{s}{2^j}\Big{\rceil} + 2^j\right) \cdot \psi_n\left(a_{Q_j},a_{\big{\lfloor}\frac{R_j}{2}\big{\rfloor}},a_{\big{\lfloor}\frac{R_j}{2}\big{\rfloor}}\right) \\
    & + \left( 2^j\Big{\lceil}\frac{s}{2^j}\Big{\rceil} - s \right) \cdot \psi_n\left(a_{Q_j},a_{\big{\lfloor}\frac{R_j}{2}\big{\rfloor}},a_{\big{\lfloor}\frac{R_j}{2}\big{\rfloor}}\right).\nonumber\\
    \label{eq:prob_1set_ssparse_exp4f}
    & = 2^j \cdot \psi_n\left(a_{Q_j},a_{\big{\lfloor}\frac{R_j}{2}\big{\rfloor}},a_{\big{\lfloor}\frac{R_j}{2}\big{\rfloor}}\right).
\end{align}
From \eqref{eq:prob_1set_ssparse_exp4b} to \eqref{eq:prob_1set_ssparse_exp4c} we upper bounded $\psi_n\left(a_{Q_j},a_{\big{\lceil}\frac{Q_j}{2}\big{\rceil}},a_{\big{\lfloor}\frac{Q_j}{2}\big{\rfloor}}\right)$ by $\psi_n\left(a_{Q_j},a_{\big{\lfloor}\frac{Q_j}{2}\big{\rfloor}},a_{\big{\lfloor}\frac{Q_j}{2}\big{\rfloor}}\right)$ and $\psi_n\left(a_{R_j},a_{\big{\lceil}\frac{R_j}{2}\big{\rceil}},a_{\big{\lfloor}\frac{R_j}{2}\big{\rfloor}}\right)$ by $\psi_n\left(a_{R_j},a_{\big{\lfloor}\frac{R_j}{2}\big{\rfloor}},a_{\big{\lfloor}\frac{R_j}{2}\big{\rfloor}}\right)$ using \eqref{eq:psi_n_property1} of Lemma \ref{lem:psi_n_behaviour}. We then upper bounded $\psi_n\left(a_{Q_j},a_{\big{\lfloor}\frac{Q_j}{2}\big{\rfloor}},a_{\big{\lfloor}\frac{Q_j}{2}\big{\rfloor}}\right)$ by $\psi_n\left(a_{Q_j},a_{\big{\lfloor}\frac{R_j}{2}\big{\rfloor}},a_{\big{\lfloor}\frac{R_j}{2}\big{\rfloor}}\right)$, from \eqref{eq:prob_1set_ssparse_exp4c} to \eqref{eq:prob_1set_ssparse_exp4d}, again using \eqref{eq:psi_n_property1} of Lemma \ref{lem:psi_n_behaviour}. From \eqref{eq:prob_1set_ssparse_exp4d} to \eqref{eq:prob_1set_ssparse_exp4e}, using \eqref{eq:psi_n_property2} of Lemma \ref{lem:psi_n_behaviour}, we bounded $\psi_n\left(a_{R_j},a_{\big{\lfloor}\frac{R_j}{2}\big{\rfloor}},a_{\big{\lfloor}\frac{R_j}{2}\big{\rfloor}}\right)$ by $\psi_n\left(a_{Q_j},a_{\big{\lfloor}\frac{R_j}{2}\big{\rfloor}},a_{\big{\lfloor}\frac{R_j}{2}\big{\rfloor}}\right)$. For the final step from \eqref{eq:prob_1set_ssparse_exp4e} to \eqref{eq:prob_1set_ssparse_exp4f} we factored out $\psi_n\left(a_{Q_j},a_{\big{\lfloor}\frac{R_j}{2}\big{\rfloor}},a_{\big{\lfloor}\frac{R_j}{2}\big{\rfloor}}\right)$ and then simplified.

Using $q_{\lceil\log_2(s)\rceil-1} + r_{\lceil\log_2(s)\rceil-1} = 2^{\lceil\log_2(s)\rceil-1}$ we bound $q_{\lceil\log_2(s)\rceil-1}$ by $2^{\lceil\log_2(s)\rceil-1}$. Then we add this to the summation of \eqref{eq:prob_1set_ssparse_exp4e} for $j=0,\ldots,\lceil\log_2(s)\rceil-2$ establishing the bound of Lemma \ref{lem:Psi_n_bound}.
\end{proof}

Now we state and prove a lemma about the quantities $a_i$. During the proof we will make a statement about the $a_i$ using their expected values $\hat{a}_i$ which follows from a maximum likelihood analogy.
\begin{lemma}
\label{lem:max_Psi_n_bound}
The problem
\begin{equation}
\label{eq:prob_1set_ssparse_expo_bound3}
\mathop{\max}_{a_{s},\ldots,a_{2}} ~\sum_{i=1}^{\lceil s/2\rceil} \frac{s}{2i} \cdot \psi_n\left(a_{2i},a_i,a_i\right)
\end{equation}
has a global maximum and the maximum occurs at the expected values of the $a_i, ~\widehat{a}_{i}$ given by
\begin{equation}
\label{eq:prob_1set_ssparse_expo_bound_grad_cp3}
\widehat{a}_{2i} = \widehat{a}_{i}\left(2 - \frac{\widehat{a}_{i}}{n}\right) \quad \mbox{for} \quad i=1,2,4,\ldots,\lceil s/2\rceil,
\end{equation}
which are a solution of the following polynomial system.
\begin{align}
\label{eq:prob_1set_ssparse_expo_bound_grad_cp1}
a_{\lceil s/2\rceil}^2 - 2 n a_{\lceil s/2\rceil} + n a_{s} & = 0,\nonumber\\
a_{2i}^3 - 2a_ia_{2i}^2 + 2 a_i^2 a_{2i} - a_i^2 a_{4i} & = 0, \nonumber\\
\mbox{for} \quad i & = 1,2,\ldots,\lceil s/4\rceil,
\end{align}
where $a_1=d$. If $a_{s}$ is constrained to be less than $\hat{a}_{s}$, then there is a different global maximum,
instead the $a_i$ satisfy the following system
\begin{align}
\label{eq:prob_1set_ssparse_expo_bound_grad_cp2}
a_{2i}^3 - 2a_ia_{2i}^2 + 2 a_i^2 a_{2i} - a_i^2 a_{4i} & = 0, \nonumber \\
 \mbox{for} \quad i & = 1,2,4,\ldots,\lceil s/4\rceil,
\end{align}
again with $a_1=d$.
\end{lemma}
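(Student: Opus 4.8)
Denote the objective in \eqref{eq:prob_1set_ssparse_expo_bound3} by $F$, a smooth function of the free variables $a_2, a_4, \ldots, a_s$ with $a_1 = d$ fixed. The plan is to treat \eqref{eq:prob_1set_ssparse_expo_bound3} as an unconstrained optimisation on the region $a_i \le a_{2i} \le \min(2a_i, n)$, on which every argument of $\hbox{H}(\cdot)$ in \eqref{eq:psi_n_def} lies in $[0,1]$; to exploit the fact that each $a_{2^k}$ enters exactly two summands; to read off the stationarity conditions in closed form; and then to argue that the resulting interior critical point is a global maximum.

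First I would rewrite each summand using $c\,\hbox{H}(t/c) = c\log c - t\log t - (c-t)\log(c-t)$, so that $\psi_n(a_{2i},a_i,a_i)$ becomes a sum of terms $\pm t\log t$ with $t$ affine in the $a$'s; differentiation is then routine and gives
\begin{align}
\frac{\partial}{\partial x}\psi_n(x,y,y) &= \log\frac{(2y-x)(n-x)}{(x-y)^2}, \nonumber \\
\frac{\partial}{\partial y}\psi_n(x,y,y) &= 2\log\frac{y(x-y)}{(2y-x)(n-y)}. \nonumber
\end{align}
An interior variable $a_{2^k}$ appears as the first argument of $\psi_n(a_{2^k},a_{2^{k-1}},a_{2^{k-1}})$ with weight $s/2^k$ and as the repeated argument of $\psi_n(a_{2^{k+1}},a_{2^k},a_{2^k})$ with weight $s/2^{k+1}$. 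In $\partial F/\partial a_{2^k}=0$ the factor of two in $\partial\psi_n/\partial y$ matches the halving of the weights, so a common factor divides out, the $(n-a_{2^k})$ terms cancel, and exponentiating the identity between the two logarithms and clearing denominators collapses to the cubic $a_{2i}^3 - 2a_i a_{2i}^2 + 2a_i^2 a_{2i} - a_i^2 a_{4i} = 0$. For the free endpoint $a_s$ only the first-argument derivative survives, giving $a_{\lceil s/2\rceil}^2 - 2n a_{\lceil s/2\rceil} + n a_s = 0$; together these are exactly \eqref{eq:prob_1set_ssparse_expo_bound_grad_cp1}.

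Next I would exhibit and interpret the solution. Substituting $\widehat{a}_{2i} = \widehat{a}_i(2 - \widehat{a}_i/n)$ into both relations, and using $(\widehat{a}_{2i}-\widehat{a}_i)^2 = \widehat{a}_i^2(1-\widehat{a}_i/n)^2$, verifies directly that \eqref{eq:prob_1set_ssparse_expo_bound_grad_cp3} is a critical point. In the variable $b_i := 1 - a_i/n$ this recurrence reads $b_{2i}=b_i^2$, so $\widehat{a}_{2^k} = n\bigl(1 - (1-d/n)^{2^k}\bigr)$; this is precisely the expected cardinality of the union of two independent uniform subsets of $[n]$ of expected size $\widehat{a}_i$, since each coordinate is missed with probability $(1-\widehat{a}_i/n)^2$, which is the maximum-likelihood reading of the $\widehat{a}_i$ anticipated in the text. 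The monotone recurrence also shows the critical point is unique: $a_1 = d$ determines it forward in the free case, and pinning $a_s$ leaves a monotone two-point problem in the constrained case.

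The main obstacle is upgrading this stationary point to a global maximum, because $\psi_n(x,y,y)$ is not jointly concave — its expansion contains the convex term $2y\log y$, and the $2\times 2$ Hessian of a single summand is in general indefinite. The plan is to prove instead that $F$ itself is strictly concave on the feasible region. Its Hessian is tridiagonal (nearest-neighbour coupling), the pure second derivative $\partial^2\psi_n/\partial x^2 = -\tfrac{1}{2y-x} - \tfrac{1}{n-x} - \tfrac{2}{x-y}$ entering with the larger weight $s/2^k$ is strictly negative, and I would establish negative definiteness by a diagonal-dominance argument that plays the geometric decay of the weights against the mixed terms $\partial^2\psi_n/\partial x\partial y = \tfrac{2}{2y-x} + \tfrac{2}{x-y}$, using $a_i < a_{2i} < \min(2a_i,n)$; should this prove too delicate, the fallback is uniqueness of the interior critical point together with compactness, checking that $F$ is strictly smaller on each boundary face. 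Finally, for the constrained case I would note that fixing $a_s<\widehat{a}_s$ merely removes $a_s$ from the free variables, so the endpoint equation is dropped while the interior cubics persist; the same concavity argument over $a_2,\ldots,a_{\lceil s/2\rceil}$ then yields the distinct global maximum characterised by \eqref{eq:prob_1set_ssparse_expo_bound_grad_cp2}.
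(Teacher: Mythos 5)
Your proposal follows essentially the same route as the paper's proof: differentiate the objective, set the gradient to zero to obtain the endpoint quadratic and the family of cubics in \eqref{eq:prob_1set_ssparse_expo_bound_grad_cp1}, verify that the recurrence \eqref{eq:prob_1set_ssparse_expo_bound_grad_cp3} solves the system, interpret the solution as expected values, and appeal to second-order information for global maximality; your partial derivatives and the algebra collapsing the stationarity condition to $a_{2i}^3-2a_ia_{2i}^2+2a_i^2a_{2i}-a_i^2a_{4i}=0$ are correct, as is the cancellation of the $(n-a_{2i})$ factors. Two differences are worth recording. First, your substitution $b_i=1-a_i/n$, which turns the recurrence into $b_{2i}=b_i^2$ and yields the closed form $\widehat{a}_{2^k}=n\bigl(1-(1-d/n)^{2^k}\bigr)$, gives a direct probabilistic reading of the $\widehat{a}_i$ as expected cardinalities of unions of independent uniform $d$-subsets; this is cleaner and more convincing than the paper's maximum-likelihood paraphrase of the same claim. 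Second, on global maximality the paper simply asserts that the Hessian at the critical point is negative definite, whereas you correctly observe that $\psi_n(x,y,y)$ is not jointly concave and therefore that concavity of the full objective needs an argument; you propose either diagonal dominance of the tridiagonal Hessian or uniqueness of the interior critical point plus a boundary check, but neither you nor the paper carries this step out. Your fallback is the more robust plan: the combined gradient component in $a_{2i}$ tends to $+\infty$ as $a_{2i}\downarrow a_i$ and to $-\infty$ as $a_{2i}$ approaches the upper faces, so the maximum is interior and it would remain only to settle uniqueness of the critical point --- which, as your own remark about the constrained case concedes, is a two-point (shooting) problem in the single unknown $a_2$ rather than a pure forward recursion, so ``determined forward from $a_1=d$'' does not by itself close it.
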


\begin{proof}
Define
\begin{equation}
\label{eq:Psi_n_def}
\widetilde{\Psi}_n\left(a_{s},\ldots,a_{2},d\right) := \sum_{i=1}^{\lceil s/2\rceil} \frac{s}{2i} \cdot \psi_n\left(a_{2i},a_i,a_i\right).
\end{equation}
Using the definition of $\psi_n(\cdot)$ in \eqref{eq:psi_n_def} we therefore have
\begin{multline}
\label{eq:prob_1set_ssparse_expo_bound5}
\widetilde{\Psi}_n\left(a_{s},\ldots,a_{2},d\right) = \sum_{i=1}^{\lceil s/2\rceil} \frac{s}{2i} \cdot\left[a_i\cdot \hbox{H}\left(\frac{a_{2i}-a_i}{a_i}\right)\right. + \\
\left. \left(n-a_i\right) \cdot \hbox{H}\left(\frac{a_{2i}-a_i}{n-a_i}\right) - n\cdot \hbox{H}\left(\frac{a_i}{n}\right) \right].
\end{multline}
The gradient of $\widetilde{\Psi}_n\left(a_{s},\ldots,a_{2},d\right), ~\nabla \widetilde{\Psi}_n\left(a_{s},\ldots,a_{2},d\right)$ is given by
\begin{multline}
\label{eq:prob_1set_ssparse_expo_bound_grad}
\left(\log\left[\frac{\left(2a_{\lceil s/2\rceil} - a_{s}\right) \left(n-a_{s}\right)} {\left(a_{s}-a_{\lceil s/2\rceil}\right)^2}\right],\right.\\ \left. \frac{s}{2i} \cdot \log\left[\frac{a_{2i}\left(a_{4i}-a_{2i}\right) \left(2a_{i}-a_{2i}\right)} {\left(2a_{2i}-a_{4i}\right) \left(a_{2i}-a_{i}\right)^2}\right]\right)^T \\ \mbox{for} \quad i=1,2,4,\ldots,\lceil s/4\rceil,
\end{multline}
where $v^T$ is the transpose of the vector $v$. Obtaining the critical points by solving $\nabla \widetilde{\Psi}_n\left(a_{s},\ldots,a_{2},d\right)=0$ leads to the polynomial system \eqref{eq:prob_1set_ssparse_expo_bound_grad_cp1}.

The Hessian, $\nabla^2 \widetilde{\Psi}_n\left(a_{s},\ldots,a_{2},d\right)$ at these optimal $a_{i}$ which are the solutions to the polynomial system \eqref{eq:prob_1set_ssparse_expo_bound_grad_cp1} is negative definite which implies that this unique critical point is a global maximum point. Let the solution of the system be the $\hat{a}_{i}$ then they satisfy a recurrence formula \eqref{eq:prob_1set_ssparse_expo_bound_grad_cp3} which is equivalent to their expected values as explained in the paragraph that follows.

We estimate the uniformly distributed parameter relating $a_{2i}$ to $a_{i}$. The best estimator of this parameter is the maximum likelihood estimator which we calculate from the maximum log-likelihood estimator (MLE). The summation of the $\psi_n(\cdot)$ is the logarithm of the join density functions for the $a_{2i}$. The MLE is obtained by maximizing this summation and it corresponds to the expected log-likelihood. Therefore, the parameters given implicitly by \eqref{eq:prob_1set_ssparse_expo_bound_grad_cp3} are the expected log-likelihood which implies that the values of the $\hat{a}_{j}$ in \eqref{eq:prob_1set_ssparse_expo_bound_grad_cp3} are the expected values of the $a_{i}$.

If we restrict $a_{s}$ to take a fixed value, then $\nabla \widetilde{\Psi}_n\left(a_{s},\ldots,a_{2},d\right)$ is given by
\begin{multline}
\label{eq:prob_1set_ssparse_expo_bound_grad2}
\left(\frac{s}{2i} \cdot \log\left[\frac{a_{2i}\left(a_{4i}-a_{2i}\right) \left(2a_{i}-a_{2i}\right)} {\left(2a_{2i}-a_{4i}\right) \left(a_{2i}-a_{i}\right)^2}\right]\right)^T \\ \mbox{for} \quad i=1,2,4,\ldots,\lceil s/4\rceil.
\end{multline}
Obtaining the critical points by solving $\nabla \widetilde{\Psi}_n\left(a_{s},\ldots,a_{2},d\right)=0$ leads to the polynomial system \eqref{eq:prob_1set_ssparse_expo_bound_grad_cp2}.

Given $a_{s}$, the Hessian, $\nabla^2 \widetilde{\Psi}_n\left(a_{s},\ldots,a_{2},d\right)$ at these optimal $a_i$ which are the solutions to the polynomial system \eqref{eq:prob_1set_ssparse_expo_bound_grad_cp2} is negative definite which implies that this unique critical point is a global maximum; this case differs from a maximum likelihood estimation because of the extra constraint of fixing $a_{s}$.
\end{proof}

The dyadic splitting technique we employ requires greater care of the polynomial term in the large deviation bound of $\hbox{P}_n\left(x,y,z\right)$ in \eqref{eq:intersect_prob}; Lemma \ref{lem:prob_1set_poly} establishes the polynomial term.
\begin{definition}
\label{def:prob_1set_poly}
$\hbox{P}_n\left(x,y,z\right)$ defined in \eqref{eq:intersect_prob} satisfies
the upper bound
\begin{equation}
\label{eq:prob_1set_poly1}
\hbox{P}_n\left(x,y,z\right)\le \pi \left( x,y,z \right) \exp(\psi_n(x,y,z))
\end{equation}
with bounds of $\pi \left( x,y,z \right)$ given in Lemma \ref{lem:prob_1set_poly}.
\end{definition}

\begin{lemma}
\label{lem:prob_1set_poly}
For $\pi \left( x,y,z \right)$ and $\hbox{P}_n\left(x,y,z\right)$ given by \eqref{eq:prob_1set_poly1} and \eqref{eq:intersect_prob} respectively, if $\{y,z\}<x<y+z$, $\pi \left( x,y,z \right)$ is given by
\begin{equation}
\label{eq:prob_1set_poly3a}
 \left(\frac{5}{4}\right)^4 \left[ \frac{yz(n-y)(n-z)}{2\pi n(y+z-x)(x-y)(x-z)(n-x)} \right]^{\frac{1}{2}},
\end{equation}
otherwise $\pi \left( x,y,z \right)$ has the following cases.
\begin{align}
\label{eq:prob_1set_poly3b}
\left( \frac{5}{4} \right)^3 \left[ \frac{y(n-z)}{n(y-z)} \right]^{\frac{1}{2}} & \quad \mbox{if} \quad x=y>z;\\
\label{eq:prob_1set_poly3c}
\left( \frac{5}{4} \right)^3 \left[ \frac{(n-y)(n-z)}{n(n-y-z)} \right]^{\frac{1}{2}} & \quad \mbox{if} \quad x=y+z;\\
\label{eq:prob_1set_poly3d}
\left( \frac{5}{4} \right)^2 \left[ \frac{2\pi z(n-z)}{n} \right]^{\frac{1}{2}} & \quad \mbox{if} \quad x=y=z.
\end{align}
\end{lemma}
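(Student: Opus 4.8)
The plan is to start from the exact combinatorial identity for $\mathrm{P}_n(x,y,z)$ furnished by Lemma \ref{lem:intersect_prob}, namely $\mathrm{P}_n(x,y,z)=\binom{y}{y+z-x}\binom{n-y}{x-y}\binom{n}{z}^{-1}$ (setting $b=x$, $b_1=y$, $b_2=z$ in \eqref{eq:intersect_prob}), and to apply the two-sided Robbins form of Stirling's inequality $\sqrt{2\pi m}\,m^m e^{-m}e^{1/(12m+1)}<m!<\sqrt{2\pi m}\,m^m e^{-m}e^{1/(12m)}$ to every factorial that appears. Writing each nondegenerate binomial $\binom{m}{k}$ as the product of a rational prefactor $\frac{1}{\sqrt{2\pi}}\sqrt{m/[k(m-k)]}$, an exponential factor $\exp(m\,\mathrm{H}(k/m))$, and a Stirling error factor, the argument separates cleanly into three pieces that I would treat independently: the exponential part, the rational prefactor, and the error factor.

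First I would verify that the exponential parts collapse to exactly $\exp(\psi_n(x,y,z))$ with $\psi_n$ as in \eqref{eq:psi_n_def}. Using the symmetry $\mathrm{H}(p)=\mathrm{H}(1-p)$, the factor $\binom{y}{y+z-x}$ contributes $y\,\mathrm{H}((x-z)/y)$, the factor $\binom{n-y}{x-y}$ contributes $(n-y)\,\mathrm{H}((x-y)/(n-y))$, and the inverse factor $\binom{n}{z}^{-1}$ contributes $-n\,\mathrm{H}(z/n)$; summing these reproduces \eqref{eq:psi_n_def}, which is precisely the form demanded by Definition \ref{def:prob_1set_poly}. Next I would multiply the rational prefactors: the two numerator binomials each carry a $1/\sqrt{2\pi}$ while the inverse binomial carries a $\sqrt{2\pi}$, and the product of all the square-root terms equals exactly the bracketed expression in \eqref{eq:prob_1set_poly3a}.

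The error factor is where the constants $5/4$ enter. Treating each binomial on its own, every factorial sitting in a numerator is bounded above via $e^{1/(12m)}\le e^{1/12}<5/4$, while the factorials in denominators are simply discarded, since their Stirling error factors exceed $1$ and hence only strengthen the upper bound; in particular one must use the inequality in opposite directions, bounding numerator factorials from above but $n!$ from below. A plain binomial $\binom{m}{k}$ thus contributes one factor of $5/4$ (its single numerator factorial), whereas the inverse binomial $\binom{n}{z}^{-1}=z!(n-z)!/n!$ contributes two, which together account for the $(5/4)^4$ in the generic regime $\{y,z\}<x<y+z$. The remaining work is the case analysis: when $x=y$, $x=y+z$, or $x=y=z$, one or both of the numerator binomials degenerate to $\binom{m}{0}=\binom{m}{m}=1$ and drop out, removing both their rational prefactors and their error factors. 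Recounting the surviving numerator factorials then yields $(5/4)^3$ in \eqref{eq:prob_1set_poly3b} and \eqref{eq:prob_1set_poly3c} and $(5/4)^2$ in \eqref{eq:prob_1set_poly3d}, with the prefactors simplifying after the $\sqrt{2\pi}$ and the common-argument cancellations.

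The step I expect to be most delicate is this bookkeeping in the boundary cases: one must check that under $\{y,z\}<x<y+z$ every factorial argument $y+z-x$, $x-z$, $x-y$, $n-x$, $z$, $n-z$ is a strictly positive integer, so that $e^{1/(12m)}<5/4$ genuinely applies to each surviving factorial, and that in each degenerate specialization no surviving factorial acquires a vanishing argument. Confirming that the surviving rational prefactors collapse to the stated closed forms and that the count of numerator factorials produces the advertised powers of $5/4$ is the crux; the exponential identification and the prefactor algebra are then routine substitution.
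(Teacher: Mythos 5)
Your proposal is correct and follows essentially the same route as the paper: the paper applies the packaged binomial form of Stirling's inequality (upper-bounding $\binom{y}{y+z-x}$ and $\binom{n-y}{x-y}$ with constant $5/4$ each, lower-bounding $\binom{n}{z}$ with constant $16/25$, hence $(5/4)^2$ on the reciprocal) and then runs the same case analysis on which binomials degenerate to $1$; your Robbins-style factorial-by-factorial bookkeeping is just a finer-grained derivation of that same inequality and yields identical prefactors, exponents, and powers of $5/4$ in all four cases.
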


\begin{proof}
The Stirling's inequality \cite{cheney98introduction} below would be used in this proof and other proofs to follow.
\begin{multline}
\label{eq:stirling}
\frac{16}{25}\left(2\pi p (1-p)N\right)^{-\frac{1}{2}}e^{N\mathrm{H}(p)} \leq \binom{N}{Np} \\ \leq \frac{5}{4}\left(2\pi p (1-p)N\right)^{-\frac{1}{2}}e^{N\mathrm{H}(p)}.
\end{multline}

From Definition \ref{def:prob_1set_poly} the quantity $\pi \left( x,y,z \right)$ is the
polynomial portion of the large deviation upper bound.  Within this proof we express this by
\begin{equation}
\label{eq:prob_1set_poly2}
\pi \left( x,y,z \right)  = poly \left[ \binom{y}{y+z-x} \binom{n-y}{x-y} {\binom{n}{z}}^{-1} \right].
\end{equation}
We derive the upper bound $\pi \left( x,y,z \right)$ using the Stirling's inequality. The right inequality of \eqref{eq:stirling} is used to upper bound $\binom{y}{y+z-x}$ and $\binom{n-y}{x-y}$ and the left inequality of \eqref{eq:stirling} is used to lower bound $\binom{n}{z}$. If $\{y,z\}<x<y+z$ the bound is well defined and simplifies to 
\eqref{eq:prob_1set_poly3a}.

If $x=y>z$ \eqref{eq:prob_1set_poly3a} is undefined; however, substituting $y$ for $x$ in \eqref{eq:prob_1set_poly2} gives $\binom{y}{y+z-x}=\binom{y}{z}$ and $\binom{n-y}{x-y}=\binom{n-y}{0}=1$. We upper bound the product $\binom{y}{z} {\binom{n}{z}}^{-1}$ using the right inequality in \eqref{eq:stirling} to bound $\binom{y}{z}$ from above and the left inequality in \eqref{eq:stirling} to bound from below $\binom{n}{z}$. The resulting polynomial part of the product simplifies to \eqref{eq:prob_1set_poly3b}.

If $x=y+z$, then $\binom{y}{y+z-x}=\binom{y}{0}=1$ and $\binom{n-y}{x-y}=\binom{n-y}{z}$. As above, we upper bound the product of $\binom{n-y}{x-y}$ and ${\binom{n}{z}}^{-1}$ using \eqref{eq:stirling} and simplify the polynomial part of this product to get \eqref{eq:prob_1set_poly3c}. If instead $x=y=z$, then $\binom{y}{y+z-x}=\binom{y}{0}$ and $\binom{n-y}{x-y}=\binom{n-y}{0}$ both of which equal 1. Therefore the bound only involves ${\binom{n}{z}}^{-1}$ which we bound using \eqref{eq:stirling} and the resulting polynomial part simplifies to \eqref{eq:prob_1set_poly3d}.
\end{proof}

\begin{corollary}
\label{cor:pi_monotonicity}
If $n>2y$, then $\pi(y,y,y)$ is monotonically increasing in $y$.
\end{corollary}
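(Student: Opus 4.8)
The plan is to read off the closed form for $\pi(y,y,y)$ directly from the last case of Lemma \ref{lem:prob_1set_poly} and then verify monotonicity by an elementary derivative computation. Specializing \eqref{eq:prob_1set_poly3d} to $x=y=z$ gives
$$\pi(y,y,y) = \left(\frac{5}{4}\right)^2 \left[\frac{2\pi y(n-y)}{n}\right]^{\frac{1}{2}},$$
so that, for fixed $n$, this quantity is a fixed positive constant times $\left[y(n-y)\right]^{1/2}$.

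First I would note that since the map $t \mapsto \sqrt{t}$ is strictly increasing for $t>0$, it suffices to prove that the radicand $g(y) := y(n-y) = ny - y^2$ is increasing in $y$ on the relevant range. Differentiating yields $g'(y) = n - 2y$, which is strictly positive precisely when $n > 2y$. Consequently $g$, and hence $\sqrt{g}$, and hence $\pi(y,y,y)$, is monotonically increasing in $y$ whenever $n > 2y$, which is exactly the stated hypothesis.

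There is essentially no obstacle here: the substantive content of the corollary is the explicit evaluation of $\pi$ on the diagonal $x=y=z$ already supplied by Lemma \ref{lem:prob_1set_poly}, after which the monotonicity claim reduces to a one-line calculus fact. The only points meriting (minimal) care are that the prefactor $(5/4)^2 (2\pi/n)^{1/2}$ is independent of $y$ and therefore irrelevant to the monotonicity, and that the condition $n>2y$ is sharp: it is precisely the threshold at which $g'$ changes sign, since $g$ attains its maximum at $y=n/2$ and decreases beyond it.
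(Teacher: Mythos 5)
Your proof is correct and follows essentially the same route as the paper: specialize \eqref{eq:prob_1set_poly3d} to the diagonal and observe that the result increases in $y$ precisely while $y<n/2$. If anything, your version is slightly more careful than the paper's, which writes $\pi(y,y,y)=c\sqrt{y}$ with $c>0$ treated as constant (thereby glossing over the $y$-dependence of the factor $n-y$), whereas you correctly differentiate the radicand $y(n-y)$ and use the hypothesis $n>2y$ exactly where it is needed.
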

\begin{proof} If $n>2y$, \eqref{eq:prob_1set_poly3d} implies that $\pi(y,y,y)$ is proportional to $\sqrt{y}$, i.e. $\pi(y,y,y) = c\sqrt{y}$, with $c>0$ and $c\sqrt{y}$ is monotonic in $y$.
\end{proof}

\section{Restricted isometry constants and Compressed Sensing algorithms}\label{sec:ric12}
Here we introduce $\mathrm{RIC}_2$ and briefly discuss the implications of $\mathrm{RIC}_1$ and $\mathrm{RIC}_2$ to compressed sensing algorithms in Section \ref{sec:sec_ric12}. In Section \ref{sec:algorithms} we present the first ever quantitative comparison of the performance guarantees of some of the compressed sensing algorithms proposed for sparse matrices as stated in Definition \ref{def:1set_expansion}.

\subsection{Restricted isometry constants}\label{sec:sec_ric12}
It is possible to include noise in the Compressed Sensing model, for instance $y=Ax+e$ where $e$ is a noise vector capturing the model misfit or the non-sparsity of the signal $x$. The $\ell_0$-minimization problem \eqref{eq:decoding_exact} in the noise case setting is
\begin{equation}
\label{eq:decoding_noise}
 \min_{x \in \chi^N} \|x\|_0 \quad \mbox{subject to} \quad \|Ax - y\|_2<\|e\|_2,
\end{equation}
where $\|e\|_2$ is the magnitude of the noise.

Problems \eqref{eq:decoding_exact} and \eqref{eq:decoding_noise} are in general NP-hard and hence intractable. To benefit from the rich literature of algorithms available in both convex and non-convex optimization the  $\ell_0$-minimization problem is relaxed to an $\ell_p$-minimization one for $0<p\leq1$. It is well known that the $\ell_p$ norm for $0<p\leq1$ are sparsifying norms, see \cite{candes2005decoding,foucart2009sparsest}.  In addition, there are specifically designed classes of algorithms that take on the $\ell_0$ problem and they have been referred to as greedy algorithms. When using dense sensing matrices, $A$, popular greedy algorithms include Normalized Iterative Hard Thresholding (NIHT), \cite{blumensath2010niht}, Compressive Sampling Matching Pursuits (CoSAMP), \cite{needell2009cosamp}, and Subspace Pursuit (SP), \cite{dai2009subspace}. When $A$ is sparse and non-mean zero, a different set of {\em combinatorial} greedy algorithms have been proposed which iteratively locates and eliminate large (in magnitude) components of the vector, \cite{berinde2008combining}. They include Expander Matching Pursuit (EMP), \cite{indyk2008near}, Sparse Matching Pursuit (SMP), \cite{berinde2008practical}, Sequential Sparse Matching Pursuit (SSMP), \cite{berinde2009sequential}, Left Degree Dependent Signal Recovery (LDDSR), \cite{xu2007efficient}, and Expander Recovery (ER), \cite{jafarpour2008efficient,jafarpour2009efficient}.

The convergence analysis of nearly all of these algorithms rely heavily on restricted isometry constants (RIC). As we saw earlier RICs measures how near isometry  $A$ is when applied to $k$-sparse vectors in some norm. For the $\ell_1$ norm, also known as the Manhattan norm, $\mathrm{RIC}_1$ is stated in \eqref{eq:ric1}.  The restricted Euclidian norm isometry, introduced by Cand\`es in \cite{candes2008restricted}, is denoted by $\mathrm{RIC}_2$ and is defined in Definition \ref{def:ric2}.
\begin{definition}[$\mathrm{RIC}_2$]
\label{def:ric2}
  Define $\chi^N$ to be the set of all $k-$sparse vectors and draw  an $n\times N$ matrix $A$, then for all $x\in\chi^N$, $A$ has $\mathrm{RIC}_2$, with lower and upper $\mathrm{RIC}_2$, $L(k,n,N;A)$ and $U(k,n,N;A)$ respectively, when the following holds.
\begin{multline*}
 \left(1-L(k,n,N;A)\right)\|x\|_2 \leq \|Ax\|_2 \\ \leq \left(1+U(k,n,N;A)\right)\|x\|_2.
\end{multline*}
\end{definition}

The computation of $\mathrm{RIC}_1$ for adjacency matrices of lossless $(k,d,\e)$-expander graphs is equivalent to calculating $\e$.  The computation of $\mathrm{RIC}_2$ is intractable except for trivially small problem sizes $(k,n,N)$ because it involves doing a combinatorial search over all $\binom{N}{k}$ column submatrices of $A$. As a results attempts have been made to derive $\mathrm{RIC}_2$ bounds. Some of these attempts have been successful in deriving $\mathrm{RIC}_2$ bounds for the Gaussian ensemble and these bounds have evolved from the first by Cand\`es and Tao in \cite{candes2005decoding}, improved by Blanchard, Cartis and Tanner in \cite{blanchard2011compressed} and further improved by Bah and Tanner in \cite{bah2010improved}.

$\mathrm{RIC}_2$ bounds have been used to derive sampling theorems for compressed sensing algorithms - $\ell_1$-minimization and the greedy algorithms for dense matrices, NIHT, CoSAMP, and SP. Using the phase transition framework with $\mathrm{RIC}_2$ bounds Blanchard et. al. compared performance of these algorithms in \cite{blanchard2011phase}. In a similar vain, as another key contribution of this paper we provide sampling theorems for $\ell_1$-minimization and combinatorial greedy algorithms, EMP, SMP, SSMP, LDDSR and ER, proposed for SE and SSE matrices.

\subsection{Algorithms and their performance guarantees}\label{sec:algorithms}

Theoretical guarantees have been given for $\ell_1$ recovery and other greedy algorithms including EMP, SMP, SSMP, LDDSR and ER designed to do compressed sensing recovery with adjacency matrices of lossless expander graphs and by extension SSE matrices. Sparse matrices have been observed to have recovery properties comparable to dense matrices for $\ell_1$-minimization and some of the aforesaid algorithms, see \cite{berinde2008combining,berinde2008sparse,jafarpour2008efficient,xu2007further,xu2007efficient} and the references therein. Base on theoretical guarantees, we derived sampling theorems and present here phase transition curves which are plots of phase transition functions $\r^{alg}(\d;d,\e)$ of algorithms such that for $k/n \rightarrow \r < (1-\g)\r^{alg}(\d;d,\e), ~\g>0$, a given algorithm is guaranteed to recover all $k$-sparse signals with overwhelming probability approaching one exponentially in $n$.

\subsubsection{$\ell_1$-minimization}\label{sec:ell1_guarantee}
Note that $\ell_1$-minimization is not an algorithm per se, but can be solved using Linear Programming (LP) algorithms.  Berinde et. al. showed in \cite{berinde2008combining} that $\ell_1$-minimization can be used to perform signal recovery with binary matrices coming from expander graphs. We reproduce the formal statement of this guarantee in the following theorem, the proof of which can be found in \cite{berinde2008combining,berinde2008sparse}.


\begin{theorem}[Theorem 3, \cite{berinde2008combining}, Theorem 1, \cite{berinde2008sparse}]
\label{thm:ell1_guarantee}
Let $A$ be an adjacency matrix of a lossless $(k,d,\e)$-expander graph with $\alpha(\epsilon) = 2\epsilon/(1-2\epsilon)<1/2$.
Given any two vectors $x$, $\hat{x}$ such that $Ax = A\hat{x}$, and $||\hat{x}||_1 \leq ||x||_1$, let $x_k$ be the largest (in magnitude) coefficients of $x$, then
\begin{equation}
\label{eq:ell1_guarantee}
 ||x-\hat{x}||_1 \leq \frac{2}{1-2\alpha(\epsilon)}||x-x_k||_1.
\end{equation}
\end{theorem}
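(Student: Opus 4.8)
The plan is to reduce the statement to a \emph{null-space property} of the adjacency matrix and then invoke the expansion of the graph to establish that property. Write $v := \hat{x} - x$; since $Ax = A\hat{x}$ we have $v \in \ker A$. Let $S$ denote the support of $x_k$, i.e. the set of the $k$ largest-magnitude coordinates of $x$, so that $\|x_{S^c}\|_1 = \|x - x_k\|_1$. First I would exploit the hypothesis $\|\hat{x}\|_1 \le \|x\|_1$ together with the reverse triangle inequality applied separately on $S$ and on $S^c$: from $\|x\|_1 \ge \|(x+v)_S\|_1 + \|(x+v)_{S^c}\|_1 \ge \|x_S\|_1 - \|v_S\|_1 + \|v_{S^c}\|_1 - \|x_{S^c}\|_1$, combined with $\|x\|_1 = \|x_S\|_1 + \|x_{S^c}\|_1$, one obtains the cone-type inequality $\|v_{S^c}\|_1 \le \|v_S\|_1 + 2\|x - x_k\|_1$.

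The heart of the argument is the null-space property: for every $v \in \ker A$ and every index set $S$ with $|S| \le k$ one has $\|v_S\|_1 \le \alpha(\epsilon)\|v\|_1$ with $\alpha(\epsilon) = 2\epsilon/(1-2\epsilon)$. Granting this, the theorem follows by elementary algebra: combining it with the cone inequality gives $\|v\|_1 = \|v_S\|_1 + \|v_{S^c}\|_1 \le 2\|v_S\|_1 + 2\|x-x_k\|_1 \le 2\alpha(\epsilon)\|v\|_1 + 2\|x-x_k\|_1$, and since $\alpha(\epsilon) < 1/2$ by hypothesis we may rearrange to $\|x-\hat{x}\|_1 = \|v\|_1 \le \frac{2}{1-2\alpha(\epsilon)}\|x-x_k\|_1$, which is exactly \eqref{eq:ell1_guarantee}. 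The standing assumption $\alpha(\epsilon)<1/2$ serves precisely to keep $1-2\alpha(\epsilon)$ positive so that this rearrangement is valid.

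To prove the null-space property I would pass through the \emph{unique-neighbor} property of a $(k,d,\epsilon)$-expander. Double counting the $d|S|$ edges leaving any $S$ with $|S|\le k$ against its $|\Gamma(S)| \ge (1-\epsilon)d|S|$ neighbors (Definition \ref{def:expander}) shows that the number of right vertices joined to exactly one vertex of $S$ is at least $(1-2\epsilon)d|S|$. At such a unique neighbor $r$ with $S$-neighbor $j(r)$, the relation $(Av)_r = 0$ forces $v_{j(r)} = -\sum_{j'\in S^c,\, j'\sim r} v_{j'}$, hence $|v_{j(r)}| \le \sum_{j'\in S^c,\, j'\sim r}|v_{j'}|$. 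Summing these inequalities over the unique neighbors, bounding the multiplicity of each column of $A$ on the right by its degree $d$, and converting the count $(1-2\epsilon)d|S|$ into a lower bound on the left yields $\|v_S\|_1 \le \alpha(\epsilon)\|v\|_1$. Since at a unique neighbor the single surviving contribution is $\pm v_{j(r)}$, whose magnitude is $|v_{j(r)}|$ regardless of sign, the SSE case is handled verbatim with absolute values, mirroring the extension of Theorem \ref{thm:ric1}.

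I expect the main obstacle to be the bookkeeping in this last step that produces the \emph{sharp} constant $\alpha(\epsilon) = 2\epsilon/(1-2\epsilon)$ rather than a weaker one. Summing $|v_{j(r)}|$ over unique neighbors weights each coordinate of $S$ by its number of unique neighbors, which is not uniform; a naive estimate replacing these weights by $d$ and controlling the deficit by $\|v\|_\infty$ loses a factor, and indeed using the $\mathrm{RIC}_1$ bound of Theorem \ref{thm:ric1} by itself only delivers $\|v_S\|_1 \le (1-2\epsilon)^{-1}\|v_{S^c}\|_1$, a strictly worse constant. The sharp value requires ordering the coordinates of $v$ by decreasing magnitude, partitioning $S^c$ into consecutive blocks of size $k$, and using monotonicity of the sorted sequence (a summation-by-parts, layer-cake type estimate) to control the collision contributions of the tail; only this careful accounting, together with the unique-neighbor count $(1-2\epsilon)d|S|$, yields exactly $2\epsilon/(1-2\epsilon)$. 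Everything else is the routine optimality-plus-triangle-inequality reduction of the first two paragraphs.
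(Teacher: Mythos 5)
You should first note that the paper does not prove Theorem \ref{thm:ell1_guarantee} at all: it reproduces the statement and explicitly defers the proof to \cite{berinde2008combining,berinde2008sparse}, so your proposal has to be measured against the argument in those references. Your outer reduction is precisely that argument and is complete and correct: with $v=\hat{x}-x\in\ker A$, the cone inequality $\|v_{S^c}\|_1\le\|v_S\|_1+2\|x-x_k\|_1$ and the algebra passing from the null-space property $\|v_S\|_1\le\alpha(\epsilon)\|v\|_1$ to \eqref{eq:ell1_guarantee} are exactly right, as is your remark that $\alpha(\epsilon)<1/2$ is what legitimizes the final rearrangement.

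The genuine gap is that the null-space property with the specific constant $\alpha(\epsilon)=2\epsilon/(1-2\epsilon)$ --- which is the entire mathematical content of the theorem --- is asserted rather than proved, and the mechanism you sketch for it cannot deliver that constant. Summing the unique-neighbor identities and bounding the multiplicity of each $j'\in S^c$ by $d$ gives at best $\|v_S\|_1\le(1-2\epsilon)^{-1}\|v_{S^c}\|_1$, i.e.\ $\|v_S\|_1\le\frac{1}{2-2\epsilon}\|v\|_1$; since $\frac{1}{2-2\epsilon}\ge 1/2$ for all $\epsilon\ge 0$, this forces $1-2\alpha\le 0$ and the concluding division fails. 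You acknowledge this, but you then attribute the repair to a summation-by-parts refinement of the \emph{same} unique-neighbor count $(1-2\epsilon)d|S|$, and that is not where the improvement comes from. In Lemma 9 of \cite{berinde2008combining} the decisive input is a collision bound between $S$ and each tail block: after sorting $v$ and partitioning $S^c$ into consecutive blocks $S_1,S_2,\dots$ of size $k$, one applies the expansion property to the size-$2k$ sets $S\cup S_l$ to conclude that the number of edges from $S_l$ into $\Gamma(S)$ is at most $2\epsilon dk$; only then does the block estimate $\max_{j\in S_l}|v_j|\le k^{-1}\|v_{S_{l-1}}\|_1$ yield $\sum_{r\in\Gamma(S)}|(Av_{S^c})_r|\le 2\epsilon d\|v\|_1$ and hence $(1-2\epsilon)\|v_S\|_1\le 2\epsilon\|v\|_1$. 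This second, different application of expansion --- to sets of cardinality up to $2k$, a point on which the literal hypothesis ``$(k,d,\epsilon)$-expander'' is itself slightly loose --- is the missing idea: without it a single column of $S^c$ may send all $d$ of its edges into $\Gamma(S)$, monotonicity of the sorted sequence only caps its contribution at $d\,k^{-1}\|v_{S_{l-1}}\|_1$, and one recovers nothing better than the trivial $d\|v_{S^c}\|_1$ bound and a constant $\alpha\ge 1/2$.
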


The condition that $\alpha(\epsilon) = 2\epsilon/(1-2\epsilon)<1/2$ implies the sampling theorem stated as Corollary \ref{cor:ell1_guarantee}, that when satisfied ensures a positive upper bound in \eqref{eq:ell1_guarantee}.  The resulting sampling theorem is given by $\r^{\ell_1}(\d;d,\e)$ using $\e=1/6$ from Corollary \ref{cor:ell1_guarantee}.

\begin{corollary}[\cite{berinde2008combining}]
\label{cor:ell1_guarantee}
$\ell_1$-minimization is guaranteed to recover any $k$-sparse vector from its linear measurement by an adjacency matrix of a lossless $(k,d,\e)$-expander graph with $\epsilon < 1/6$.
\end{corollary}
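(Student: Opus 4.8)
The plan is to derive this as an immediate specialization of Theorem \ref{thm:ell1_guarantee}, which already contains all of the analytic content. The only work is (i) translating the hypothesis $\alpha(\epsilon)<1/2$ of that theorem into the stated threshold $\epsilon<1/6$, and (ii) observing that when the true signal is exactly $k$-sparse the best $k$-term error vanishes, collapsing the error bound \eqref{eq:ell1_guarantee} to exact recovery.

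First I would establish the equivalence of the two conditions. Recall $\alpha(\epsilon)=2\epsilon/(1-2\epsilon)$, and that we restrict to $0<\epsilon<1/2$, so that the denominator $1-2\epsilon$ is strictly positive. Then $\alpha(\epsilon)<1/2$ is equivalent, after cross-multiplying by the positive quantity $2(1-2\epsilon)$, to $4\epsilon<1-2\epsilon$, that is $6\epsilon<1$, i.e. $\epsilon<1/6$. The same computation shows that under $\epsilon<1/6$ one has $1-2\alpha(\epsilon)>0$, so that the constant $2/(1-2\alpha(\epsilon))$ multiplying the right-hand side of \eqref{eq:ell1_guarantee} is finite and positive.

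Next I would instantiate Theorem \ref{thm:ell1_guarantee} at the recovery problem. Let $x$ be an arbitrary $k$-sparse vector, set $y=Ax$, and let $\hat{x}$ be any solution of the $\ell_1$-minimization program $\min\|z\|_1$ subject to $Az=y$. Because $x$ is itself feasible for this program we have $\|\hat{x}\|_1\le\|x\|_1$, and by feasibility of $\hat{x}$ we have $A\hat{x}=y=Ax$; thus the two hypotheses $Ax=A\hat{x}$ and $\|\hat{x}\|_1\le\|x\|_1$ of Theorem \ref{thm:ell1_guarantee} are met. Since $x$ is $k$-sparse, its best $k$-term approximation is $x_k=x$, so $\|x-x_k\|_1=0$. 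Substituting into \eqref{eq:ell1_guarantee} gives $\|x-\hat{x}\|_1\le \frac{2}{1-2\alpha(\epsilon)}\cdot 0=0$, whence $\hat{x}=x$ and recovery is exact.

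There is no genuine obstacle here, since the corollary is purely a specialization of the quoted guarantee. The one point that must be checked carefully is the positivity bookkeeping in the first step, in particular that $\epsilon<1/2$ is in force so that the cross-multiplication preserves the direction of the inequality and the recovery constant remains positive; it is precisely this strict positivity of $1-2\alpha(\epsilon)$ that converts the a priori error bound of Theorem \ref{thm:ell1_guarantee} into the exact-recovery conclusion $\hat{x}=x$.
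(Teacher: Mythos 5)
Your proposal is correct and follows essentially the same route as the paper, which simply observes that requiring the denominator $1-2\alpha(\epsilon)$ in \eqref{eq:ell1_guarantee} to be positive is equivalent to $\epsilon<1/6$. You merely spell out the cross-multiplication and the instantiation with $x_k=x$ for an exactly $k$-sparse signal, details the paper leaves implicit.
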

\begin{proof} Setting the denominator of the fraction in the right hand side of \eqref{eq:ell1_guarantee} to be greater than zero gives the required results. \end{proof}

\subsubsection{Sequential Sparse Matching Pursuit (SSMP)}
Introduced by Indyk and Ruzic in \cite{berinde2009sequential}, SSMP has evolved as an improvement of Sparse Matching Pursuit (SMP) which was an improvement on Expander Matching Pursuit (EMP). EMP also introduced by Indyk and Ruzic in \cite{indyk2008near} 
uses a voting-like mechanism to identify and eliminate large (in magnitude) components of signal. EMP's drawback is that the {\em empirical} number of measurements it requires to achieve correct recovery is suboptimal. SMP, introduced by Berinde, Indyk and Ruzic in \cite{berinde2008practical},  improved on the drawback of EMP. However, it's original version had convergence problems when the input parameters ($k$ and $n$) fall outside the theoretically guaranteed region. This is fixed by the SMP package which forces convergence when the user provides an additional convergence parameter. In order to correct the aforementioned problems of EMP and SMP, Indyk and Ruzic developed SSMP. It is a version of SMP where updates are done sequentially instead of parallel, consequently convergence is automatically achieved. All three algorithms have the same theoretical recovery guarantees, which we state in Theorem \ref{thm:ssmp_guarantee}, but SSMP has better empirical performances compared to it's predecessors.

Algorithm 1 below is a pseudo-code of the SSMP algorithm based on the following problem setting. The measurement matrix $A$ is an $n\times N$ adjacency matrix of a lossless $((c+1)k,d,\e/2)$-expander scaled by $d$ and $A$ has a lower $\mathrm{RIC}_1$, $L\left((c+1)k,n,N\right) = \e$. The measurement vector $y = Ax + e$ where $e$ is a noise vector and $\eta = \|e\|_1$. We denote by $H_k (y)$ the hard thresholding operator which sets to zero all but the largest, in magnitude, $k$ entries of $y$.

\begin{table}[h]
\centering
\begin{tabular}{l}
\hline
 \textbf{Algorithm 1} Sequential Sparse Matching Pursuit (SSMP) \cite{berinde2009sequential}\\
 \hline
 \textbf{Input:} $A, ~y, ~\eta$\\
 \textbf{Output:} $k$-sparse approximation $\hat{x}$ of the target signal $x$\\
 \hline
 \textbf{Initialization:}\\
    \quad $1.$ Set $j=0$ \\
    \quad $2.$ Set $x_j=0$ \\
 \textbf{Iteration:} Repeat $T = \mathcal{O}\left(\log\left(\|x\|_1/\eta\right)\right)$ times\\
    \quad $1.$ Set $j=j+1$ \\
    \quad $2.$ Repeat $(c-1)k$ times \\
    \qquad $a)$ Find a coordinate $i$ \& an increment $z$ that \qquad \qquad\\
    \quad \quad \quad ~minimizes $\|A\left(x_j + ze_i\right) - y\|_1$ \\
    \qquad $b)$ Set $x_j$ to $x_j + ze_i$\\
    \quad $3.$ Set $x_j = H_k\left(x_j\right)$ \\
 \textbf{Return} $\hat{x} = x^T$ \\
 \hline
\end{tabular}
\label{tab:ssmp_pcode}
\end{table}

The recovery guarantees for SSMP (also for EMP and SMP) are formalized by the following theorem from which we deduce the recovery condition (sampling theorem) in terms of $\e$ in Corollary \ref{cor:ssmp_guarantee}. Based on Corollary \ref{cor:ssmp_guarantee} deduced from Theorem \ref{thm:ssmp_guarantee} we derived phase transition, $\r^{SSMP}(\d;d,\e)$, for SSMP.

\begin{theorem}[Theorem 10, \cite{indyk2008near}]
\label{thm:ssmp_guarantee}
Let $A$ be an adjacency matrix of a lossless $(k,d,\e)$-expander graph with $\epsilon<1/16$.  Given a vector $y = Ax + e$, the algorithm returns approximation vector $\hat{x}$ satisfying
\begin{equation}
\label{eq:ssmp_guarantee}
||x-\hat{x}||_1 \leq \frac{1-4\epsilon}{1-16\epsilon}||x-x_k||_1 + \frac{6}{(1-16\epsilon)d}||e||_1,
\end{equation}
where $x_k$ is the $k$ largest (in magnitude) coordinates of $x$.
\end{theorem}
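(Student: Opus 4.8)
The plan is to establish geometric convergence of the residual $\|Ax_j - y\|_1$ across the outer iterations of Algorithm 1, and then convert the final residual bound into an $\ell_1$ error bound using the expansion property. Throughout I would work with the unscaled adjacency matrix $A$ and rely on only one structural fact about the expander, namely that by Theorem \ref{thm:ric1} any vector $u$ supported on a set of size $\mathcal{O}(k)$ satisfies $(1-2\e)d\|u\|_1 \le \|Au\|_1 \le d\|u\|_1$, together with its sharper combinatorial form, the unique-neighbor property.

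First I would relate the residual to the true error. Writing $\Delta_j = x - x_j$ and using $y = Ax + e$, the residual is $\|Ax_j - y\|_1 = \|A\Delta_j + e\|_1$, so the lower RIP-1 bound gives $\|Ax_j - y\|_1 \ge (1-2\e)d\|\Delta_j\|_1 - \eta$ whenever $\Delta_j$ is $\mathcal{O}(k)$-sparse, while the upper bound controls the reverse direction. This shows that driving the residual down to the noise floor $\eta$ forces $\|\Delta_j\|_1$ to be small, and it is where the tail term $\|x-x_k\|_1$ enters, through the part of $x$ lying outside its top $k$ coordinates that a $k$-sparse estimate cannot track.

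The technical heart is a \emph{single-step progress lemma}: whenever the current residual exceeds a fixed multiple of the noise-plus-tail floor, there exist a coordinate $i$ and an increment $z$ with $\|A(x_j+ze_i)-y\|_1 \le (1-c/k)\|Ax_j-y\|_1$ for an absolute constant $c>0$. I would prove this via the unique-neighbor property of the expander: a constant fraction of the $d$ neighbors of any heavy coordinate of $\Delta_j$ are connected to that coordinate alone, so correcting the single heaviest coordinate cancels a $\Theta(1)$ fraction of its contribution to the residual, and the greedy minimization over all $(i,z)$ can only do at least as well. Summing this multiplicative decrease over the $(c-1)k$ inner iterations then contracts the residual by a constant factor per outer iteration.

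Next I would bound the effect of the hard-thresholding step $H_k$, again using RIP-1: projecting an $\mathcal{O}(k)$-sparse iterate onto its top $k$ entries inflates the residual by at most a controlled amount proportional to the residual itself plus $d\|x-x_k\|_1$. Combining the inner-loop contraction with this thresholding bound yields a recursion $r_{j+1} \le \alpha\, r_j + \beta\big(\eta + d\|x-x_k\|_1\big)$ with $\alpha<1$ precisely when $\e<1/16$. Unrolling over $T = \mathcal{O}(\log(\|x\|_1/\eta))$ outer iterations collapses the $\alpha^T$ term and leaves only the floor, and a final application of the lower RIP-1 bound converts $r_T$ into the stated bound on $\|x-\hat{x}\|_1$. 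I expect the main obstacle to be the single-step progress lemma combined with the tight constant bookkeeping: extracting the exact coefficients $\frac{1-4\e}{1-16\e}$ and $\frac{6}{(1-16\e)d}$, and in particular the threshold $\e<1/16$ ensuring $\alpha<1$, requires propagating the expansion loss $\e$ through the progress, thresholding, and geometric-summation steps without any slack.
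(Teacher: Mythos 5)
The first thing to note is that the paper does not prove this theorem at all: it is imported verbatim as Theorem 10 of \cite{indyk2008near} (see also \cite{berinde2009sequential}), and the manuscript only uses its hypothesis $\epsilon<1/16$ to read off the sampling threshold in Corollary \ref{cor:ssmp_guarantee}. So there is no in-paper proof to compare against, and your attempt has to be judged as a reconstruction of the argument in the cited reference.

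As such a reconstruction, your outline identifies the correct architecture --- residual-versus-error comparison via RIP-1, a greedy single-step progress lemma driven by the unique-neighbor property, control of the hard-thresholding step, and a geometric recursion unrolled over the outer loop --- and this is indeed how \cite{indyk2008near,berinde2009sequential} proceed. But as written it is a plan rather than a proof. The entire burden of the theorem sits in the two places you defer: the single-step progress lemma (the claim that the best $(i,z)$ reduces the residual by a factor $1-c/k$ is asserted, not derived; the heaviest coordinate of $\Delta_j$ need only carry a $1/\mathcal{O}(k)$ fraction of $\|\Delta_j\|_1$, and showing that its unique neighbors witness an uncancelled $\Theta(1)$ fraction of that contribution in the $\ell_1$ residual is the delicate step), and the constant bookkeeping that produces exactly $\frac{1-4\epsilon}{1-16\epsilon}$, $\frac{6}{(1-16\epsilon)d}$, and the threshold $\epsilon<1/16$. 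One further point is elided: the expansion must hold for sets of size $\mathcal{O}(k)$ --- the paper's own description of Algorithm 1 takes $A$ to be the adjacency matrix of a $((c+1)k,d,\epsilon/2)$-expander, not merely a $(k,d,\epsilon)$-expander --- because the iterates and the differences $x-x_j$ have support of size a constant multiple of $k$; your appeal to Theorem \ref{thm:ric1} ``for vectors supported on a set of size $\mathcal{O}(k)$'' implicitly uses this stronger hypothesis without flagging it. None of these is a wrong turn, but until the progress lemma is actually proven and the constants propagated, the bound \eqref{eq:ssmp_guarantee} has not been established.
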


\begin{corollary}[\cite{indyk2008near}]
\label{cor:ssmp_guarantee}
SSMP, EMP, and SMP are all guaranteed to recover any $k$-sparse vector from its linear measurement by an adjacency matrix of a lossless $(k,d,\e)$-expander graph with $\epsilon < 1/16$.
\end{corollary}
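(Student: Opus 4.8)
The plan is to mirror the one-line argument used for Corollary~\ref{cor:ell1_guarantee}. The recovery bound \eqref{eq:ssmp_guarantee} from Theorem~\ref{thm:ssmp_guarantee} is only a meaningful (finite, non-negative) error estimate when the two multiplicative constants on its right-hand side are positive and bounded. Both the best-$k$-term coefficient $\frac{1-4\epsilon}{1-16\epsilon}$ and the noise coefficient $\frac{6}{(1-16\epsilon)d}$ share the factor $1-16\epsilon$ in their denominators, so the single binding requirement is $1-16\epsilon>0$, i.e.\ $\epsilon<1/16$. I would then check that no other sign constraint is more restrictive: for $\epsilon<1/16$ we have $4\epsilon<1/4<1$, so the numerator $1-4\epsilon$ is automatically positive and the entire right-hand side is a valid upper bound; likewise $d>0$ plays no role in restricting the admissible range of $\epsilon$.

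To obtain the stated exact-recovery conclusion, I would specialize \eqref{eq:ssmp_guarantee} to a genuinely $k$-sparse signal measured without noise. Taking $x$ exactly $k$-sparse gives $x=x_k$, hence $\|x-x_k\|_1=0$, and taking $e=0$ gives $\|e\|_1=0$; substituting both into \eqref{eq:ssmp_guarantee} yields $\|x-\hat{x}\|_1\le 0$, so $\hat{x}=x$ and SSMP recovers $x$ exactly. Since EMP and SMP share the guarantee of Theorem~\ref{thm:ssmp_guarantee}, the identical conclusion holds for all three algorithms. This is a direct sign-of-denominator argument, so there is no genuine obstacle; the only point requiring any care is confirming that it is $1-16\epsilon$, rather than $1-4\epsilon$ or the expander parameter $d$, that dictates the threshold, which is why the admissible regime is $\epsilon<1/16$ rather than the weaker $\epsilon<1/4$.
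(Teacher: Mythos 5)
Your proposal is correct and matches the paper's (implicit) argument: the paper gives no separate proof for this corollary, but the intended reasoning is exactly the one-line denominator-positivity argument it uses for Corollary~\ref{cor:ell1_guarantee}, namely that $1-16\epsilon>0$ in \eqref{eq:ssmp_guarantee} forces $\epsilon<1/16$, after which the exact-recovery conclusion for $k$-sparse noiseless signals follows by setting $\|x-x_k\|_1=\|e\|_1=0$. Your additional check that $1-4\epsilon$ and $d$ impose no further restriction is a sensible, if routine, completion of the same argument.
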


\subsubsection{Expander Recovery (ER)}
Introduced by Jafarpour et. al. in \cite{jafarpour2008efficient,jafarpour2009efficient}, ER is an improvement on an earlier algorithm introduced by Xu and Hassibi in \cite{xu2007efficient} known as Left Degree Dependent Signal Recovery (LDDSR). The improvement was mainly on the number of iterations used by the algorithms and the type of expanders used, from $(k,d,1/4)$-expanders for LDDSR to $(k,d,\epsilon)$-expander for any $\epsilon < 1/4$ for ER. Both algorithms use this concept of a {\em gap} defined below.

\begin{definition}[gap, \cite{xu2007efficient,jafarpour2008efficient,jafarpour2009efficient}]
\label{def:gap}
Let $x$ be the original signal and $y = Ax.$ Furthermore, let $\hat{x}$ be our estimate for $x$. For each value $y_i$ we define a gap $g_i$ as:
\begin{equation}
\label{eq:gap}
g_i = y_i - \sum_{j=1}^{N} A_{ij}\hat{x}_j.
\end{equation}
\end{definition}

Algorithm 2 below is a pseudo-code of the ER algorithm for an original $k$-sparse signal $x\in \mathbb{R}^N$ and the measurements $y = Ax$ with an ${n\times N}$ measurement matrix $A$ that is an adjacency matrix of a lossless $(2k,d,\e)$-expander and $\e< 1/4$.  The measurements are assumed to be without noise, so we aim for exact recovery. The authors of \cite{jafarpour2008efficient,jafarpour2009efficient} have a modified version of the algorithm for when $x$ is almost $k$-sparse.

\begin{table}[h]
\centering
\begin{tabular}{l}
\hline
 \textbf{Algorithm 2} Expander Recovery (ER) \cite{jafarpour2008efficient,jafarpour2009efficient}\\
 \hline
 \textbf{Input:} $A, ~y$\\
 \textbf{Output:} $k$-sparse approximation $\hat{x}$ of the original signal $x$\\
 \hline
 \textbf{Initialization:}\\
    \quad $1.$ Set $\hat{x}=0$ \\
 \textbf{Iteration:} Repeat at most $2k$ times\\
    \quad $1.$ \textbf{if} $y = A\hat{x}$ \textbf{then}\\
    \quad $2.$ \quad \textbf{return} $\hat{x}$ and exit \\
    \quad $3.$ \textbf{else}\\
    \quad $4.$ \quad Find a variable node $\hat{x}_j$ such that at least $(1-2\e)d$ of the\\
    \qquad \quad measurements it participated in, have identical gap $g$\\
    \quad $5.$ \quad Set $\hat{x}_j = \hat{x}_j + g$, and go to 2. \\
    \quad $6.$ \textbf{end if}\\
 \hline
\end{tabular}
\label{tab:er_pcode}
\end{table}

Theorem \ref{thm:ER} gives recovery guarantees for ER. Directly from this theorem we read-off the recovery condition in terms of $\e$ for Corollary \ref{cor:ER}, from which
we derive phase transition functions, $\r^{ER}(\d;d,\e)$, for ER.
\begin{theorem}[Theorem 6, \cite{jafarpour2009efficient}]
\label{thm:ER}
Let $A\in \mathbb{R}^{n\times N}$ be the adjacency matrix of a lossless $(2k,d,\epsilon)$-expander graph, where $\epsilon < 1/4$ and $n = \mathcal{O}\left(k\log(N/k)\right)$. Then, for any $k$-sparse signal $x$, given $y = Ax$, ER recovers $x$ successfully in at most $2k$ iterations.
\end{theorem}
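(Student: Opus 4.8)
The plan is to analyze the algorithm through the residual, or \emph{gap}, vector. Writing $z := x - \hat{x}$ for the current error, the relation $y = Ax$ shows that the gap of Definition \ref{def:gap} is exactly $g = Az$, so the value seen by measurement $i$ is $g_i = \sum_{j:\,A_{ij}=1} z_j$. Everything rests on a \emph{unique neighbour} consequence of expansion: for any $S = \mathrm{supp}(z)$ with $|S|\le 2k$, the $(2k,d,\e)$-expansion $|\G(S)| \ge (1-\e)d|S|$ forces the number of collision edges leaving $S$ to be at most $\e d|S|$, whence at least $(1-2\e)d|S|$ of the edges land on vertices touched by a single element of $S$ (the same collision count used for the $\ell_1$ unique-neighbour bound). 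Such a measurement $i$, adjacent to a unique $j\in S$, reports precisely $g_i = z_j$.

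First I would establish that an update is always available. Averaging the at least $(1-2\e)d|S|$ unique edges over the $|S|$ coordinates of $S$ yields some $j\in S$ possessing at least $(1-2\e)d$ unique neighbours, all reporting the identical value $z_j\neq 0$; hence while $\hat{x}\neq x$ the search in step~4 of Algorithm~2 cannot fail. Second I would record the progress each admissible update makes on the potential $N(\hat{x}) := |\{i : g_i \neq 0\}|$. When a node $j$ with at least $(1-2\e)d$ neighbours of identical gap $g$ is updated by $\hat{x}_j \leftarrow \hat{x}_j + g$, those $\ge (1-2\e)d$ measurements become satisfied while only the remaining $\le 2\e d$ neighbours can be disturbed, so $N$ drops by at least $(1-2\e)d - 2\e d = (1-4\e)d$, which is strictly positive precisely because $\e < 1/4$. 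Thus $N$ is a nonnegative integer that strictly decreases at every iteration.

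With these two facts I would close the argument by maintaining the invariant $|\mathrm{supp}(z)| \le 2k$, which holds at initialization since $z=x$ is $k$-sparse and which keeps the expansion hypothesis applicable at every step. The unique-neighbour property also gives injectivity of $A$ on supports of size at most $2k$ (if $z\neq 0$ some unique neighbour carries a nonzero gap), so $N(\hat{x})=0$ forces $g=Az=0$ and therefore $z=0$, i.e.\ exact recovery. Combining the strict decrease of $N$ with the fact that each correct update removes one coordinate from $\mathrm{supp}(z)$ without creating a new one, the error support is emptied after at most $2k$ corrections, giving termination within $2k$ iterations; the factor $2k$ in the expander order supplies exactly the room to analyse a candidate node adjoined to the current ($\le k$-sized) error support. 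The hypothesis $n = \mathcal{O}(k\log(N/k))$ enters only through Proposition~\ref{pro:para_opt}, to guarantee that a $(2k,d,\e)$-expander with these parameters exists.

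The step I expect to be the main obstacle is showing that the node the algorithm actually selects yields a \emph{correct} update, equivalently that no admissible step can push $|\mathrm{supp}(z)|$ above $2k$. Existence of a good node and the per-step decrease of $N$ are direct collision counts, but ruling out a node whose agreed gap $g$ differs from $z_j$ requires coupling the threshold $\e<1/4$ (so that $(1-2\e)d > d/2$, forcing a strict majority) with the expansion applied to $\mathrm{supp}(z)\cup\{j\}$; this coupling, rather than the termination bookkeeping, is where I would spend the most care.
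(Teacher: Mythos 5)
This theorem is not proved in the paper at all: it is quoted verbatim as Theorem~6 of \cite{jafarpour2009efficient} and used only to read off Corollary~\ref{cor:ER}, so there is no in-paper argument to compare yours against; what follows assesses your reconstruction on its own terms. The two pillars you do build are sound and are the standard ones. Writing $z=x-\hat{x}$ and $g=Az$, the unique-neighbour count $u\ge 2|\G(S)|-d|S|\ge(1-2\e)d|S|$ for $S=\mathrm{supp}(z)$ with $|S|\le 2k$ correctly yields a node $j$ with at least $(1-2\e)d$ neighbours all reporting $z_j\ne 0$, and the potential $N=\|Az\|_0$ does drop by at least $(1-2\e)d-2\e d=(1-4\e)d>0$ per update. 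Injectivity on $2k$-supported vectors via a nonzero unique neighbour is also fine.

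The gap is exactly where you flag it, and it is genuine on two counts. First, the support invariant: your closing paragraph derives ``at most $2k$ corrections'' from the premise that \emph{each} update removes a coordinate of $\mathrm{supp}(z)$ without creating one, but that premise is the unproved correctness claim itself (and, if it were true, it would give $k$ iterations, not $2k$, since $|\mathrm{supp}(z)|$ starts at $k$). Without it, each update can enlarge $\mathrm{supp}(z)$ by one, and the $N$-argument alone only caps the number of iterations by $dk/\bigl((1-4\e)d\bigr)=k/(1-4\e)$, which exceeds $2k$ as soon as $\e>1/8$ and blows up as $\e\to 1/4$; so neither the invariant $|\mathrm{supp}(z)|\le 2k$ nor the stated iteration count follows from what you have written. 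Second, the repair you sketch --- applying expansion to $\mathrm{supp}(z)\cup\{j\}$ to force the agreed gap to equal $z_j$ by a majority argument --- cannot work at constant $\e$: expansion only bounds the collision edges of that set by $\e d\left(|\mathrm{supp}(z)|+1\right)$, which is of order $\e d k$ and therefore vastly larger than the $(1-2\e)d$ agreeing neighbours of the single node $j$, so no contradiction is available. The standard way to rescue the invariant is a different coupling, namely the two-sided inequality $(1-2\e)d\,|\mathrm{supp}(z)|\le\|Az\|_0\le\|Ax\|_0\le dk$ combined with the monotone decrease of $\|Az\|_0$, giving $|\mathrm{supp}(z)|\le k/(1-2\e)\le 2k$ for $\e\le 1/4$; but even then the bound of $2k$ on the number of iterations requires the finer accounting of \cite{jafarpour2009efficient}, which your proposal does not reproduce.
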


\begin{corollary}
\label{cor:ER}
ER is guaranteed to recover any $k$-sparse vector from its linear measurement by an adjacency matrix of a lossless $(k,d,\e)$-expander graph with $\epsilon < 1/4$.
\end{corollary}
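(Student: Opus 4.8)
The plan is to observe that Corollary \ref{cor:ER} is essentially a restatement of Theorem \ref{thm:ER}, reading off the recovery condition purely in terms of the expansion parameter $\e$, exactly as Corollaries \ref{cor:ell1_guarantee} and \ref{cor:ssmp_guarantee} do for their respective theorems. First I would invoke Theorem \ref{thm:ER}, which already asserts that ER recovers every $k$-sparse $x$ from $y=Ax$ in at most $2k$ iterations whenever $A$ is the adjacency matrix of a lossless expander with $\e<1/4$. Unlike the $\ell_1$ and SSMP cases---where one must additionally force the coefficient of the best-$k$-term error in \eqref{eq:ell1_guarantee} (respectively \eqref{eq:ssmp_guarantee}) to remain finite and positive---here the recovery guarantee in Theorem \ref{thm:ER} is already expressed directly as the single inequality $\e<1/4$, so no further manipulation of an error bound is required. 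The condition $\e<1/4$ therefore transfers verbatim.

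The one point that needs a word of explanation is the bookkeeping in the support size: Theorem \ref{thm:ER} requires $A$ to be a lossless $(2k,d,\e)$-expander in order to recover $k$-sparse vectors, whereas the corollary---like its two siblings---is phrased in terms of a lossless $(k,d,\e)$-expander. I would emphasize that this is purely a labeling convention adopted so that all three sampling-theorem corollaries are stated uniformly in $\e$. The quantity that matters for downstream use is the threshold on $\e$, since it is precisely this threshold that is fed into the phase-transition function $\r^{ER}(\d;d,\e)$; the constant factor relating the ambient sparsity $k$ to the expander order $2k$ is absorbed when one passes to the limiting $(\d,\r)$ description and does not alter the $\e$-threshold itself.

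Since there is no inequality to solve and no estimate to push through, there is no genuine obstacle here; the only care required is to make the $2k$-versus-$k$ convention explicit so that the reader can reconcile the hypothesis of Theorem \ref{thm:ER} with the statement of the corollary. Accordingly I would keep the proof to a single sentence, mirroring the brevity of the proof of Corollary \ref{cor:ell1_guarantee}: the claim is immediate from Theorem \ref{thm:ER} upon reading off the hypothesis $\e<1/4$.
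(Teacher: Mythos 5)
Your proposal is correct and takes essentially the same route as the paper: the paper offers no separate proof of Corollary \ref{cor:ER}, stating only that the condition $\e<1/4$ is read off directly from Theorem \ref{thm:ER}, which is exactly what you do. Your explicit remark reconciling the $(2k,d,\e)$-expander hypothesis of Theorem \ref{thm:ER} with the $(k,d,\e)$ phrasing of the corollary addresses a point the paper silently glosses over, and is a welcome clarification rather than a deviation.
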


\subsubsection{Comparisons of phase transitions of algorithms}
\begin{figure}[h]
\centering
\includegraphics[width=0.5\textwidth]{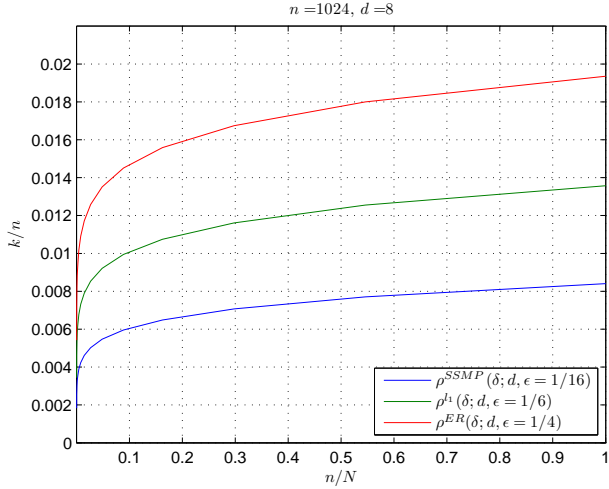}
\caption{Phase transition curves $\rho^{alg}\left(\delta;d,\e\right)$ computed over finite values of $\delta \in (0,1)$ with $d$ fixed and the different $\epsilon$ values for each algorithm - 1/4, 1/6 and 1/16 for ER, $\ell_1$ and SSMP respectively.}
\label{fig:pt_alg_compare}
\end{figure}

Fig. \ref{fig:pt_alg_compare} compares the phase transition plot of $\r^{SSMP}(\d;d,\e)$ for SSMP (also for EMP and SMP), the phase transition of plot $\r^{ER}(\d;d,\e)$ for ER (also of LDDSR) and the phase transition plot of $\r^{\ell_1}(\d;d,\e)$ for $\ell_1$-minimization. Remarkably, for ER and LDDSR recovery is guaranteed for a larger portion of the $(\delta,\rho)$ plane than is guaranteed by the theory for $\ell_1$-minimization using sparse matrices; however,  $\ell_1$-minimization has a larger recovery region than does SSMP, EMP, and SMP.

\begin{figure}[h]
\centering
\includegraphics[width=0.5\textwidth]{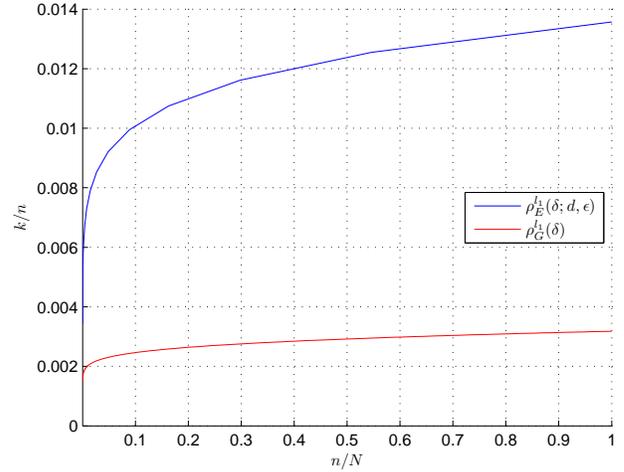}
\caption{Phase transition plots of $\ell_1$, $\rho^{\ell_1}_G\left(\delta\right)$, for Gaussian matrices derived using $\mathrm{RIC}_2$ and $\rho^{\ell_1}_E\left(\delta;d,\e\right)$ for adjacency matrices of expander graphs with $n=1024$, $d=8$, and $\e=1/6$.}
\label{fig:pt_l1_compare}
\end{figure}

Fig. \ref{fig:pt_l1_compare} shows a comparison of the phase transition of $\ell_1$-minimization as presented by Blanchard et. al. in \cite{blanchard2011phase} for dense Gaussian matrices based on $\mathrm{RIC}_2$ analysis and the phase transition we derived here for the sparse binary matrices coming from lossless expander based on $\mathrm{RIC}_1$ analysis. This shows a remarkable difference between the two with sparse matrices having better performance guarantees. However, these improved recovery guarantees are likely more due to the closer match of the method of analysis than to the efficacy of sparse matrices over dense matrices.


\section{Proof of mains results}\label{sec:proof}

\vspace{3mm}

\subsection{Proof of Theorem \ref{thm:prob_bound_1set_expansion}}\label{sec:main_proof}
By the dyadic splitting $\left|A_s\right|=\left|A_{\lceil\frac{s}{2}\rceil}^1\cup A_{\lfloor\frac{s}{2}\rfloor}^2\right|$ and therefore
\begin{align}
\label{eq:prob_1set_ssparse1}
& \hbox{Prob}\left(\left|A_s\right| \leq a_s\right) = \hbox{Prob}\left(\left|A_{\lceil\frac{s}{2}\rceil}^1\cup A_{\lfloor\frac{s}{2}\rfloor}^2\right| \leq a_s\right)\\
\label{eq:prob_1set_ssparse2}
& = \sum_{l_s} \sum_{l_{\lceil\frac{s}{2}\rceil},l_{\lfloor\frac{s}{2}\rfloor}} \hbox{Prob}\left(\left|A_{\lceil\frac{s}{2}\rceil}^1\cup A_{\lfloor\frac{s}{2}\rfloor}^2\right| = l_s\right)\\
\label{eq:prob_1set_ssparse3}
& = \sum_{l_s} \sum_{l_{\lceil\frac{s}{2}\rceil}^1} \sum_{l_{\lfloor\frac{s}{2}\rfloor}^2} \hbox{P}_n\left(l_s,l_{\lceil\frac{s}{2}\rceil}^1,l_{\lfloor\frac{s}{2}\rfloor}^2\right) \times \nonumber\\
& \qquad \hbox{Prob}\left(\left|A_{\lceil\frac{s}{2}\rceil}^1\right| = l_{\lceil\frac{s}{2}\rceil}^1\right) \hbox{Prob}\left(\left|A_{\lfloor\frac{s}{2}\rfloor}^2\right| = l_{\lfloor\frac{s}{2}\rfloor}^2\right).
\end{align}
From \eqref{eq:prob_1set_ssparse1} to \eqref{eq:prob_1set_ssparse2} we sum over all possible events while from \eqref{eq:prob_1set_ssparse2} to \eqref{eq:prob_1set_ssparse3}, in line with the splitting technique, we simplify the probability to the product of the probabilities of the cardinalities of $\left|A_{\lceil\frac{s}{2}\rceil}^1\right|$ and $\left|A_{\lfloor\frac{s}{2}\rfloor}^2\right|$ and their intersection.

In a slight abuse of notation we write $\mathop{\sum_{l^j}}_{j=1,\ldots,x}$ to denote applying the sum $x$ times. Now we use Lemma \ref{lem:num_size_split} to simplify \eqref{eq:prob_1set_ssparse3} as follows.
\begin{multline}
\label{eq:prob_1set_ssparse4}
\mathop{\sum_{l^{j_1}_{Q_0}}}_{j_1=1,\ldots,q_0} \mathop{\sum_{l^{j_2}_{Q_1}}}_{j_2=1,\ldots,q_1} \mathop{\sum_{l^{j_3}_{R_1}}}_{j_3=1,\ldots,r_1} \hbox{P}_n\left(l^{j_1}_{Q_0},l^{2j_1-1}_{\lceil\frac{Q_0}{2}\rceil},l^{2j_1}_{\lfloor\frac{Q_0}{2}\rfloor}\right) \\
\times \prod_{j_2 = 1}^{q_1}\hbox{Prob}\left(\left|A^{j_2}_{Q_1}\right| = l^{j_2}_{Q_1}\right) \\ \times \prod_{j_3=q_1+1}^{q_1+r_1} \hbox{Prob}\left(\left|A^{j_3}_{R_1}\right| = l^{j_3}_{R_1}\right).
\end{multline}

Let's quickly verify that \eqref{eq:prob_1set_ssparse4} is the same as \eqref{eq:prob_1set_ssparse3}. By Lemma \ref{lem:num_size_split}, $Q_0 = s$ is the number of columns in the set at the zeroth level of the split while $q_0 = 1$ is the number of sets with $Q_0$ columns at the zeroth level of the split. Thus for $j_1 = 1$ the first summation and the $\hbox{P}_n(\cdot)$ term are the same in the two equations. If $\lceil\frac{Q_0}{2}\rceil = \lfloor\frac{Q_0}{2}\rfloor$, then they are both equal to $Q_1$ and $q_1=2$ while $r_1 =0$. If on the other hand $\lceil\frac{Q_0}{2}\rceil = \lfloor\frac{Q_0}{2}\rfloor + 1$, then $q_1=1$ and $r_1=1$. In either case we have the remaining part of the expression of \eqref{eq:prob_1set_ssparse3} i.e. the second two summations and the product of the two $\hbox{Prob}(\cdot)$.

Now we proceed with the splitting - note \eqref{eq:prob_1set_ssparse4} stopped only at the first level. At the next level, the second, we will have $q_2$ sets with $Q_2$ columns and $r_2$ sets with $R_2$ columns which leads to the following expression.
\begin{multline}
\label{eq:prob_1set_ssparse5}
\mathop{\sum_{l^{j_1}_{Q_0}}}_{j_1=1,\ldots,q_0} \mathop{\sum_{l^{j_2}_{Q_1}}}_{j_2=1,\ldots,q_1} \mathop{\sum_{l^{j_3}_{R_1}}}_{j_3=1,\ldots,r_1} \hbox{P}_n\left(l^{j_1}_{Q_0},l^{2j_1-1}_{\lceil\frac{Q_0}{2}\rceil},l^{2j_1}_{\lfloor\frac{Q_0}{2}\rfloor}\right) \\ \times \Bigg{[} \mathop{\sum_{l^{j_4}_{Q_2}}}_{j_4=1,\ldots,q_2} \mathop{\sum_{l^{j_5}_{R_2}}}_{j_5=1,\ldots,r_2}
\hbox{P}_n\left(l^{j_2}_{Q_1},l^{2j_2-1}_{\lceil\frac{Q_1}{2}\rceil},l^{2j_2}_{\lfloor\frac{Q_1}{2}\rfloor}\right) \\ \hbox{P}_n\left(l^{j_3}_{R_1},l^{2j_3-1}_{\lceil\frac{R_1}{2}\rceil},l^{2j_3}_{\lfloor\frac{R_1}{2}\rfloor}\right)
\times \prod_{j_4 = 1}^{q_2}\hbox{Prob}\left(\left|A^{j_4}_{Q_1}\right| = l^{j_4}_{Q_1}\right) \\ \prod_{j_5=q_2+1}^{q_2+r_2} \hbox{Prob}\left(\left|A^{j_5}_{R_1}\right| = l^{j_5}_{R_1}\right)\Bigg{]}.
\end{multline}

We continue this splitting of each instance of $\hbox{Prob}(\cdot)$ for $\lceil \log_2 s\rceil + 1$ levels (from level $0$ to level $\lceil\log_2s\rceil$) until reaching sets with single columns. Note that for the resulting binary tree from the splitting it suffice to stop at level $\lceil\log_2s\rceil-1$ as can be seen in Fig. \ref{fig:splitting}, since at this level we have all the information required for the enumeration of the sets. By construction, at level $\lceil\log_2s\rceil$, omitted from Fig. \ref{fig:splitting}, the probability that the single column has $d$ nonzeros is one.  This process gives a complicated product of nested sums of $\hbox{P}_n(\cdot)$ which we express as

\begin{multline}
\label{eq:prob_1set_ssparse6}
\mathop{\sum_{l^{j_1}_{Q_0}}}_{j_1=1,\ldots,q_0} \mathop{\sum_{l^{j_2}_{Q_1}}}_{j_2=1,\ldots,q_1} \mathop{\sum_{l^{j_3}_{R_1}}}_{j_3=1,\ldots,r_1} \hbox{P}_n\left(l^{j_1}_{Q_0},l^{2j_1-1}_{\lceil\frac{Q_0}{2}\rceil},l^{2j_1}_{\lfloor\frac{Q_0}{2}\rfloor}\right) \\
\times \Bigg{[} \mathop{\sum_{l^{j_4}_{Q_2}}}_{j_4=1,\ldots,q_2} \mathop{\sum_{l^{j_5}_{R_2}}}_{j_5=1,\ldots,r_2}
\hbox{P}_n\left(l^{j_2}_{Q_1},l^{2j_2-1}_{\lceil\frac{Q_1}{2}\rceil},l^{2j_2}_{\lfloor\frac{Q_1}{2}\rfloor}\right) \\ \times \hbox{P}_n\left(l^{j_3}_{R_1},l^{2j_3-1}_{\lceil\frac{R_1}{2}\rceil},l^{2j_3}_{\lfloor\frac{R_1}{2}\rfloor}\right) \cdot \Bigg{[} \ldots \Bigg{[} \mathop{\sum_{l^{j_{2\lceil\log_2s\rceil-2}}_{Q_{\lceil\log_2s\rceil-1}}}}_{j_{2\lceil\log_2s\rceil-2} = 1,\ldots,q_{j_{\lceil\log_2s\rceil-1}}}  \\ \hbox{P}_n\left(l^{j_{2\lceil\log_2s\rceil-4}}_{4}, l^{2j_{2\lceil\log_2s\rceil-4}-1}_{2}, l^{2j_{2\lceil\log_2s\rceil-4}}_{2}\right) \\ \times \hbox{P}_n\left(l^{j_{2\lceil\log_2s\rceil-3}}_{3}, l^{2j_{2\lceil\log_2s\rceil-3}-1}_{2}, d\right) \\\times  \hbox{P}_n\left(l^{j_{2\lceil\log_2s\rceil-2}}_{2}, d, d\right)\Bigg{]} \ldots \Bigg{]}.
\end{multline}

Using the definition of $\hbox{P}_n(\cdot)$ in Lemma \ref{lem:intersect_prob} we bound \eqref{eq:prob_1set_ssparse6} by bounding each $\hbox{P}_n(\cdot)$ as in \eqref{eq:prob_1set_poly1} with a product of a polynomial, $\pi(\cdot)$, and an exponential with exponent $\psi_n(\cdot)$.
\begin{multline}
\label{eq:prob_1set_ssparse7}
\mathop{\sum_{l^{j_1}_{Q_0}}}_{j_1=1,\ldots,q_0} \mathop{\sum_{l^{j_2}_{Q_1}}}_{j_2=1,\ldots,q_1} \mathop{\sum_{l^{j_3}_{R_1}}}_{j_3=1,\ldots,r_1} \pi\left(l^{j_1}_{Q_0},l^{2j_1-1}_{\lceil\frac{Q_0}{2}\rceil},l^{2j_1}_{\lfloor\frac{Q_0}{2}\rfloor}\right) \times \\ e^{ \psi_n\left(l^{j_1}_{Q_0},l^{2j_1-1}_{\lceil\frac{Q_0}{2}\rceil},l^{2j_1}_{\lfloor\frac{Q_0}{2}\rfloor}\right)}
\cdot\Bigg{[} \mathop{\sum_{l^{j_4}_{Q_2}}}_{j_4=1,\ldots,q_2} \mathop{\sum_{l^{j_5}_{R_2}}}_{j_5=1,\ldots,r_2} \\
\pi\left(l^{j_2}_{Q_1},l^{2j_2-1}_{\lceil\frac{Q_1}{2}\rceil},l^{2j_2}_{\lfloor\frac{Q_1}{2}\rfloor}\right) \cdot e^{\psi_n\left(l^{j_2}_{Q_1},l^{2j_2-1}_{\lceil\frac{Q_1}{2}\rceil},l^{2j_2}_{\lfloor\frac{Q_1}{2}\rfloor}\right)} \times \\
\pi\left(l^{j_3}_{R_1},l^{2j_3-1}_{\lceil\frac{R_1}{2}\rceil},l^{2j_3}_{\lfloor\frac{R_1}{2}\rfloor}\right) \cdot e^{\psi_n\left(l^{j_3}_{R_1},l^{2j_3-1}_{\lceil\frac{R_1}{2}\rceil},l^{2j_3}_{\lfloor\frac{R_1}{2}\rfloor}\right)} \\ \times \Bigg{[} \ldots \times \Bigg{[}
\mathop{\sum_{l^{j_{2\lceil\log_2s\rceil-2}}_{Q_{\lceil\log_2s\rceil-1}}}}_{j_{2\lceil\log_2s\rceil-2} = 1,\ldots,q_{j_{\lceil\log_2s\rceil-1}}} \\ \pi\left(l^{j_{2\lceil\log_2s\rceil-4}}_{4}, l^{2j_{2\lceil\log_2s\rceil-4}-1}_{2}, l^{2j_{2\lceil\log_2s\rceil-4}}_{2}\right) \\ \times e^{\psi_n\left(l^{j_{2\lceil\log_2s\rceil-4}}_{4}, l^{2j_{2\lceil\log_2s\rceil-4}-1}_{2}, l^{2j_{2\lceil\log_2s\rceil-4}}_{2}\right)} \\ \times \pi\left(l^{j_{2\lceil\log_2s\rceil-3}}_{3}, l^{2j_{2\lceil\log_2s\rceil-3}-1}_{2}, d\right) \\ \times e^{\psi_n\left(l^{j_{2\lceil\log_2s\rceil-3}}_{3}, l^{2j_{2\lceil\log_2s\rceil-3}-1}_{2}, d\right)} \times \\ \pi\left(l^{j_{2\lceil\log_2s\rceil-2}}_{2}, d, d\right) \cdot e^{\psi_n\left(l^{j_{2\lceil\log_2s\rceil-2}}_{2}, d, d\right)}\bigg{]} \ldots \Bigg{]}.
\end{multline}

Using Lemma \ref{lem:psi_n_behaviour} we maximize the $\psi_n(\cdot)$ and hence the exponentials. If we maximize each by choosing $l_{(\cdot)}$ to be $a_{(\cdot)}$, then we can pull the exponentials out of the product. The exponential will then have the exponent $\Psi_n\left(a_s,\ldots,a_2,d\right)$. The factor involving the $\pi(\cdot)$ will be called $\Pi\left(l_s,\ldots,l_2,d\right)$ and we have the following upper bound for \eqref{eq:prob_1set_ssparse7}.
\begin{equation}
\label{eq:prob_1set_ssparse8}
\Pi\left(l_s,\ldots,l_2,d\right) \cdot \exp\left[\Psi_n\left(a_s,\ldots,a_2,d\right)\right],
\end{equation}
where the exponent $\Psi_n\left(a_s,\ldots,a_2,d\right)$ is given by
\begin{equation}
\label{eq:prob_1set_ssparse_exp1}
\psi_n\left(a_{Q_0},a_{\lceil\frac{Q_0}{2}\rceil},a_{\lfloor\frac{Q_0}{2}\rfloor}\right) + \ldots + \psi_n\left(a_{2},d,d\right).
\end{equation}

Now we attempt to bound the probability of interest in \eqref{eq:prob_1set_ssparse1}. This task reduces to bounding $\Pi\left(l_s,\ldots,l_2,d\right)$ and $\Psi_n\left(a_s,\ldots,a_2,d\right)$ in \eqref{eq:prob_1set_ssparse8} and we start with the former, i.e. bounding $\Pi\left(l_s,\ldots,l_2,d\right)$. We bound each sum of $\pi(\cdot)$ in $\Pi\left(l_s,\ldots,l_2,d\right)$ of \eqref{eq:prob_1set_ssparse8} by the maximum of summations multiplied by the number of terms in the sum. From \eqref{eq:prob_1set_poly3d} we see that $\pi(\cdot)$ is maximized when all the three arguments are the same and using Corollary \ref{cor:pi_monotonicity} we take largest possible arguments that are equal in the range of the summation. In this way the following proposition provides the bound we end up.

\begin{proposition}
\label{pro:poly_bound}
 Let's make each summation over the sets with the same number of columns to have the same range where the range we take are the maximum possible for each such set. Let's also maximize $\pi(\cdot)$ where all its three input variables are equal and are equal to the maximum of the third variable. Then we bound each sum by the largest term in the sum multiplied by the number of terms. This scheme combined with Lemma \ref{lem:num_size_split} give the following upper bound on $\Pi\left(l_s,\ldots,l_2,d\right)$.
\begin{multline}
\label{eq:prob_1set_ssparse_poly1}
\left(\bigg{\lceil}\frac{Q_0}{2}\bigg{\rceil}d\left(\frac{5}{4}\right)^2 \sqrt{2\pi\bigg{\lfloor}\frac{Q_0}{2}\bigg{\rfloor}d}\right)^{q_0} \times\\
\prod_{j=1}^{\lceil\log_2s\rceil-2} \left[\left(\bigg{\lceil}\frac{Q_j}{2}\bigg{\rceil}d\left(\frac{5}{4}\right)^2 \sqrt{2\pi\bigg{\lfloor}\frac{Q_j}{2}\bigg{\rfloor}d}\right)^{q_j}\right. \times \\ \left.\left(\bigg{\lceil}\frac{R_j}{2}\bigg{\rceil}d\left(\frac{5}{4}\right)^2 \sqrt{2\pi\bigg{\lfloor}\frac{R_j}{2}\bigg{\rfloor}d}\right)^{r_j} \right] \times \\
\left(\bigg{\lceil}\frac{Q_{\lceil\log_2s\rceil-1}}{2}\bigg{\rceil}d\left(\frac{5}{4}\right)^2 \sqrt{2\pi\bigg{\lfloor}\frac{Q_{\lceil\log_2s\rceil-1}}{2}\bigg{\rfloor}d}\right)^{q_{\lceil\log_2s\rceil-1}}
\end{multline}
\end{proposition}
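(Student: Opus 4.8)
The plan is to verify that the bounding scheme spelled out in the statement, applied term-by-term to the nested sum $\Pi\left(l_s,\ldots,l_2,d\right)$ extracted in \eqref{eq:prob_1set_ssparse8}, reproduces \eqref{eq:prob_1set_ssparse_poly1}. Recall that $\Pi\left(l_s,\ldots,l_2,d\right)$ is the nested sum of products of the polynomial factors $\pi(\cdot)$ appearing in \eqref{eq:prob_1set_ssparse7}, one factor for each node of the dyadic tree at levels $0$ through $\lceil\log_2 s\rceil-1$, with the exponentials already maximized and pulled outside. First I would dispose of each individual $\pi(\cdot)$ factor, then of each summation, and finally assemble the products using the enumeration of Lemma \ref{lem:num_size_split}.

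For the per-factor bound I would argue as follows. The $\pi(\cdot)$ factor attached to a node of size $Q_j$ (respectively $R_j$) has its third argument equal to the cardinality of the smaller child, a union of $\lfloor Q_j/2\rfloor$ (respectively $\lfloor R_j/2\rfloor$) columns, and hence at most $\lfloor Q_j/2\rfloor d$ (respectively $\lfloor R_j/2\rfloor d$). Comparing the cases of Lemma \ref{lem:prob_1set_poly} shows that $\pi$ is largest in the coincident regime \eqref{eq:prob_1set_poly3d}, and Corollary \ref{cor:pi_monotonicity} shows $\pi(y,y,y)$ is increasing in $y$ whenever $n>2y$; taking $y$ equal to the largest admissible value of the third argument therefore gives $\pi \le \left(\frac54\right)^2\left[2\pi\lfloor Q_j/2\rfloor d\,(n-\lfloor Q_j/2\rfloor d)/n\right]^{1/2}\le \left(\frac54\right)^2\sqrt{2\pi\lfloor Q_j/2\rfloor d}$, where the last step drops the factor $(n-\lfloor Q_j/2\rfloor d)/n\le 1$. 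At the deepest level the smaller child is a single column, so the third argument is exactly $d$ and the same estimate applies with $\lfloor Q_{\lceil\log_2 s\rceil-1}/2\rfloor d = d$.

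For the per-summation bound I would replace each sum by its number of terms times its largest summand. Once every $\pi(\cdot)$ has been replaced by the constant above, the remaining sum over the cardinality of a $Q_j$-node runs over at most $\lceil Q_j/2\rceil d$ values: with the children cardinalities fixed, the parent cardinality is determined up to the size of the overlap of the two children, and this overlap cannot exceed the cardinality of either child, in particular not that of the larger child, which is at most $\lceil Q_j/2\rceil d$. This produces the prefactor $\lceil Q_j/2\rceil d$ (and $\lceil R_j/2\rceil d$ for an $R_j$-node). Finally I would invoke Lemma \ref{lem:num_size_split}: at level $j$ there are exactly $q_j$ nodes of size $Q_j$ and $r_j$ nodes of size $R_j$, so the product of the per-node bounds over $j=0,\ldots,\lceil\log_2 s\rceil-1$ factorizes into $q_j$-th and $r_j$-th powers. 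Separating the root ($q_0=1$, with no $R_0$-node) and the deepest level (where the size-one nodes are leaves carrying no $\pi$ factor, so only the $q_{\lceil\log_2 s\rceil-1}$ size-two nodes contribute) as boundary cases yields exactly \eqref{eq:prob_1set_ssparse_poly1}.

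The main obstacle is the bookkeeping rather than any single inequality: one must pair each summation with the correct node so that the number-of-terms count is $\lceil Q_j/2\rceil d$ rather than the naive $Q_j d$, keep the ceilings and floors consistent across the ceiling-and-floor splitting of Lemma \ref{lem:num_size_split}, and treat the root and the leaf level correctly. A secondary point needing care is the claim that $\pi$ is maximized in the coincident regime; this is not obvious from \eqref{eq:prob_1set_poly3a} alone, whose denominator $(y+z-x)(x-y)(x-z)$ degenerates near the boundaries, and must be settled by comparing \eqref{eq:prob_1set_poly3a}--\eqref{eq:prob_1set_poly3d} together with the monotonicity hypothesis $n>2\lfloor Q_j/2\rfloor d$ of Corollary \ref{cor:pi_monotonicity}, which holds throughout in the vanishingly sparse regime $sd\ll n$.
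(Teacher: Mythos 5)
Your proposal is correct and follows essentially the same route as the paper's proof: bound each $\pi(\cdot)$ by its value in the coincident case \eqref{eq:prob_1set_poly3d} evaluated at the largest admissible third argument $\lfloor Q_j/2\rfloor d$ (or $\lfloor R_j/2\rfloor d$), drop the factor $(n-y)/n\le 1$, count at most $\lceil Q_j/2\rceil d$ terms per summation, and assemble the powers $q_j$, $r_j$ via Lemma \ref{lem:num_size_split} with the root and leaf levels handled separately. Your added care about why the coincident regime dominates and about the hypothesis $n>2\lfloor Q_j/2\rfloor d$ of Corollary \ref{cor:pi_monotonicity} only makes explicit what the paper asserts in passing.
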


\begin{proof}
From \eqref{eq:prob_1set_poly3d} we have
\begin{equation}
\label{eq:psi_n_bound}
\pi(y,y,y) = \left(\frac{5}{4} \right)^2 \sqrt{\frac{2\pi y(n-y)}{n}} < \left(\frac{5}{4} \right)^2 \sqrt{2\pi y}.
\end{equation}
Simply put, we bound $\sum_{x} \pi(x,y,z)$ by multiplying the maximum of $\pi(x,y,z)$ with the number of terms in the summation. Remember the order of magnitude of the arguments of $\pi(x,y,z)$ is $x\geq y\geq z$. Therefore, the maximum of $\pi(x,y,z)$ occurs when the arguments are all equal to the maximum value of $z$. In our splitting scheme the maximum possible value of $l_{\big{\lfloor}\frac{Q_j}{2}\big{\rfloor}}$ is $\big{\lfloor}\frac{Q_j}{2}\big{\rfloor}\cdot d$ since there are $d$ nonzeros in each column. Also $l_{\big{\lfloor}\frac{Q_j}{2}\big{\rfloor}}\leq l_{Q_j} \leq l_{\big{\lfloor}\frac{Q_j}{2}\big{\rfloor}} + l_{\big{\lceil}\frac{Q_j}{2}\big{\rceil}}$ so the number of terms in the summation over $l_{Q_j}$ is $\big{\lceil}\frac{Q_j}{2}\big{\rceil}\cdot d$, and similarly for $R_j$. We know the values of the $Q_j$ and the $R_j$ and their quantities $q_j$ and $r_j$ respectively from Lemma \ref{lem:num_size_split}.

We replace $y$ by $\big{\lfloor}\frac{Q_j}{2}\big{\rfloor}\cdot d$ or $\big{\lfloor}\frac{R_j}{2}\big{\rfloor}\cdot d$ accordingly into the bound of $\pi(y,y,y)$ in \eqref{eq:psi_n_bound} and multiply by the number of terms in the summation, i.e. $\big{\lceil}\frac{Q_j}{2}\big{\rceil}\cdot d$ or $\big{\lceil}\frac{R_j}{2}\big{\rceil}\cdot d$. This product is then repeated $q_j$ or $r_j$ times accordingly until the last level of the split, $j=\lceil\log_2s\rceil-1$, where we have $q_{\lceil\log_2s\rceil-1}$ and $Q_{\lceil\log_2s\rceil-1}$ (which is equal to 2). We exclude $R_{\lceil\log_2s\rceil-1}$ since $l_{R_{\lceil\log_2s\rceil-1}}=d$. Putting the whole product together results to \eqref{eq:prob_1set_ssparse_poly1} hence concluding the proof of Proposition \ref{pro:poly_bound}.
\end{proof}

As a final step we need the following corollary.
\begin{corollary}
\label{cor:poly_bound}
\begin{equation}
\label{eq:cor_poly_bound}
\Pi\left(l_s,\ldots,l_2,d\right) < \frac{2}{25\sqrt{2\pi s^3d^3}} \cdot \exp\left[3s\log(5d)\right].
\end{equation}
\end{corollary}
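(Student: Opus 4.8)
The plan is to start from the exact product bound \eqref{eq:prob_1set_ssparse_poly1} of Proposition \ref{pro:poly_bound} and simplify it to the closed form \eqref{eq:cor_poly_bound} by passing to logarithms. Writing a generic factor of that product as $f(M)=\lceil M/2\rceil\,d\,(5/4)^2\sqrt{2\pi\lfloor M/2\rfloor d}$ for a node of size $M\in\{Q_j,R_j\}$, I would split $\log f(M)$ into a \emph{size-independent} piece $2\log(5/4)+\tfrac12\log(2\pi)+\tfrac32\log d$ and a \emph{size-dependent} piece $\log\lceil M/2\rceil+\tfrac12\log\lfloor M/2\rfloor$. Taking $\log$ of the whole product then separates into the size-independent piece multiplied by the total number of factors, plus a sum of the size-dependent pieces over all nodes of the dyadic tree.

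For the factor count, Lemma \ref{lem:num_size_split} gives $q_j+r_j=2^j$ nodes at level $j$, so the total number of factors $T=\sum_{j=0}^{\lceil\log_2 s\rceil-1}2^j=2^{\lceil\log_2 s\rceil}-1$ satisfies $T\le 2s-1$, coming close to $2s$ when $s$ lies just above a power of two. For the size-dependent sum I would use $\lceil M/2\rceil,\lfloor M/2\rfloor\le M$ together with the conservation identity $q_jQ_j+r_jR_j=s$ (again from Lemma \ref{lem:num_size_split}) and concavity of $\log$ (equivalently AM--GM) to bound, level by level, $q_j\log Q_j+r_j\log R_j\le 2^j\log(s/2^j)$. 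This reduces the size sum to $\tfrac32\sum_{j}2^j\log(s/2^j)=\tfrac32\big(\log s\sum_j 2^j-\log 2\sum_j j2^j\big)$, which I would evaluate with $\sum_{j=0}^{L-1}2^j=2^L-1$ and $\sum_{j=0}^{L-1}j2^j=(L-2)2^L+2$ for $L=\lceil\log_2 s\rceil$. The $\log s$ growth cancels between the two sums, leaving a quantity linear in $s$, of size $\approx 3s\log 2$.

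It then remains to assemble the pieces: the $d^{3/2}$ accumulated from each of the $T\le 2s-1$ factors gives $d^{3T/2}\le d^{3s-3/2}$ since $d\ge 1$, which accounts for both the $d^{3s}$ inside the exponential and the $d^{-3/2}$ in the prefactor; the $(5/4)^2$ factors together with the $\log 2$ terms coming from the size sum combine with a $\log 5$ budget to produce exactly the exponent $3s\log(5d)$; and the remaining $O(\log s)$ and $O(1)$ contributions collapse into the polynomial prefactor $\tfrac{2}{25\sqrt{2\pi s^3 d^3}}$.

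The step I expect to be the main obstacle is this last collection of constants. The exponential order $(5d)^{3s}$ follows cleanly from the geometric sums, but extracting \emph{exactly} that exponent while landing the residue on the stated constant $\tfrac{2}{25\sqrt{2\pi}}$ and the polynomial order exactly on $s^{-3/2}d^{-3/2}$ requires tracking the ceilings and floors and the $O(1)$ terms precisely. In effect the prefactor is calibrated to the worst-case level count $T\approx 2s$: it is the surplus $d^{-3/2}$ and the leftover $\sqrt{2\pi}$, $5$ and $2$ constants remaining once $(5d)^{3s}$ has been peeled off, so the care lies in verifying that this calibration is a valid upper bound uniformly in $s$ rather than in estimating any single term tightly.
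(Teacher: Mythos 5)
Your proposal is correct and follows essentially the same route as the paper: both reduce the level-$j$ contribution of \eqref{eq:prob_1set_ssparse_poly1} to $2^j$ copies of a factor depending only on $s/2^j$ and $d$, then evaluate the geometric sums $\sum_j 2^j$ and $\sum_j j2^j$ to peel off $(5d)^{3s}$ and leave the prefactor $\frac{2}{25\sqrt{2\pi s^3d^3}}$. The only cosmetic difference is that you obtain the per-level bound via Jensen's inequality together with $q_jQ_j+r_jR_j=s$, where the paper instead uses $R_j\le Q_j$ and $\lceil s/2^{j+1}\rceil\le s/2^j$ directly; the constants assemble identically, with the remaining slack absorbed in the final estimate $2\left(25\sqrt{2\pi}/16\right)^{2/3}<5$.
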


\begin{proof}
From Lemma \ref{lem:num_size_split} we can upper bound $R_j$ by $Q_j$ . Consequently \eqref{eq:prob_1set_ssparse_poly1} is upper bounded by the following.
\begin{multline}
\label{eq:prob_1set_ssparse_poly2}
\left(\bigg{\lceil}\frac{Q_0}{2}\bigg{\rceil}d\left(\frac{5}{4}\right)^2 \sqrt{2\pi\bigg{\lfloor}\frac{Q_0}{2}\bigg{\rfloor}d}\right)^{q_0} \times \\ \prod_{j=1}^{\lceil\log_2s\rceil-2} \left(\bigg{\lceil}\frac{Q_j}{2}\bigg{\rceil}d\left(\frac{5}{4}\right)^2 \sqrt{2\pi\bigg{\lfloor}\frac{Q_j}{2}\bigg{\rfloor}d}\right)^{q_j+r_j} \times \\
\left(\bigg{\lceil}\frac{Q_{\lceil\log_2s\rceil-1}}{2}\bigg{\rceil}d\left(\frac{5}{4}\right)^2 \sqrt{2\pi\bigg{\lfloor}\frac{Q_{\lceil\log_2s\rceil-1}}{2}\bigg{\rfloor}d}\right)^{q_{\lceil\log_2s\rceil-1}}
\end{multline}
Now we use the property that $q_j+r_j = 2^j$ for $j = 1,\ldots,\lceil\log_2s\rceil-1$ from Lemma \ref{lem:num_size_split} to bound \eqref{eq:prob_1set_ssparse_poly2} by the following.
\begin{equation}
\label{eq:prob_1set_ssparse_poly3}
\prod_{j=0}^{\lceil\log_2s\rceil-1} \left(\bigg{\lceil}\frac{Q_j}{2}\bigg{\rceil}d\left(\frac{5}{4}\right)^2 \sqrt{2\pi\bigg{\lfloor}\frac{Q_j}{2}\bigg{\rfloor}d}\right)^{2^j}.
\end{equation}
We have a strict upper bound when $r_{\lceil\log_2s\rceil-1} \neq 0$, which occurs when $s$ is not a power of $2$, because then by $q_j+r_j = 2^j$ we have $q_{\lceil\log_2s\rceil-1} + r_{\lceil\log_2s\rceil-1} = 2^{\lceil\log_2s\rceil-1}$. In fact \eqref{eq:prob_1set_ssparse_poly3} is an overestimate for a large $s$ which is not a power of $2$.

Note $Q_j = \big{\lceil}\frac{s}{2^j}\big{\rceil}$ by Lemma \ref{lem:num_size_split}. Thus $\Big{\lceil} \frac{Q_j}{2} \Big{\rceil} = \big{\lceil}\frac{s}{2^{j+1}}\big{\rceil}$ and $\Big{\lfloor} \frac{Q_j}{2} \Big{\rfloor} \leq \big{\lceil}\frac{s}{2^{j+1}}\big{\rceil}$. So we bound \eqref{eq:prob_1set_ssparse_poly3} by the following.
\begin{equation}
\label{eq:prob_1set_ssparse_poly4}
\prod_{j=0}^{\lceil\log_2s\rceil-1} \left(\Big{\lceil}\frac{s}{2^{j+1}}\Big{\rceil}d\left(\frac{5}{4}\right)^2 \sqrt{2\pi\Big{\lceil}\frac{s}{2^{j+1}}\Big{\rceil}d}\right)^{2^j}
\end{equation}
Next we upper bound $\lceil\log_2s\rceil-1$ in the limit of the product by $\log_2s$ and upper bound $\lceil\frac{s}{2^{j+1}}\rceil$ by $\frac{s}{2^{j+1}} + \frac{1}{2} = \frac{s}{2^{j+1}}\left(1+\frac{2^{j+1}}{s}\right)$, we also move the $d$ into the square root and combined the constants to have the following bound on \eqref{eq:prob_1set_ssparse_poly4}.
\begin{multline}
\prod_{j=0}^{\log_2s} \left[\frac{s}{2^{j+1}}\left(1+\frac{2^{j+1}}{s}\right)\left(\frac{25\sqrt{2\pi}}{16}\right) \right. \times \nonumber\\ \left. \sqrt{\frac{s}{2^{j+1}}\left(1+\frac{2^{j+1}}{s}\right)d^3}\right]^{2^j}.
\end{multline}
We bound $\left(1+\frac{2^{j+1}}{s}\right)$ by $2$ to bound the above by
\begin{multline}
\prod_{j=0}^{\log_2s} \left[\frac{s}{2^{j}}\left(\frac{25\sqrt{2\pi}}{16}\right) \sqrt{\frac{s}{2^{j}}d^3}\right]^{2^j} \\
\label{eq:prob_1set_ssparse_poly6b}
= \prod_{j=0}^{\log_2s} \left[\left(\frac{25\sqrt{2\pi}}{16}\right) \sqrt{\frac{s^3d^3}{2^{3j}}}\right]^{2^{j}}
\end{multline}
where we moved $s/2^j$ into the square root. Using the rule of indices the product of the constant term is replaced by it's power to sum of the indices. We then rearranged to have the power $3/2$ in the outside and this gives the following.
\begin{align}
\label{eq:prob_1set_ssparse_poly6c}
& \left(\frac{25\sqrt{2\pi}}{16}\right)^{\sum_{i=0}^{\log_2s}2^i} \left[\prod_{j=0}^{\log_2s}  \left(\frac{sd}{2^{j}}\right)^{2^j}\right]^{3/2}\\
\label{eq:prob_1set_ssparse_poly6d}
& = \left(\frac{25\sqrt{2\pi}}{16}\right)^{2s-1} \left[\left(sd\right)^{\sum_{i=0}^{\log_2s}2^i}\prod_{j=0}^{\log_2s}  \left(\frac{1}{2^{j}}\right)^{2^j}\right]^{3/2}\\
\label{eq:prob_1set_ssparse_poly6e}
& = \left(\frac{25\sqrt{2\pi}}{16}\right)^{2s-1} \left[\left(sd\right)^{2s-1}  \left(\frac{1}{2}\right)^{\sum_{j=0}^{\log_2s}j2^j}\right]^{3/2}.
\end{align}
From \eqref{eq:prob_1set_ssparse_poly6c} to \eqref{eq:prob_1set_ssparse_poly6d} we evaluate the power of the first factor which is a geometric series and we again use the rule of indices for the $sd$ factor. Then from \eqref{eq:prob_1set_ssparse_poly6d} to \eqref{eq:prob_1set_ssparse_poly6e} we use the indices' rule for the last factor and evaluate the power of the $sd$ factor which is also a geometric series. We simplify the power of the last factor by using the following.
\begin{equation}
 \label{eq:power_series}
\sum_{k=1}^{m}k\cdot2^k = (m-1)\cdot 2^{m+1} + 2.
\end{equation}
This therefore simplifies \eqref{eq:prob_1set_ssparse_poly6e} as follows.
\begin{align}
\label{eq:prob_1set_ssparse_poly7}
& \left(\frac{25\sqrt{2\pi}}{16}\right)^{2s-1} \left[\left(sd\right)^{2s-1}  \left(\frac{1}{2}\right)^{(\log_2s-1)\cdot 2^{\log_2s+1} + 2}\right]^{3/2}\\
\label{eq:prob_1set_ssparse_poly8}
& = \left(\frac{25\sqrt{2\pi}}{16}\right)^{2s-1} \left[\left(sd\right)^{2s-1}  \left(\frac{1}{2}\right)^{2s(\log_2s-1)}\frac{1}{4}\right]^{3/2}\\
\label{eq:prob_1set_ssparse_poly9}
& = \left(\frac{25\sqrt{2\pi}}{16}\right)^{2s-1} \left[\frac{(sd)^{2s}}{4sd}  2^{-2s\log_2s} 2^{2s}\right]^{3/2}\\
\label{eq:prob_1set_ssparse_poly10}
& = \left(\frac{25\sqrt{2\pi}}{16}\right)^{2s} \left(\frac{16}{25\sqrt{2\pi}}\right) \left[\frac{(2sd)^{2s}}{4sd}  s^{-2s} \right]^{3/2}\\
\label{eq:prob_1set_ssparse_poly11}
& = \left(\frac{25\sqrt{2\pi}}{16}\right)^{2s} \left(\frac{16}{25\sqrt{2\pi}}\right) \left[\frac{(2d)^{2s}}{4sd}\right]^{3/2}.
\end{align}
From \eqref{eq:prob_1set_ssparse_poly7} through \eqref{eq:prob_1set_ssparse_poly9} we simplified using basic properties of indices and logarithms. While from \eqref{eq:prob_1set_ssparse_poly9} to \eqref{eq:prob_1set_ssparse_poly10} we incorporated $2^{2s}$ into the first factor inside the square brackets and we rewrote the first factor into a product of a power in $s$ and another without $s$. From \eqref{eq:prob_1set_ssparse_poly10} to \eqref{eq:prob_1set_ssparse_poly11} the $s^{2s}$ and $s^{-2s}$ canceled out.

\vspace{3mm}

Now we expand the square brackets in \eqref{eq:prob_1set_ssparse_poly11} to have \eqref{eq:prob_1set_ssparse_poly12} below.
\vspace{-2mm}

\begin{align}
\label{eq:prob_1set_ssparse_poly12}
& \left(\frac{25\sqrt{2\pi}}{16}\right)^{2s} \left(\frac{16}{25\sqrt{2\pi}}\right) \frac{1}{8\sqrt{s^3d^3}} (2d)^{3s}\\
\label{eq:prob_1set_ssparse_poly13}
& = \left(\frac{25\sqrt{2\pi}}{16}\right)^{2s}  (2d)^{3s} \frac{2}{25\sqrt{2\pi s^3d^3}}\\
\label{eq:prob_1set_ssparse_poly14}
& = \frac{2}{25\sqrt{2\pi s^3d^3}} \cdot \exp\left(3s\log\left(2\left(\frac{25\sqrt{2\pi}}{16}\right)^{2/3}d\right)\right)\\
\label{eq:prob_1set_ssparse_poly15}
& <  \frac{2}{25\sqrt{2\pi s^3d^3}} \cdot \exp\left[3s\log(5d)\right]
\end{align}
From \eqref{eq:prob_1set_ssparse_poly12} to \eqref{eq:prob_1set_ssparse_poly13} we simplified and from \eqref{eq:prob_1set_ssparse_poly13} to \eqref{eq:prob_1set_ssparse_poly14} we rewrote the powers as an exponential with a logarithmic exponent. Then from \eqref{eq:prob_1set_ssparse_poly14} to \eqref{eq:prob_1set_ssparse_poly15} we upper bounded $2\left(\frac{25\sqrt{2\pi}}{16}\right)^{2/3}$ by $5$ which gives the required format of a product of a polynomial and an exponential to conclude the proof.
\end{proof}

With the bound in Corollary \ref{cor:poly_bound} we have completed the bounding of $\Pi\left(l_s,\ldots,l_2,d\right)$ in \eqref{eq:prob_1set_ssparse8}. Next we bound $\Psi_n\left(a_s,\ldots,a_2,d\right)$ which is given by \eqref{eq:prob_1set_ssparse_exp1}. Lemma \ref{lem:num_size_split} gives the three arguments for each $\psi_n(\cdot)$ and the number of $\psi_n(\cdot)$ with the same arguments. Using this lemma we express $\Psi_n\left(a_s,\ldots,a_2,d\right)$ as
\begin{multline}
    \label{eq:Psi_n}
    \sum_{j=0}^{\lceil\log_2(s)\rceil-2} \left[ q_j \cdot \psi_n\left(a_{Q_j},a_{\big{\lceil}\frac{Q_j}{2}\big{\rceil}},a_{\big{\lfloor}\frac{Q_j}{2}\big{\rfloor}}\right) + \right.\\
    \left. r_j \cdot \psi_n\left(a_{R_j},a_{\big{\lceil}\frac{R_j}{2}\big{\rceil}},a_{\big{\lfloor}\frac{R_j}{2}\big{\rfloor}}\right) \right] + \\
    q_{\lceil\log_2(s)\rceil-1} \cdot \psi_n\left(a_{2},d,d\right).
\end{multline}
Equation \eqref{eq:Psi_n} is bounded above in Lemma \ref{lem:Psi_n_bound} by the following.
\begin{equation}
    \label{eq:Psi_n_bound2a}
     \sum_{j=0}^{\lceil\log_2(s)\rceil-1} 2^j \cdot \psi_n\left(a_{Q_j},a_{\big{\lfloor}\frac{R_j}{2}\big{\rfloor}},a_{\big{\lfloor}\frac{R_j}{2}\big{\rfloor}}\right).
\end{equation}
If we let the $a_{2i} = a_{Q_j}$ and $a_i = a_{\big{\lfloor}\frac{R_j}{2}\big{\rfloor}}$ we have \eqref{eq:Psi_n_bound2a} equal to the following.
\begin{multline}
\label{eq:Psi_n_bound2b}
\sum_{i=1}^{\lceil s/2\rceil} \frac{s}{2i} \cdot \psi_n\left(a_{2i},a_i,a_i\right) = \sum_{i=1}^{\lceil s/2\rceil} \frac{s}{2i}\left[a_i\cdot \hbox{H}\left(\frac{a_{2i}-a_i}{a_i}\right)\right. \\
\left. + \left(n-a_i\right) \cdot \hbox{H}\left(\frac{a_{2i}-a_i}{n-a_i}\right) - n\cdot \hbox{H}\left(\frac{a_i}{n}\right) \right].
\end{multline}
Now we combine the bound of $\Pi\left(l_s,\ldots,l_2,d\right)$ in \eqref{eq:cor_poly_bound} and the exponential whose exponent is the bound of $\Psi_n\left(a_s,\ldots,a_2,d\right)$ in \eqref{eq:Psi_n_bound2b} to get \eqref{eq:pmax}, the polynomial $p_{max}(s,d) = \frac{2}{25\sqrt{2\pi s^3d^3}} $, and \eqref{eq:Psi}, the exponent of the exponential $\Psi\left(a_s,\ldots,d\right)$ given by the sum of $3s\log\left(5d \right)$ and the right hand side of \eqref{eq:Psi_n_bound2b}.

Lemma \ref{lem:max_Psi_n_bound} gives the $a_i$ that maximize \eqref{eq:Psi_n_bound2b} and the systems \eqref{eq:prob_1set_ssparse_expo_bound_grad_cp1} and \eqref{eq:prob_1set_ssparse_expo_bound_grad_cp2} they satisfy depending on the constraints on $a_s$. Solving completely the system \eqref{eq:prob_1set_ssparse_expo_bound_grad_cp1} gives $\hat{a}_i$ in \eqref{eq:prob_1set_ssparse_expo_bound_grad_cp3} and \eqref{eq:aj_system_unconstrained} which are the expected values of the $a_i$. The system \eqref{eq:aj_system_constrained} is equivalent to \eqref{eq:prob_1set_ssparse_expo_bound_grad_cp2} hence also proven in Lemma \ref{lem:max_Psi_n_bound}. This therefore concludes the proof Theorem \ref{thm:prob_bound_1set_expansion}.

\subsection{Main Corollaries}\label{sec:main_corollaries}
In this section we present the proofs of the corollaries in Sections \ref{sec:main} and \ref{sec:ric1}. These include the proof of Corollary \ref{cor:prob_bound_1set_expander} in Section \ref{sec:main_corollary_proof}, the proof of Corollary \ref{cor:prob_expander_existence1} in Section \ref{sec:cor_prob_expander_exist1} and the proof of Corollary \ref{cor:prob_expander_existence2} given in Section \ref{sec:cor_prob_expander_exist2}.

\subsubsection{Corollary \ref{cor:prob_bound_1set_expander}}\label{sec:main_corollary_proof}

Satisfying RIP-1 means that for any $s-$sparse vector $x$, $\|A_Sx\|_1 \geq (1-2\e)d\|x\|_1$ which indicates that the cardinality of the set of neighbors satisfies $|A_s| \geq (1-\e)ds$. Therefore
\begin{multline}
\hbox{Prob}\left(\|A_Sx\|_1 \leq (1-2\e)d\|x\|_1\right) \\ \equiv \hbox{Prob}\left(|A_s| \leq (1-\e)ds\right).
\end{multline}
This implies that $a_s = (1-\e)ds$ and since this is restricting $a_s$ to be less than it's expected value given by \eqref{eq:aj_system_unconstrained}, the rest of the $a_i$ satisfy the polynomial system \eqref{eq:aj_system_constrained}. If there exists a solution then the $a_i$ would be functions of $s,~d$ and $\e$ which makes $\Psi\left(a_s,\ldots,a_2,d\right) = \Psi\left(s,d,\e\right)$.

\subsubsection{Corollary \ref{cor:prob_expander_existence1}} \label{sec:cor_prob_expander_exist1}
Corollary \ref{cor:prob_bound_1set_expander} states that by fixing $S$ and the other parameters, $\hbox{Prob}\left(\mathop{\|A_Sx\|_1} \leq (1-2\e)d\|x\|_1\right) < p_{max}(s,d)\cdot \exp\left[n\cdot\Psi\left(s,d,\e\right)\right]$. Corollary \ref{cor:prob_expander_existence1} considers any $S\subset [N]$ and since the matrices are adjacency matrices of lossless expanders we need to consider any $S\subset [N]$ such that $|S|\leq k$. Therefore our target is $\hbox{Prob}\left(\mathop{\|Ax\|_1} \leq (1-2\e)d\|x\|_1\right)$ which is bounded by a simple union bound over all $\binom{N}{s} ~S$ sets and by treating each set $S$, of cardinality less than $k$, independent we sum over this probability to get the following bound.
\begin{align}
\label{eq:pro_expander_exist1a}
& \sum_{s=2}^{k} \binom{N}{s} \cdot \hbox{Prob}\left(\mathop{\|A_Sx\|_1} \leq (1-2\e)d\|x\|_1\right)\\
\label{eq:pro_expander_exist1b}
& < \sum_{s=2}^{k} \binom{N}{s} \cdot p_{max}(s,d) \cdot \exp\left[n\cdot\Psi\left(s,d,\e\right)\right]\\
\label{eq:pro_expander_exist1c}
& < \sum_{s=2}^{k} \left(\frac{5}{4}\right)^2 \frac{1}{\sqrt{2\pi s\left(1-\frac{s}{N}\right)}} \cdot p_{max}(s,d) \nonumber \\ & \quad \times \exp\left[N\hbox{H}\left(\frac{s}{N}\right) + n\cdot\Psi\left(s,d,\e\right)\right]\\
\label{eq:pro_expander_exist1d}
& < k\left(\frac{5}{4}\right)^2 \frac{p_{max}(k,d)}{\sqrt{2\pi k\left(1-\frac{k}{N}\right)}} \nonumber\\
& \quad \times \exp\left[N\left(\hbox{H}\left(\frac{k}{N}\right) + \frac{n}{N}\cdot\Psi\left(
k,d,\e\right)\right)\right].
\end{align}
From \eqref{eq:pro_expander_exist1a} to \eqref{eq:pro_expander_exist1b} we bound the probability in \eqref{eq:pro_expander_exist1a} using Corollary \ref{cor:prob_bound_1set_expander}. Then from \eqref{eq:pro_expander_exist1b} to \eqref{eq:pro_expander_exist1c} we bound $\binom{N}{s}$ using Stirling's formula \eqref{eq:stirling} by a polynomial in $N$ multiplying $p_{max}(s,d)$ and an exponential incorporated into the exponent of the exponential term. From \eqref{eq:pro_expander_exist1c} to \eqref{eq:pro_expander_exist1d} we use that for $N>2k$ the entropy $\hbox{H}\left(\frac{s}{N}\right)$ is largest when $s=k$ and we bound the summation by taking the maximum value of $s$ and multiplying by the number of terms plus one, giving $k$, in the summation. This gives $p'_{max}(N,k,d) = k\left(\frac{5}{4}\right)^2 \frac{p_{max}(k,d)}{\sqrt{2\pi k\left(1-\frac{k}{N}\right)}}$ which simplifies to $\frac{1}{16\pi k\sqrt{d^3\left(1-\frac{k}{N}\right)}}$ and the factor $\Psi_{net}\left(k,n,N;d,\e\right) = \hbox{H}\left(\frac{k}{N}\right) + \frac{n}{N}\cdot\Psi\left(k,d,\e\right)$ is what is multiplied to $N$ in the exponent as claimed.

\subsubsection{Corollary \ref{cor:prob_expander_existence2}}  \label{sec:cor_prob_expander_exist2}

Corollary \ref{cor:prob_expander_existence1} has given us an upper bound on the probability $\hbox{Prob}\left(\|Ax\|_1 \leq (1-2\e)d\|x\|_1\right)$ in \eqref{eq:prob_expander_existence1}. In this bound the exponential dominates the polynomial. Consequently, in the limit as $(k,n,N) \rightarrow \infty$ while $k/n \rightarrow \r \in (0,1)$ and $n/N \rightarrow \d \in (0,1)$ this bound has a sharp transition at the zero level curve of $\Psi_{net}$.  For $\Psi_{net}\left(k,n,N;d,\e\right)$ strictly bounded above zero the overall bound grows exponentially in $N$ without limit, while for $\Psi_{net}\left(k,n,N;d,\e\right)$ strictly bounded below zero the overall bound decays to zero exponentially quickly.  We define $\r^{exp}(\d;d,\e)$ to satisfy $\Psi_{net}\left(k,n,N;d,\e\right)=0$ in \eqref{eq:rho_exp}, so that for any $\r$ strictly less than $\r^{exp}(\d;d,\e)$ the exponent will satisfy $\Psi_{net}\left(k,n,N;d,\e\right)<0$ and hence the bound decay to zero.

More precisely, for $k/n \rightarrow \r < (1-\g)\r^{exp}(\d;d,\e)$ with small $\g>0$, in this regime of $\r < (1-\g)\r^{exp}(\d;d,\e)$ we have $\hbox{Prob}\left(\|Ax\|_1 \leq (1-2\e)d\|x\|_1\right) \rightarrow 0$. Therefore, $\hbox{Prob}\left(\|Ax\|_1 \geq (1-2\e)d\|x\|_1\right) \rightarrow 1$ as the problem size grows such that $(k,n,N) \rightarrow \infty$, $n/N \rightarrow \d \in (0,1)$ and $k/n \rightarrow \r$.

\section{Appendix}\label{sec:appendix}

\subsection{Proof of Corollary \ref{cor:prob_expander_existenceBI}}\label{sec:prf:expander_existence}

The first part of this proof uses ideas from the proof of Proposition \ref{pro:para_opt} which is the same as Theorem 16 in \cite{berinde2009advances}. We consider a bipartite graph $G=(U,V,E)$ with $|U|=N$ left vertices, $|V|=n$ right vertices and left degree $d$. For a fixed $S\subset U$ where $|S|=s\leq k$, $G$ fails to be an expander on $S$ if $|\Gamma(S)|<(1-\e)ds$. This means that in a sequence of $ds$ vertex indices at least $\e ds$ of the these indices are in the collision set that is identical to some preceding value in the sequence.

Therefore, the probability that a neighbor chosen uniformly at random is to be in the collision set is at most $ds/n$ and, treating each event independently, then the probability that a set of $\e ds$ neighbors chosen at random are in the collision set is at most $\left(ds/n\right)^{\e ds}$. There are $\binom{ds}{\e ds}$ ways of choosing a set of $\e ds$ points from a set of $ds$ points and $\binom{N}{s}$ ways of choosing each set $S$ from $U$. This means therefore that the probability that $G$ fails to expand in at least one of the sets $S$ of fixed size $s$ can be bounded above by a union bound
\begin{multline}
\hbox{Prob}\left(G ~\mbox{fails to expand on} ~S\right) \\ \leq \binom{N}{s}\binom{ds}{\e ds}\left(\frac{ds}{n}\right)^{\e ds}.
\label{eq:prob_Gfails_S}
\end{multline}

We define $p_s$ to be the right hand side of \eqref{eq:prob_Gfails_S} and we use the right hand side of the Stirling's inequality \eqref{eq:stirling} to upper bound $p_s$ as thus
\begin{multline}
\label{eq:prob_Gfails_S2}
p_s < \frac{5}{4}\left[2\pi\frac{\e ds}{ds}\left(1-\frac{\e ds}{ds}\right)\e ds\right]^{-\frac{1}{2}} \exp\left[ds\mbox{H}\left(\frac{\e ds}{ds}\right)\right] \\ \times \frac{5}{4}\left[2\pi\frac{s}{N}(1-\frac{s}{N})N\right]^{-\frac{1}{2}} \\ \times \exp\left[N\mbox{H}\left(\frac{s}{N}\right)\right] \times \left(\frac{ds}{n}\right)^{\e ds}
\end{multline}
Writing the last multiplicand of \eqref{eq:prob_Gfails_S2} in exponential form and simplifying the expression gives

\begin{equation}
\label{eq:prob_Gfails_S3a}
 p_s < p_{max}(N,s;d,\e) \cdot \exp\left[N\cdot\Psi\left(s,n,N;d,\e\right)\right],
\end{equation}
where $\Psi\left(s,n,N;d,\e\right)$ is
\begin{equation}
\label{eq:psiBI}
\mbox{H}\left(\frac{s}{N}\right) + \frac{ds}{N}\mbox{H}\left(\e\right) + \frac{\e ds}{N} \log\left(\frac{ds}{n}\right),
\end{equation}
and $p_{max}(N,s;d,\e)$ is a polynomial in $N$ and $s$ for each $d$ and $\e$ fixed given by
\begin{equation}
\label{eq:pmaxBI}
\left(\frac{5}{4}\right)^2 \cdot \frac{1}{2\pi s} \cdot \left[\frac{N}{\e(1-\e)(N-s)d}\right]^{\frac{1}{2}}.
\end{equation}

Finally $G$ fails to be an expander if it fails to expand on at least one set $S$ of any size $s\leq k$. This means therefore that
\begin{equation}
\hbox{Prob}\left(G ~\mbox{fails to be an expander}\right) \leq \sum_{s=1}^k{p_s}.
\label{eq:prob_Gfails}
\end{equation}
From \eqref{eq:prob_Gfails_S3a} we have $\sum_{s=2}^k{p_s}$ bounded by
\begin{align}
\label{eq:prob_Gfails_2a}
 & \sum_{s=2}^k p_{max}(N,s;d,\e) \cdot \exp\left[N\cdot\Psi\left(s,n,N;d,\e\right)\right]\\
\label{eq:prob_Gfails_2b}
    & < p'_{max}(N,k;d,\e) \cdot \exp\left[N\cdot\Psi\left(k,n,N;d,\e\right)\right],
\end{align}
where $p'_{max}(N,k;d,\e) = k\cdot p_{max}(N,k;d,\e)$ and we achieved the bound from \eqref{eq:prob_Gfails_2a} to \eqref{eq:prob_Gfails_2b} by upper bounding the sum with the product of the largest term in the sum (which is when $s=k$ since $k<N/2$) and one plus the number of terms in the sum, giving $k$. Hence from \eqref{eq:prob_Gfails} and \eqref{eq:prob_Gfails_2b} we have
\begin{multline}
\hbox{Prob}\left(G ~\mbox{fails to be an expander}\right) < p'_{max}(N,k;d,\e) \\ \times  \exp\left[N\cdot\Psi\left(k,n,N;d,\e\right)\right].
\label{eq:prob_Gfails3}
\end{multline}

As the problem size, $(k,n,N)$, grows the exponential term will be driving the probability in \eqref{eq:prob_Gfails3}, hence having
\begin{equation}
\label{eq:psi_condition}
\Psi\left(k,n,N;d,\e\right)<0
\end{equation}
yields $\hbox{Prob}\left(G ~\mbox{fails to be an expander}\right) \rightarrow 0$ as the problem size $(k,n,N)\rightarrow \infty$.

Let $k/n \rightarrow \r\in(0,1)$ and $n/N \rightarrow \d\in(0,1)$ as $(k,n,N) \rightarrow \infty$ and we define  $\r^{exp}_{bi}\left(\d;d,\e\right)$ as the limiting value of $k/n$ that satisfies $\Psi\left(k,n,N;d,\e\right)=0$ for each fixed $\e$ and $d$ and all $\d$. Note that for fixed $\e,~d$ and $\d$ it is deducible from our analysis of $\psi_n(\cdot)$ in Section \ref{sec:tools} that $\Psi\left(k,n,N;d,\e\right)$ is a strictly monotonically increasing function of $k/n$. Therefore for any $\r<\r^{exp}_{bi}, ~\Psi\left(k,n,N;d,\e\right) < 0$ as $(k,n,N) \rightarrow \infty, ~\hbox{Prob}\left(G ~\mbox{fails to be an expander}\right) \rightarrow 0$ and $G$ becomes an expander with probability approaching one exponentially in $N$ which is the same as exponential growth in $n$ since $n\rightarrow N\r$.



\begin{IEEEbiographynophoto}{Bubacarr Bah} received the BSc degree in mathematics and physics from the University of The Gambia, and the MSc degree in mathematical modeling and scientific computing from the University of Oxford, Wolfson College. He received his PhD degree in applied and computational mathematics at the University of Edinburgh where his PhD degree supervisor was Jared Tanner. 

He is currently a POSTDOCTORAL FELLOW at EPFL working with Volkan Cevher. Previously he had been a GRADUATE ASSISTANT at the University of The Gambia (2004-2007). His research interests include random matrix theory with applications to compressed sensing and sparse approximation. 

Dr. Bah is a member of SIAM and has received the SIAM Best Student Paper Prize (2010).
\end{IEEEbiographynophoto}

\begin{IEEEbiographynophoto}{Jared ~Tanner}
received the B.S. degree in physics from the University of Utah, and the Ph.D. degree in Applied Mathematics from UCLA where his PhD degree adviser was Eitan Tadmor. 

He is currently the PROFESSOR of the mathematics of information at the University of Oxford and Fellow of Exeter College.  Previously he had the academic posts: PROFESSOR of the mathematics of information at the University of Edinburgh (2007-2012), ASSISTANT PROFESSOR at the University of Utah (2006-2007), National Science Foundation POSTDOCTORAL FELLOW in the mathematical sciences at Stanford University (2004-2006), and a VISITING ASSISTANT RESEARCH PROFESSOR at the University of California at Davis (2002-2004). His research interests include signal processing with emphasis in compressed sensing, sparse approximation, applied Fourier analysis, and spectral methods. 

Prof. Tanner has received the Philip Leverhulme Prize (2009), Sloan Research Fellow in Science and Technology (2007), Monroe Martin Prize (2005), and the Leslie Fox Prize (2003).
\end{IEEEbiographynophoto}

\end{document}